\documentclass[a4paper]{article}
\usepackage{a4wide}
\usepackage{graphicx,xcolor}
\usepackage{caption}
\usepackage{subcaption}
\usepackage{empheq}
\usepackage{amsthm}
\usepackage{amssymb, latexsym, mathrsfs}
\usepackage{amsmath}
\usepackage{dsfont}
\usepackage{epstopdf}
\usepackage{enumerate}
\usepackage{algorithm}
\usepackage{color}
\usepackage{transparent}
\usepackage{subfig}
\usepackage{url}
\usepackage[affil-it]{authblk}
\usepackage{booktabs}

\usepackage[american,cuteinductors,smartlabels]{circuitikz}
\usepackage{tikz}
\usepackage{schemabloc}
\usetikzlibrary{shapes,arrows}

\theoremstyle{plain}

\newtheorem{rmk}{Remark}
\newtheorem{prop}{Proposition}

\newtheorem{defn}{Definition}






\def\Rset{\mathbb{R}}

\def\Lset{\mathbb{L}}

\newcommand{\AAA}{{\mathcal A}}

\newcommand{\CC}{{\mathcal C}}

\newcommand{\EE}{{\mathcal E}}

\newcommand{\GG}{{\mathcal G}}

\newcommand{\KK}{{\mathcal K}}

\newcommand{\NN}{{\mathcal N}}

\newcommand{\T}{{\mathcal T}}
\newcommand{\UU}{{\mathcal U}}
\newcommand{\VV}{{\mathcal V}}







\newcommand{\ba}[1]{\begin{array}{#1}}
\newcommand{\ea}{\end{array}}

\begin{document}

\title{\LARGE \bf Kron reduction methods for plug-and-play control of ac islanded microgrids with arbitrary topology}

          \author[1]{Michele Tucci%
       \thanks{Electronic address:
         \texttt{michele.tucci02@universitadipavia.it}; Corresponding author}}
\author[1]{Alessandro Floriduz%
       \thanks{Electronic address: \texttt{alessandro.floriduz012@universitadipavia.it}} }       
 \author[2]{Stefano Riverso%
       \thanks{Electronic address: \texttt{RiversS@utrc.utc.com}}} 
\author[1]{Giancarlo Ferrari-Trecate%
       \thanks{Electronic address: \texttt{giancarlo.ferrari@unipv.it}} }


   \affil[1]{Dipartimento di Ingegneria Industriale e dell'Informazione\\Universit\`a degli Studi di Pavia}
     \affil[2]{United Technologies Research Center Ireland}  
     
\date{\textbf{Technical Report}\\ October, 2015}

     \maketitle

     \begin{abstract}
In this paper, we provide an extension of the scalable algorithm proposed in \cite{riverso2015plug} for the design of Plug-and-Play (PnP) controllers for AC Islanded microGrids (ImGs). The method in \cite{riverso2015plug}
assumes Distributed Generation Units (DGUs) are arranged in a load-connected topology, i.e. loads can appear only at the output terminals of inverters. For handling totally general interconnections of DGUs and loads, we describe an approach based on Kron Reduction (KR), a network reduction method giving an equivalent load-connected model of the original ImG. However, existing KR approaches can fail in preserving the structure of transfer functions representing transmission lines. To avoid this drawback, we introduce an approximate KR algorithm, still capable to represent exactly the asymptotic periodic behavior of electric signals even if they are unbalanced. 
Our results are backed up with simulations illustrating features of the new KR approach as well as its use for designing PnP controllers in a 21-bus ImG derived from an IEEE test feeder.

       \emph{Keywords}: Kron reduction, graph theory, islanded microgrid, plug-and-play, decentralized control, voltage and frequency control.
     \end{abstract}
     
     \newpage
     \section{Introduction}
          \label{sec:intro}
   
Kron Reduction (KR) is one of the most popular methods for simplifying linear electrical networks \cite{kron1939tensor} while preserving the behavior of electrical variables at target nodes. Widely used with phasor voltages and currents, KR assumes network nodes are classified either as internal or boundary nodes and it provides an algebraic procedure for computing: 
(i) the topology of a new network connecting boundary nodes only, (ii) the value of admittances related to new edges and (iii) equivalent currents supplied at boundary nodes  accounting for the effect of internal currents in the original network.    
Graph-theoretical properties of KR have been analyzed in \cite{van2010characterization} and \cite{dorfler2013kron} for DC resistive networks. A generalization to AC three-phase balanced circuits in Periodic Sinusoidal Steady State (PSSS), which will be termed AC-KR, can be found in \cite{caliskan2012kron}.
Recently, several studies focused on generalizations of KR preserving the value of electric boundary variables not only in the asymptotic regime, but also during transients \cite{van2010characterization, caliskan2012kron, caliskan2014towards, luo2014spatiotemporal, dhople2014synchronization}. In these \textit{instantaneous} KR procedures, line admittances are replaced by differential models and sufficient conditions guaranteeing well-posedness of the network reduction process have been studied.  

KR finds applications also to Islanded microGrids (ImGs), i.e. autonomous energy islands disconnected from the main grid and composed of the interconnection of DGUs and loads \cite{guerrero2013advanced}. For instance, in \cite{luo2014spatiotemporal} and \cite{schiffer2014conditions} KR is advocated as the key tool for mapping ImGs with a general topology into a standard topology (called \textit{load-connected}) where loads appear only at the terminals of inverters. This can be done labeling the nodes of loads not directly connected to inverter terminals as internal nodes and applying KR to eliminate them. As a consequence, KR can also impact on the design of decentralized control schemes where local controllers associated to DGUs must guarantee voltage and frequency stability of the whole ImG. Indeed, design methods for load-connected ImGs could be directly extended to any ImG by performing control design on the reduced network. However, decentralized control schemes for ImGs often rely on additional assumptions, such as specific line models. For instance, droop controllers are tailored to either mainly inductive or mainly resistive lines \cite{guerrero2007decentralized}, while Plug-and-Play (PnP) controllers in \cite{riverso2015plug} assume RL lines.
This could be a problem as instantaneous KR does not guarantee reduced line models will have the same structure of the original ones, e.g. simple RL lines could result in reduced lines with a more complex dynamics \cite{dhople2014synchronization}. 
Even though this does not happen in specific line cases 
\cite{van2010characterization,caliskan2012kron}, in general, one faces the problem of devising \textit{approximate} instantaneous KR methods for preserving selected features of line models.

In this paper we focus on adapting KR to the PnP design algorithm in \cite{riverso2015plug} devoted to the synthesis of decentralized controllers for AC ImGs. The main advantage of PnP control is that the computation of a local regulator for a DGU does not require a global ImG model but only parameters of lines connected to that DGU.
Furthermore, one can test if the addition/removal of a DGU can spoil stability of the whole ImG before performing the operation and by solving a local optimization problem. The method in \cite{riverso2015plug} assumes load-connected ImGs and RL lines.
In order to preserve the structure of transfer functions describing lines, we propose an approximate instantaneous KR procedure that we term \emph{hybrid KR (hKR)}, as it combines AC-KR with dynamic line models. We show that hKR reproduces exactly the behavior of voltages and currents at boundary nodes in PSSS, even in the case of unbalanced phases. 
The design of PnP controllers based on hKR is tested on a 21-bus ImG derived from the IEEE test feeder in \cite{feeders2011ieee}, enhanced with switches yielding changes of line topology and plug-in/out of DGUs. Simulations performed in PSCAD confirm the applicability of our approach.

The models of DGUs and lines are introduced in Section \ref{sec:ImGmodel}. Section \ref{sec:MethodsforKR} presents the new hKR approach together with simulations showing its features. Results in \cite{riverso2015plug} on PnP control design are briefly reviewed in Section \ref{sec:PnPctrl}. Simulations illustrating the joint use of hKR and PnP design are given in Section \ref{sec:Sim21bn}.

\textbf{Notation and basic definitions.} We use $f^{abc}(t)=[f_a(t),f_b(t),f_c(t)]^{T}\in \Rset^{3}$ for denoting three-phase signals in the $abc$ frame. 
A three-phase signal $f^{abc}(t)$ is balanced if the analytic signals \cite{bracewell1965fourier} $F_a(t)\exp{\left(\mathrm i \delta_a(t)\right)}$, $F_b(t)\exp{\left(\mathrm i \delta_b(t)\right)}$, $F_c(t)\exp{\left(\mathrm i \delta_c(t)\right)}$ associated  to $f_a(t)$, $f_b(t)$, $f_c(t)$, respectively, have the same envelope (i.e. $F_a(t)=F_b(t)=F_c(t)$) and their instantaneous phases are such that $\delta_b(t)=\delta_a(t)-2\pi/3=\delta_c(t)+2\pi/3$ or $\delta_b(t)=\delta_a(t)+2\pi/3=\delta_c(t)-2\pi/3$; in the first case one has a positive-sequence balanced signal, in the latter a negative-sequence balanced signal \cite{cablea2014method}. A three-phase signal is unbalanced if it is not balanced. A three-phase network element is balanced if it drains balanced currents when supplied by balanced voltages, with currents of the same sequence (positive or negative) of voltages; otherwise, it is said to be unbalanced \cite{schiffer15_mod}.
To $f^{abc}(t)$, we associate its representation in the  $dq0$ reference frame (i.e. $f^{dq0}(t)$). It is obtained from $f^{abc}(t)$ through the Park transformation \cite{park1929two}, denoted with $T(\theta(t))$, $\theta(t) = \omega_{0}t$, $\omega_0$ being the nominal network frequency. Since in the present work only signals without zero-sequence component (i.e. $f^0(t)=0$) are used, we can associate to $f^{abc}(t)$ its complex $dq$-representation $f^{dq}(t)=f^d(t)+\mathrm i \, f^q(t)$ without loss of generality.

When clear from the context, we omit time dependence of electrical quantities. 
$\mathscr L[\cdot]$ is the Laplace-transform operator.
The cardinality of the finite set $S$ will be denoted with $|S|$. According to the definitions in \cite{bollobas1998modern}, a weighted graph $\mathcal G = (\mathcal V, \mathcal E, W)$ is given by a finite set of nodes $\mathcal V=\{1,\dots, n\}$, a set of edges $\mathcal E \subseteq  \mathcal V \times \mathcal V$ and a diagonal matrix $W$ of dimension $|\EE|\times|\EE|$, collecting, on its diagonal, weights $W_{e}$, $e\in\EE$. In this work, weights can be real numbers, complex numbers or SISO transfer functions (in this case, we replace $W$ with $W(s)$). An edge $e\in\EE$ is a self-loop if $e = (i,i)$, for some $i\in\VV$. All graphs in this work will not contain self-loops. The order of a graph is $n=|\VV|$.
A graph is undirected if $(x,y) \in \mathcal E \implies (y,x) \in \mathcal E$. In this case, the pairs $(x,y)$ and $(y,x)$ are considered as identical and unordered. Otherwise, the graph is said to be directed. The set of neighbors of node $i\in\VV$ is $\NN_i=\{j:$ $(i,j)\in\EE$ or $(j,i)\in\EE\}$. A graph is said to be connected if there is a path from any vertex to every other vertex of the graph. All graphs considered in this work are connected.

The incidence matrix of $\GG$ is denoted with $B\in \mathbb{R}^{|\VV|\times|\EE|}$ \cite{bollobas1998modern}. The Laplacian of $\GG$ is the matrix $\Lset = BWB^{T}$. 

     \section{Microgrid model}
     \label{sec:ImGmodel}
\subsection{ImG associated graph}
\label{sec:ImGgraph}
Consider an ImG composed of DGUs and loads, connected through electrical lines. The line network has an arbitrary topology, which we assume to be connected. Without loss of generality, one can represent each ImG as a weighted connected directed graph $\GG = (\VV,\EE,W(s))$ (see Figure \ref{fig:ex_ImG_graphmodel}, for example) where the vertex set $\VV$ is partitioned into a set of boundary nodes $\VV_{b}$ and a set of internal nodes $\VV_{\ell}$. Subset $\VV_b$ identifies Point of Common Coupling (PCC) nodes, i.e. the terminals of each DGU, while $\VV_{\ell}$ contains load nodes, i.e. loads that are not directly connected to PCCs. Similarly, we denote the nodal currents injected by the DGUs as $I_b$; the nodal currents injected by loads are $I_{\ell}$. Nodal currents are positive if entering into nodes. Nodal voltages are partitioned analogously. 

Each edge corresponds to an electric line and the orientation of the edges $ e_1, ..., e_{|\mathcal E|}\in \EE $ is arbitrary. We adopt the following sign convention: reference directions of line currents coincide with edges orientations and voltages $V_{e}$, $e = (i,j)\in\EE$ are defined as $V_i-V_j$. Finally, the weight of every edge is given by its admittance transfer function $W_{e} (s)$ accounting for the dynamics of line $e\in\EE$.
\begin{figure}
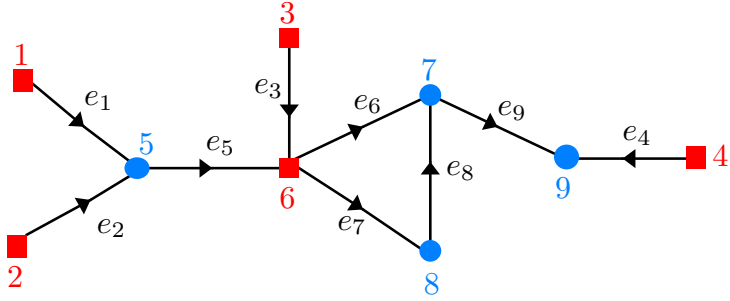

		\centering
		\scalebox{1.2}{\begin{pgfpicture}{0cm}{0cm}{244pt}{118pt}
\pgfsetxvec{\pgfpoint{1pt}{0pt}}
\pgfsetyvec{\pgfpoint{0pt}{1pt}}
\pgfsetroundjoin 
\pgfsetroundcap
\pgftranslateto{\pgfxy(0,118)}
\begin{pgfmagnify}{1}{-1}
\definecolor{layer0}{rgb}{0.0,0.0,0.0}
\definecolor{layer1}{rgb}{0.0,0.0,0.5}
\definecolor{layer2}{rgb}{1.0,0.0,0.0}
\definecolor{layer3}{rgb}{0.0,0.5,0.5}
\definecolor{layer4}{rgb}{1.0,0.78,0.0}
\definecolor{layer5}{rgb}{0.5,1.0,0.0}
\definecolor{layer6}{rgb}{0.0,1.0,1.0}
\definecolor{layer7}{rgb}{0.0,0.5,0.0}
\definecolor{layer8}{rgb}{0.6,0.8,0.2}
\definecolor{layer9}{rgb}{1.0,0.08,0.58}
\definecolor{layer10}{rgb}{0.71,0.61,0.05}
\definecolor{layer11}{rgb}{0.0,0.5,1.0}
\definecolor{layer12}{rgb}{0.88,0.88,0.88}
\definecolor{layer13}{rgb}{0.64,0.64,0.64}
\definecolor{layer14}{rgb}{0.37,0.37,0.37}
\definecolor{layer15}{rgb}{0.0,0.0,0.0}
\color{layer0}
\pgfsetlinewidth{0.815pt}
\pgfline{\pgfxy(11.0,29.0)}{\pgfxy(30.0,45.0)}
\pgfmoveto{\pgfxy(30.0,45.0)}
\pgflineto{\pgfxy(25.007947970183455,44.71819061122003)}
\pgflineto{\pgfxy(28.872762444880134,40.12872342251773)}
\pgfclosepath 
\pgffill 
\pgfline{\pgfxy(30.0,45.0)}{\pgfxy(47.0,58.0)}
\pgfline{\pgfxy(47.0,58.0)}{\pgfxy(70.0,58.0)}
\pgfmoveto{\pgfxy(70.0,58.0)}
\pgflineto{\pgfxy(66.0,61.0)}
\pgflineto{\pgfxy(66.0,55.0)}
\pgfclosepath 
\pgffill 
\pgfline{\pgfxy(70.0,58.0)}{\pgfxy(92.0,58.0)}
\pgfline{\pgfxy(94.0,17.0)}{\pgfxy(94.0,42.0)}
\pgfmoveto{\pgfxy(94.0,42.0)}
\pgflineto{\pgfxy(91.0,38.0)}
\pgflineto{\pgfxy(97.0,38.0)}
\pgfclosepath 
\pgffill 
\pgfline{\pgfxy(94.0,42.0)}{\pgfxy(94.0,56.0)}
\pgfline{\pgfxy(94.0,56.0)}{\pgfxy(117.0,71.0)}
\pgfmoveto{\pgfxy(117.0,71.0)}
\pgflineto{\pgfxy(112.01075427100477,71.32776066832815)}
\pgflineto{\pgfxy(115.2883609542863,66.30209708729646)}
\pgfclosepath 
\pgffill 
\pgfline{\pgfxy(117.0,71.0)}{\pgfxy(136.0,84.0)}
\pgfline{\pgfxy(12.0,79.0)}{\pgfxy(32.0,68.0)}
\pgfmoveto{\pgfxy(32.0,68.0)}
\pgflineto{\pgfxy(29.94089261460988,72.55632272511856)}
\pgflineto{\pgfxy(27.049380115976945,67.29902727305868)}
\pgfclosepath 
\pgffill 
\pgfline{\pgfxy(32.0,68.0)}{\pgfxy(45.0,60.0)}
\pgfline{\pgfxy(94.0,56.0)}{\pgfxy(117.0,45.0)}
\pgfmoveto{\pgfxy(117.0,45.0)}
\pgflineto{\pgfxy(114.68582960536943,49.43222465412296)}
\pgflineto{\pgfxy(112.0970966215454,44.019419324309084)}
\pgfclosepath 
\pgffill 
\pgfline{\pgfxy(117.0,45.0)}{\pgfxy(138.0,35.0)}
\pgfline{\pgfxy(138.0,35.0)}{\pgfxy(138.0,56.0)}
\pgfline{\pgfxy(138.0,56.0)}{\pgfxy(138.0,84.0)}
\pgfmoveto{\pgfxy(138.0,56.0)}
\pgflineto{\pgfxy(141.0,60.0)}
\pgflineto{\pgfxy(135.0,60.0)}
\pgfclosepath 
\pgffill 
\pgfline{\pgfxy(138.0,35.0)}{\pgfxy(159.0,45.0)}
\pgfmoveto{\pgfxy(159.0,45.0)}
\pgflineto{\pgfxy(154.09875718352725,45.9888472349024)}
\pgflineto{\pgfxy(156.67835866588132,40.57168412195882)}
\pgfclosepath 
\pgffill 
\pgfline{\pgfxy(159.0,45.0)}{\pgfxy(180.0,55.0)}
\pgfline{\pgfxy(219.0,55.0)}{\pgfxy(198.0,55.0)}
\pgfmoveto{\pgfxy(198.0,55.0)}
\pgflineto{\pgfxy(202.0,52.0)}
\pgflineto{\pgfxy(202.0,58.0)}
\pgfclosepath 
\pgffill 
\pgfline{\pgfxy(198.0,55.0)}{\pgfxy(180.0,55.0)}
\begin{pgfmagnify}{1}{-1}
\pgfputat{\pgfxy(30,-34)}{\pgfbox[left,top]{$e_1$}}
\end{pgfmagnify}
\begin{pgfmagnify}{1}{-1}
\pgfputat{\pgfxy(34,-74)}{\pgfbox[left,top]{$e_2$}}
\end{pgfmagnify}
\begin{pgfmagnify}{1}{-1}
\pgfputat{\pgfxy(83,-30)}{\pgfbox[left,top]{$e_3$}}
\end{pgfmagnify}
\begin{pgfmagnify}{1}{-1}
\pgfputat{\pgfxy(198,-45)}{\pgfbox[left,top]{$e_4$}}
\end{pgfmagnify}
\begin{pgfmagnify}{1}{-1}
\pgfputat{\pgfxy(68,-48)}{\pgfbox[left,top]{$e_5$}}
\end{pgfmagnify}
\begin{pgfmagnify}{1}{-1}
\pgfputat{\pgfxy(114,-35)}{\pgfbox[left,top]{$e_6$}}
\end{pgfmagnify}
\begin{pgfmagnify}{1}{-1}
\pgfputat{\pgfxy(109,-73)}{\pgfbox[left,top]{$e_7$}}
\end{pgfmagnify}
\begin{pgfmagnify}{1}{-1}
\pgfputat{\pgfxy(143,-56)}{\pgfbox[left,top]{$e_8$}}
\end{pgfmagnify}
\begin{pgfmagnify}{1}{-1}
\pgfputat{\pgfxy(159,-37)}{\pgfbox[left,top]{$e_9$}}
\end{pgfmagnify}
\color{layer2}
\begin{pgfmagnify}{1}{-1}
\pgfputat{\pgfxy(8,-19)}{\pgfbox[left,top]{$1$}}
\end{pgfmagnify}
\begin{pgfmagnify}{1}{-1}
\pgfputat{\pgfxy(6,-89)}{\pgfbox[left,top]{$2$}}
\end{pgfmagnify}
\begin{pgfmagnify}{1}{-1}
\pgfputat{\pgfxy(91,-6)}{\pgfbox[left,top]{$3$}}
\end{pgfmagnify}
\pgfmoveto{\pgfxy(91,14)}
\pgflineto{\pgfxy(97,14)}
\pgflineto{\pgfxy(97,20)}
\pgflineto{\pgfxy(91,20)}
\pgfclosepath 
\pgffill 
\pgfmoveto{\pgfxy(8,27)}
\pgflineto{\pgfxy(14,27)}
\pgflineto{\pgfxy(14,34)}
\pgflineto{\pgfxy(8,34)}
\pgfclosepath 
\pgffill 
\pgfmoveto{\pgfxy(218,51)}
\pgflineto{\pgfxy(224,51)}
\pgflineto{\pgfxy(224,58)}
\pgflineto{\pgfxy(218,58)}
\pgfclosepath 
\pgffill 
\pgfmoveto{\pgfxy(6,79)}
\pgflineto{\pgfxy(12,79)}
\pgflineto{\pgfxy(12,86)}
\pgflineto{\pgfxy(6,86)}
\pgfclosepath 
\pgffill 
\begin{pgfmagnify}{1}{-1}
\pgfputat{\pgfxy(226,-51)}{\pgfbox[left,top]{$4$}}
\end{pgfmagnify}
\begin{pgfmagnify}{1}{-1}
\pgfputat{\pgfxy(91,-64)}{\pgfbox[left,top]{$6$}}
\end{pgfmagnify}
\pgfmoveto{\pgfxy(91,55)}
\pgflineto{\pgfxy(97,55)}
\pgflineto{\pgfxy(97,61)}
\pgflineto{\pgfxy(91,61)}
\pgfclosepath 
\pgffill 
\color{layer11}
\begin{pgfmagnify}{1}{-1}
\pgfputat{\pgfxy(136,-91)}{\pgfbox[left,top]{$8$}}
\end{pgfmagnify}
\begin{pgfmagnify}{1}{-1}
\pgfputat{\pgfxy(177,-61)}{\pgfbox[left,top]{$9$}}
\end{pgfmagnify}
\pgfellipse[fillstroke]{\pgfxy(138.0,35.0)}{\pgfxy(3.0,0)}{\pgfxy(0,3.0)}
\pgfellipse[fillstroke]{\pgfxy(138.0,84.0)}{\pgfxy(3.0,0)}{\pgfxy(0,3.0)}
\pgfellipse[fillstroke]{\pgfxy(180.5,54.5)}{\pgfxy(3.5,0)}{\pgfxy(0,3.5)}
\pgfellipse[fillstroke]{\pgfxy(46.5,58.0)}{\pgfxy(3.5,0)}{\pgfxy(0,3.0)}
\begin{pgfmagnify}{1}{-1}
\pgfputat{\pgfxy(47,-47)}{\pgfbox[left,top]{$5$}}
\end{pgfmagnify}
\begin{pgfmagnify}{1}{-1}
\pgfputat{\pgfxy(135,-24)}{\pgfbox[left,top]{$7$}}
\end{pgfmagnify}
\end{pgfmagnify}
\end{pgfpicture}}
		\caption{Example of a graph representing the line network of a microgrid. Red squares denote boundary nodes (i.e. DGUs with corresponding local loads $I_{Li}$, if any), while blue circles represent internal nodes (i.e. loads).}
		\label{fig:ex_ImG_graphmodel}
              \end{figure}
\subsection{DGU and line electrical models}
     We assume three-phase electrical signals without zero-sequence components and balanced network parameters. 
Note that we do not assume balanced signals, hence including in our framework the case of unbalanced load currents. As in \cite{babazadeh2013robust}, the electrical scheme of DGU $i \in \mathcal V_b$ is represented in Figure \ref{fig:schema1DGU} and its model in \emph{dq} coordinates is:
\begin{equation}
    	\label{eq:DGUeqx}
    	\quad\left\lbrace
    	\begin{aligned}
    	\frac{\mathrm d}{\mathrm dt} V_i^{dq}&=-\mathrm i \omega_0 V_i^{dq}+ \frac{I_{ti}^{dq}}{C_{ti}}-\frac{I_{Li}^{dq}}{C_{ti}}-\frac{1}{C_{ti}}I_{bi}^{dq}\\
    	\frac{\mathrm d}{\mathrm dt}I_{ti}^{dq} &=-\left( \frac{R_{ti}}{L_{ti}}+\mathrm i \omega_0 \right) I_{ti}^{dq} - \frac{V_i^{dq}}{L_{ti}} + \frac{V_{ti}^{dq}}{L_{ti}} 
    	\end{aligned}
    	\right.
    \end{equation}
    where $V_{i}$, $I_{ti}$, $I_{bi}$, $I_{Li}$, $V_{ti}$, $R_{ti}$, $C_{ti}$ and $L_{ti}$ are shown in Figure \ref{fig:schema1DGU}. 
    \begin{rmk}
    	We highlight that each DGU might present a local load current $I_{Li}$ connected to its PCC and we will treat it as an exogenous disturbance in control design. However, local load currents $I_L$ are different from load currents $I_{\ell}$ defined in Section \ref{sec:ImGgraph}. The latter ones, in fact, represent the effect of loads that are not directly connected to the terminals of DGUs. \hspace{70mm} $\blacksquare$
    \end{rmk} 

In \emph{dq} reference frame, the three-phase RL line associated to the edge $(i,j)\in \mathcal E$ has the dynamics:
    \begin{equation*}
    \frac{\mathrm d}{\mathrm dt}I_{ij}^{dq}=-\left( \frac{R_{ij}}{L_{ij}}+\mathrm i \omega_0 \right) I_{ij}^{dq} + \frac{1}{L_{ij}}\left( V_i^{dq}-V_j^{dq}\right) 
    \end{equation*}
    where $I_{ij}$ is shown in Figure \ref{fig:schema1DGU}. Equivalently, in term of transfer functions, one has $I_{ij}^{dq}(s)=W_{ij}(s)\left( V_i^{dq}(s)-V_j^{dq}(s) \right)$ with:
    \begin{equation}
    W_{ij}(s)=\frac{1}{Z_{ij}+L_{ij}s}, \qquad Z_{ij}=R_{ij}+\mathrm i \omega_0 L_{ij}.\quad
    \label{ConstitutiveRel}
    \end{equation}
Consider an ImG composed of $n$ nodes partitioned into $n_b$ boundary nodes and $n_{\ell} = n-n_b$ internal nodes. If $n=n_b$ and $n_{\ell}=0$, then we term the microgrid ``load-connected'' as the current loads appear only at PCC of DGUs. Otherwise, let us set: $I^{dq}=\left[I_{b}^{dq^T}, I_{\ell}^{dq^T}\right]^T$, where $I_b^{dq}=\left[I_{b_1}^{dq}, ..., I_{b_{n_b}}^{dq} \right]^T $ and $I_{\ell}^{dq}=\left[ I_{\ell_1}^{dq}, ..., I_{\ell_{n_{\ell}}}^{dq} \right]^T$. If $b_k =i$, the boundary current $I_{b_k}^{dq}$ is the current $I_{bi}^{dq}$ injected by DGU $i$. Nodal voltages $V^{dq}=\left[ V_b^{dq^T}, V_{\ell}^{dq^T}\right]^T$
are partitioned analogously. In order to account for the network interconnections, by applying KCL and KVL laws one obtains \cite{dhople2014synchronization}:
    \begin{equation}
    \begin{pmatrix}
    I_b^{dq}(s)\\ I_{\ell}^{dq}(s)
    \end{pmatrix}=
    \begin{pmatrix}
    \Lset_{bb}(s) & \Lset_{b\ell}(s)\\ \Lset_{\ell b}(s) & \Lset_{\ell \ell}(s)
    \end{pmatrix}
    \begin{pmatrix}
    V_{b}^{dq}(s)\\ V_{\ell}^{dq}(s)
    \end{pmatrix}
    \label{I=LV}
    \end{equation}
    that is $I^{dq}(s)=\Lset(s) V^{dq}(s)$ where $\Lset(s)$ is the graph Laplacian of the graph $\GG$ with weights $W_{ij}(s)$. By construction, one has $\Lset_{ij}(s)=-W_{ij}(s)$ if $(i,j)\in \mathcal E$ and $\Lset_{ii}(s)=-\sum_{j\in \mathcal N_i}{\Lset_{ij}(s)}$. 
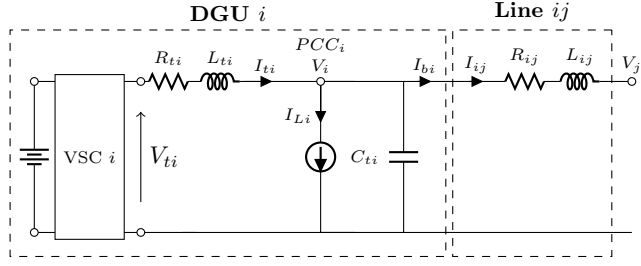
\begin{figure}[htb]
            \centering
            \ctikzset{bipoles/length=0.7cm}
\begin{circuitikz}[scale=.91,transform shape, color=black]
\ctikzset{current/distance=1}


\draw (0,0) node[ocirc](Aibattery) {}
node[ocirc] (Bibattery) at ([xshift=-0cm,yshift=-2.2cm]Aibattery) {}
(Bibattery) to [battery] (Aibattery) {}
node [rectangle,draw,minimum width=1cm,minimum height=2.4cm] (vsci) at ($0.5*(Aibattery)+0.5*(Bibattery)+(0.8,0)$) {\scriptsize{VSC $i$}}
(Aibattery) to [short] ([xshift=0.3cm]Aibattery)
 (Bibattery) to [short] ([xshift=0.3cm]Bibattery)


node[ocirc] (Ai) at ($(Aibattery)+(1.54,0)$) {}
node[ocirc] (Bi) at ($(Bibattery)+(1.54,-0.0)$) {}
(Ai) to [short] ([xshift=-0.24cm]Ai)
(Bi) to [short] ([xshift=-0.24cm]Bi)
(Ai) to [R, l=\scriptsize{$R_{ti}$}] ($(Ai)+(0.8,0)$) {}
to [L, l=\scriptsize{$L_{ti}$}]($(Ai)+(1.5,0)$){}
to [short,i=\scriptsize{$I_{ti}$}]($(Ai)+(1.8,0)$){}
(Bi) to [short] ($(Bi)+(0.8,0)$);
\begin{scope}[shorten >= 10pt,shorten <= 10pt,]
\draw[<-] (Ai) -- node[right] {$V_{ti}$} (Bi);
\end{scope};
 \draw
 ($(Ai)+(2.6,0.3)$) node[anchor=south]{\scriptsize{$PCC_i$}}
 ($(Ai)+(2.6,0)$) node[anchor=south]{{\scriptsize{$V_i$}}}
($(Ai)+(2.6,0)$$) node[ocirc](PCCi)(B1){}

 ($(B1)+(0,0)$) to [I]($(B1)+(0,-2.2)$)--($(B1)+(0.5,-2.2)$)
($(B1)+(0,-0.5)$) to [short,i<=\scriptsize{$I_{Li}$}] (B1)--($(B1)+(0.5,0)$)
($(B1)+(1.2,-2.2)$) to [C, l=\scriptsize{$C_{ti}$}] ($(B1)+(1.2,0)$)

($(B1)+(-0.8,0)$) to [short] ($(B1)+(2.6,0)$)
($(B1)+(-2,-2.2)$) to [short] ($(B1)+(4.5,-2.2)$)
($(B1)+(1.4,0)$) to [short,i^>=\scriptsize{$I_{bi}$}] ($(B1)+(1.5,0)$)
($(B1)+(2.15,0)$) to [short,i^>=\scriptsize{$I_{ij}$}] ($(B1)+(2.25,0)$)
($(B1)+(2.6,0)$) to [R, l=\scriptsize{$R_{ij}$}] ($(B1)+(3.3,0)$) 
($(B1)+(3.3,0)$)  to [L, l=\scriptsize{$L_{ij}$}]($(B1)+(4.2,0)$)
($(B1)+(4.2,0)$) to [short] ($(B1)+(4.5,0)$)


node [rectangle,draw,minimum width=6.3 cm,minimum
height=3.3cm,dashed,label=\small\textbf{DGU $i$}] (DGUi) at
($0.5*(Aibattery)+0.5*(Bibattery)+(2.8,0.2)$) {}
node [rectangle,draw,minimum width=2.3cm,minimum height=3.3cm,dashed,label=\small\textbf{Line $ij$}] (Lineij) at ($0.5*(DGUi.center)+(5.8,-0.45)$){}

($(B1)+(4.5,0)$$) node[ocirc](PCCj)(B2){}
($(B2)$) node[anchor=south]{{\scriptsize{$V_j$}}}

;\end{circuitikz}
            \caption{Equivalent single-phase electrical scheme of DGU $i$ composed of a Voltage Source Converter (VSC), an RLC filter, a local load $I_{Li}$ and a single RL line connecting DGU $i$ with the node $j\in\VV$.}
            \label{fig:schema1DGU}
          \end{figure}

     \section{Methods for KR}
     \label{sec:MethodsforKR}
     KR \cite{kron1939tensor} is a reduction method for eliminating internal nodes in an electrical network while preserving relevant features of voltages and currents at boundary nodes. In our setup, internal nodes correspond to load nodes. 

We first summarize two existing approaches (\emph{instantaneous} KR and AC-KR). Then, we present hKR in Section \ref{subsec:HybridKR} and illustrate its performance through simulations in Section \ref{subsec:ne}.

Assuming $\Lset_{\ell \ell}(s)$ in \eqref{I=LV} is invertible\footnote{Conditions for the invertibility of $\Lset_{\ell\ell}(s)$ have been studied in \cite{dhople2014synchronization}.} for some $s \in \mathbb C$, we have \cite{dhople2014synchronization}:
     \begin{subequations}
     	\label{KRdef}
     	\begin{empheq}{align}
     	\label{KRdef1}
     	I_b^{dq}(s)=\KK\left(\Lset(s) \right) V_b^{dq}(s)- \mathcal T (s)I_{\ell}^{dq}(s)
     	\end{empheq}
     	\begin{empheq}{align}
     	\label{KRdef2}
     	\KK\left(\Lset(s) \right) = \Lset_{bb}(s) -\Lset_{b\ell}(s)\Lset_{\ell \ell}^{-1}(s)\Lset_{\ell b}(s)
     	\end{empheq}
     	\begin{empheq}{align}
     	\label{KRdef3}
     	\T(s)=-\Lset_{b\ell}(s)\Lset_{\ell \ell}^{-1}(s).
     	\end{empheq}
     \end{subequations}
     We term $\KK(\cdot)$ the \textit{KR operator} and $\T(s)$ the \textit{accompanying matrix} of $\Lset(s)$. From \eqref{KRdef1},  $\mathcal T (s)$ provides an equivalent vector of currents 
\begin{equation}
\label{eq:I_tilde}
\tilde I_{b}^{dq}(s)= \T(s)I_{\ell}^{dq}(s)
\end{equation}
to be injected in the boundary nodes. 

The matrix $\Lset_{red}(s)=\KK\left(\Lset(s) \right)$ is still the Laplacian of a directed graph $\GG_{red}=\left( \VV_b, \EE_{red}, W_{red}(s)\right) $ \cite{dhople2014synchronization} that is uniquely defined (up to the orientation of edges, which can be arbitrarily chosen) and that has weights $W_{red,ij}(s)=-\left( \Lset_{red}(s)\right)_{ij}$ for $(i,j)\in\EE_{red}$. $\GG_{red}$ is called the \textit{Kron reduced} graph. 

Note that, for given loads $I_{\ell}^{dq}(t)$ and voltages $V^{dq}_b(t)$, $t\geq0$, if initial states of DGUs and lines are zero, then currents $I_b^{dq}(t)$ computed through \eqref{I=LV} and \eqref{KRdef1} are identical at all times. For this reason, \eqref{KRdef1}, along with the graph $\GG_{red}$, can be termed \emph{instantaneous} KR \cite{caliskan2012kron,dhople2014synchronization}.
     
\begin{rmk}
A key issue is to understand when weights $W_{red,ij}(s)$ can be written as in \eqref{ConstitutiveRel} replacing $R_{ij}$ and $L_{ij}$ with suitable parameters $\tilde R_{ij}$ and $\tilde L_{ij}$. It has been shown that this is guaranteed only under special assumptions, for instance if original lines are homogeneous, i.e. $\frac{R_{e_1}}{L_{e_1}}=\frac{R_{e_2}}{L_{e_2}}$, $\forall e_1, e_2 \in \EE$ \cite{caliskan2012kron}. In this case, one also has $\tilde R_{ij}>0$ and $\tilde L_{ij}>0$.
\hspace{68mm}$\blacksquare$\end{rmk}

Next, we introduce AC-KR. Since the three-phase RL line network is balanced and no zero-sequence signals are present, it can be split into three independent and identical single-phase circuits. Each one is associated with the ``single-phase'' directed graph $\GG^{sp}=(\VV, \EE, W^{sp}(s))$ where transfer functions $W_{ij}^{sp}(s)=1/(R_{ij}+sL_{ij})$ represent, independently of the phase $\star=\{a,b,c\}$, the relation between $\mathscr L[V_i^{\star}(t)-V_j^{\star}(t)]$ and $\mathscr  L[I_{ij}^{\star}(t)]$. When the network is in PSSS with frequency $\omega_0$, then $V_i^{\star}(t)=A_i^{\star}\sin(\omega_0t+\phi^{\star}_i)$ and $I_i^{\star}(t)=B_i^{\star}\sin(\omega_0t+\gamma^{\star}_i)$, $\forall i \in \VV$. Moreover, we can associate each sinusoid $V_i^{\star}(t)$ to the corresponding phasor $\vec{V}_i^{\star}=A_i^{\star}\mathrm \exp{( \mathrm i \phi_i^{\star})}$. Current phasors $\vec{I}_i^{\star}$ are defined analogously. Let us define vectors $\vec{V}^{\star}=[\vec{V}_1^{\star},\vec{V}^{\star}_2, ..., \vec{V}^{\star}_n ] ^{T}$ and $\vec{I}^{\star}=[\vec{I}_1^{\star},\vec{I}^{\star}_2, ..., \vec{I}^{\star}_n ]^{T}$, $\star \in \lbrace a,b,c \rbrace$. In PSSS with frequency $\omega_0$, the relation between nodal currents and nodal voltages, for each phase, is given by: $\vec{I}^{\star}=\Lset^{AC}\cdot \vec{V}^{\star}$, $\star \in \lbrace a,b,c \rbrace$, where $\Lset^{AC}_{ij}=-1/Z_{ij}$, if $(i,j)\in\EE$, and $\Lset^{AC}_{ii}=\sum_{j\in\NN_i}1/Z_{ij}$. In particular, $\Lset^{AC}=\Lset(0)$ by construction. 

\begin{defn}
\label{def:AC-KR}
Let  $\vec{V}^{\star}$, $\vec{I}^{\star}$ and $\Lset^{AC}$ be partitioned into boundary and internal components as in \eqref{I=LV} and assume $\Lset_{\ell\ell}^{AC}$ is invertible\footnote{Conditions for the invertibility of $\Lset_{\ell\ell}^{AC}$ have been studied in \cite{luo2014spatiotemporal}.}. AC-KR is given by the graph $\GG_{red}^{AC}=\left( \VV_b, \EE_{red}, W_{red}^{AC} \right)$ associated to $\KK(\Lset^{AC})$ (up to the orientation of edges, which can be arbitrarily chosen) and
\begin{equation}
\label{eq:Ib_fas}
\vec{I}^{\star}_b = \KK(\Lset^{AC}) \vec{V}^{\star}_b-\T^{AC}\vec{I}^{\star}_{\ell} 
\end{equation}
\begin{equation}
\label{eq:tau_AC}
\T^{AC} = -\Lset_{b\ell}^{AC}(\Lset^{AC}_{\ell\ell})^{-1}.
\end{equation}
\end{defn}
We highlight that \eqref{eq:tau_AC} corresponds to \eqref{KRdef3} when $\Lset(s)$ is replaced by $\Lset^{AC}$. Furthermore, \eqref{eq:Ib_fas} provides a relation analogous to \eqref{KRdef1}.

      \subsection{Hybrid KR}
\label{subsec:HybridKR}
     We extend the application of AC-KR to three-phase electrical variables not necessarily in PSSS.
     \begin{defn}
     	The approximate Kron reduced graph $\GG_{red}^{\AAA}$ of the original network $\GG$ is obtained by:
     	\begin{enumerate}[1)]
     		\item computing $\Lset_{red}^{AC}=\KK\left( \Lset^{AC}\right)$ and the associated directed graph $\GG_{red}^{AC}=\left( \VV_b, \EE_{red}, W_{red}^{AC} \right)$;
     		\item setting $\GG_{red}^{\AAA}=\left( \VV_b, \EE_{red}, W_{red}^{\AAA}(s) \right)$ where, for $(i,j)\in\EE_{red}$
     		\begin{equation}
\label{eq:Wred_hKR}
     		W_{red,ij}^{\AAA}(s) =\frac{1}{\tilde Z_ {ij}+\tilde L_{ij}s}, 
     		\end{equation}
     	\begin{equation}
\label{eq:star}
     		\tilde Z_{ij}=-\left(\Lset_{red}^{AC}\right)_{ij}^{-1}, \tilde L_{ij}=\frac{1}{\omega_0}\text{Im}\left(\tilde Z_{ij} \right)
     		\end{equation}
     	\end{enumerate}
     \end{defn}
     In other words, in $\GG_{red}^{\AAA}$, line impedances have still the dynamics \eqref{ConstitutiveRel} but resistances $\tilde R_{ij}=\tilde Z_{ij}-\mathrm i\omega_0 \tilde L_{ij}$ and inductances $\tilde L_{ij}$ are those predicted by AC-KR. Note also that, by construction, graphs associated to $\Lset_{red}^{AC}$ and $\Lset_{red}(s)$ have the same set of undirected edges. Therefore, choosing the same orientation, the set of directed edges of $\GG_{red}$ and $\GG_{red}^{AC}$ can be made identical (this is why they have been both denoted with $\EE_{red}$).
     \begin{rmk}
     	AC-KR does not guarantee positivity of the reduced resistances and inductances 	\cite{dhople2014synchronization}. Based on the results of \cite{caliskan2012kron}, one expects that negative values can occur if the time constants of the original lines are spread in a wide range. In microgrids, however, electrical lines are usually similar and so are their time constants. 
\hspace{98mm}$\blacksquare$
     \end{rmk} 
     In order to complete the KR procedure and specify a formula analogous to \eqref{KRdef1}, it remains to be seen how to compute equivalent boundary currents due to the effects of internal currents that have been eliminated.
 \begin{defn}
\label{def:hKR}
hKR is given by the network $\GG_{red}^{\AAA}$ and the relation
	\begin{equation}
     		\label{hKRa}
     		I_b^{dq}(s)=\Lset_{red}^{\AAA}(s)V_b^{dq}(s)-\tilde I_b^{dq}(s)
              \end{equation} 
     		where $\Lset_{red}^{\AAA}(s)$ is the Laplacian of $\GG_{red}^{\AAA}$ and $\tilde I_b^{dq}(s)$ is defined in \eqref{eq:I_tilde}.
		\end{defn}
In other words, in hKR line dynamics is approximated as in \eqref{eq:Wred_hKR} but internal currents are mapped into boundary currents with no approximation.

	The next result characterizes asymptotic behaviors preserved by hKR.
	\begin{prop}
          \label{prop:as_eq}
Consider the network $\GG$ and assume parameters $\tilde R_{ij}$ and $\tilde L_{ij}$ obtained in hKR are strictly positive. If $I_{\ell}^{abc}$ and $V_b^{abc}$ converge to PSSS with angular frequency $\omega_0$, then the \textit{asymptotic} behavior of $I_b^{dq}$ computed from \eqref{I=LV} is the same as when $I_b^{dq}$ is computed from \eqref{hKRa}.
	\end{prop}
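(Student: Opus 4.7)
The plan is to first translate the hypothesis ``$I_\ell^{abc}$ and $V_b^{abc}$ converge to a PSSS at angular frequency $\omega_0$'' into a statement about the asymptotic spectral content of their complex $dq$-representations. Decomposing each such PSSS (which has no zero-sequence component by the standing assumption) into its positive- and negative-sequence balanced parts and applying the Park transformation with $\theta(t)=\omega_0 t$, the positive-sequence part becomes a constant (DC) whereas the negative-sequence part becomes a single complex exponential at angular frequency $\pm 2\omega_0$. Hence, up to a vanishing transient, the complex signals $I_\ell^{dq}(t)$ and $V_b^{dq}(t)$ are linear combinations of two complex exponentials with Laplace arguments $s=0$ and $s=-\mathrm i\, 2\omega_0$ (the sign depending on the Park convention adopted).

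Both the exact model \eqref{KRdef1} and the hKR model \eqref{hKRa} are complex-valued LTI maps from the pair $(V_b^{dq},I_\ell^{dq})$ to $I_b^{dq}$ that share the same accompanying term $\T(s)I_\ell^{dq}(s)$. I therefore want to prove that the remaining transfer matrices $\KK(\Lset(s))$ and $\Lset_{red}^{\AAA}(s)$ agree at $s=0$ and at $s=-\mathrm i\, 2\omega_0$, and then invoke stability to kill the initial-condition transients. Stability is straightforward: the original line $(i,j)$ has dynamical pole $-R_{ij}/L_{ij}-\mathrm i\omega_0$, which lies in the open left half-plane because $R_{ij},L_{ij}>0$; the hKR line has pole $-\tilde R_{ij}/\tilde L_{ij}-\mathrm i\omega_0$, which by the positivity hypothesis is likewise in the open left half-plane.

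The frequency-match is short algebra. At $s=0$ the line admittance is $W_{ij}(0)=1/Z_{ij}$, so $\Lset(0)=\Lset^{AC}$ and $\KK(\Lset(0))=\Lset_{red}^{AC}$; on the hKR side, \eqref{eq:Wred_hKR}--\eqref{eq:star} give $W_{red,ij}^{\AAA}(0)=1/\tilde Z_{ij}=W_{red,ij}^{AC}$, hence $\Lset_{red}^{\AAA}(0)=\Lset_{red}^{AC}$. At $s=-\mathrm i\, 2\omega_0$, $W_{ij}(-\mathrm i\, 2\omega_0)=1/(R_{ij}-\mathrm i\omega_0 L_{ij})=\overline{1/Z_{ij}}$ because $R_{ij},L_{ij}\in\Rset$; the identical calculation, using that $\tilde R_{ij},\tilde L_{ij}\in\Rset$ (which is precisely the content of the positivity hypothesis), yields $W_{red,ij}^{\AAA}(-\mathrm i\, 2\omega_0)=\overline{1/\tilde Z_{ij}}=\overline{W_{red,ij}^{AC}}$. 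Because $\KK$ in \eqref{KRdef2} consists only of entrywise additions and a matrix inversion, it commutes with entrywise complex conjugation, so $\KK(\Lset(-\mathrm i\, 2\omega_0))=\overline{\KK(\Lset^{AC})}=\overline{\Lset_{red}^{AC}}=\Lset_{red}^{\AAA}(-\mathrm i\, 2\omega_0)$, and LTI superposition then closes the argument.

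The part I expect to require the most care is not this transfer-function algebra but the Park-harmonic bookkeeping of the first step: rigorously showing that an arbitrary (in particular unbalanced) PSSS at $\omega_0$ in $abc$ produces no spectral content in the complex $dq$ representation beyond $s=0$ and $s=\pm\mathrm i\, 2\omega_0$, and that convergence in $abc$ does translate into convergence in $dq$. Once that spectral characterization is pinned down, the proof reduces to the observation that the construction \eqref{eq:Wred_hKR}--\eqref{eq:star} was tailor-made to force $\Lset_{red}^{\AAA}(s)$ and $\KK(\Lset(s))$ to match at precisely those two harmonics.
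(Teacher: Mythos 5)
Your frequency-matching algebra is correct, and your route is genuinely different from the paper's. The paper never touches the $2\omega_0$ harmonic: it exploits the fact that the balanced, zero-sequence-free line network decouples into three identical single-phase circuits in $abc$ coordinates, so that an arbitrary (possibly unbalanced) PSSS is, phase by phase, a single sinusoid at $\omega_0$, and the whole argument reduces to checking that the single-phase transfer matrices of the original and reduced networks coincide at $s=\mathrm i\omega_0$ (which is exactly how $\tilde Z_{ij}$ was constructed). Your $dq$-frame argument instead keeps the unbalance inside the complex $dq$ signal, decomposes it into the $s=0$ (positive-sequence) and $s=-2\mathrm i\omega_0$ (negative-sequence) harmonics, and verifies the match at both points, the second via realness of $R_{ij},L_{ij},\tilde R_{ij},\tilde L_{ij}$ and conjugation-equivariance of the Schur complement $\KK(\cdot)$. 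Both are valid: the paper's phase decoupling buys a single evaluation point and handles unbalance for free, while yours stays in the frame where \eqref{I=LV} and \eqref{hKRa} are actually written and makes explicit why real positivity of the reduced parameters is what saves the negative-sequence component.

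There is, however, a gap in your stability step. Checking that the individual line admittances $W_{ij}(s)$ and $W_{red,ij}^{\AAA}(s)$ have poles at $-R_{ij}/L_{ij}-\mathrm i\omega_0$ and $-\tilde R_{ij}/\tilde L_{ij}-\mathrm i\omega_0$ only covers the entries of $\Lset(s)$ and $\Lset_{red}^{\AAA}(s)$. The maps you actually need to be asymptotically stable are $\KK(\Lset(s))=\Lset_{bb}(s)-\Lset_{b\ell}(s)\Lset_{\ell\ell}^{-1}(s)\Lset_{\ell b}(s)$ and the accompanying matrix $\T(s)=-\Lset_{b\ell}(s)\Lset_{\ell\ell}^{-1}(s)$, the latter appearing in \emph{both} \eqref{KRdef1} and \eqref{hKRa}; these contain $\Lset_{\ell\ell}^{-1}(s)$ and can therefore acquire poles at the zeros of $\det\Lset_{\ell\ell}(s)$, which are not poles of any single line. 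Without excluding right-half-plane poles there, the frequency response theorem does not apply and you cannot discard the initial-condition transients. The paper closes exactly this hole with a passivity argument: the RL network is strictly passive, hence asymptotically stable, and the elimination of the internal voltages preserves BIBO stability from $(V_b,I_\ell)$ to $I_b$, so all entries of $\T(s)$ (and likewise of $\KK(\Lset(s))$) are stable transfer functions. You need to import that argument, or an equivalent one, before your superposition step is licensed. The Park-harmonic bookkeeping you flagged as the delicate part is, by contrast, fine.
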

	\begin{proof}
Due to the equivalence of the electrical quantities of the network in \emph{abc} and \emph{dq} reference frames, we use \emph{abc} coordinates. Since the three-phase RL line network is balanced and no zero-sequence signals are present, we make reference to the single-phase directed graph $\GG^{sp}=(\VV,\EE, W^{sp}(s))$, introduced for illustrating AC-KR. The instantaneous relation between $I^{\star}(s)$ and $V^{\star}(s)$, $\star \in \lbrace a,b,c \rbrace$, is given by: $I^{\star}(s)=\Lset^{sp}(s)V^{\star}(s)$, where $\Lset^{sp}(s)$ is the Laplacian matrix of $\GG^{sp}$. After applying hKR, the Laplace transforms of phase $\star \in \lbrace a,b,c \rbrace$ of signals $I_{b}^{abc}(t)$, $V_{b}^{abc}(t)$ and $I_{\ell}^{abc}(t)$ are related by
$$I^{\star}_b(s)=\Lset_{red}^{sp^{\AAA}}(s)V_b^{\star}(s)-\T^{sp}(s)I_{\ell}^{\star}(s)$$
where $\T^{sp}(s)$ is given by \eqref{KRdef3}, computed with respect to $\Lset^{sp}(s)$ and $\Lset_{red}^{sp^{\AAA}}$ is the Laplacian of the single-phase approximate Kron reduced graph $\GG^{sp^{\AAA}}_{red}=(\VV_b,\EE_{red},W^{sp^{\AAA}}_{red}(s))$, with weights $W_{red,ij}^{sp^{\AAA}}(s)=1/(s\tilde L_{ij}+\tilde R_{ij})$. We note that, by construction, the poles of all entries of $\Lset_{red}^{sp^{\AAA}}$ have strictly negative real parts. 
Moreover, also the poles of all entries of $\T^{sp}(s)$ have strictly negative real parts. This can be shown as follows. We recall that the RL network associated to $\GG^{sp}$ is asymptotically stable because it is strictly passive, and, by construction, \eqref{KRdef1} preserves the BIBO stability property between inputs $V_b^{\star}$ and $I^{\star}_{\ell}$ and output $I_b^{\star}$. Hence, in experiments where $V_b^{\star}=0$ and only one load current $I_{\ell_i}^{\star}(t)$ at time is excited, if $I_{\ell_i}^{\star}(t)$ is bounded, also all elements of $I_b^{\star}(t)$ are bounded. By the superposition principle, this shows that all elements of $\T^{sp}(s)$ are BIBO stable (and hence asymptotically stable) SISO transfer functions.

Let us consider the case in which all elements of inputs $V_b^{\star}$ and $I^{\star}_{\ell}$ are sinusoids with angular frequency $\omega_0$. By the frequency response theorem \cite{desoer1984basic}, each element of $I_{b}^{\star}(t)$ tends to a sinusoid. Moreover, the relation between the phasors $\vec{V}_b^{\star}$, $\vec{I}_b^{\star}$ and $\vec{I}_{\ell}^{\star}$ is given by: $\vec{I}_b^{\star}=\Lset_{red}^{sp^{\AAA}}(\mathrm i \omega_0)\vec{V}_b^{\star}+\T^{sp}(\mathrm i \omega_0)\vec{I}_{\ell}^{\star}$, that is \eqref{eq:Ib_fas}.

Since the asymptotic quantities in \emph{abc} coordinates of the original and hybrid Kron reduced network are identical, then they coincide also in \emph{dq} coordinates. Similar arguments can be also applied to the case when $V_b^{\star}$ and $I_{\ell}^{\star}$ are not sinusoids, but asymptotically reach PSSS with frequency $\omega_0$.
\end{proof}

	\subsection{Numerical examples}
\label{subsec:ne}
    Consider the hKR of the three-phase network in Figure \ref{Fig:NetworkOriginal}, composed of three ideal voltage sources and balanced RL lines connecting the generators to a purely resistive load. The three-phase voltage generators are balanced and have angular frequency $\omega_0=2\pi50\,\mbox{rad/s}$. In this very simple case, hKR is a Y$-\Delta$ transformation. The reduced network is shown in Figure \ref{Fig:NetworkReduced}. 
    The corresponding voltage generators are identical for the original and reduced networks. Hence, from Proposition \ref{prop:as_eq}, we expect $I_b$ in the original and reduced models to be the same, once PSSS is reached.
 \begin{figure}
    	\centering
    	\begin{subfigure}[htb]{0.5\textwidth}
    		\centering
    		\includegraphics[scale=0.5]{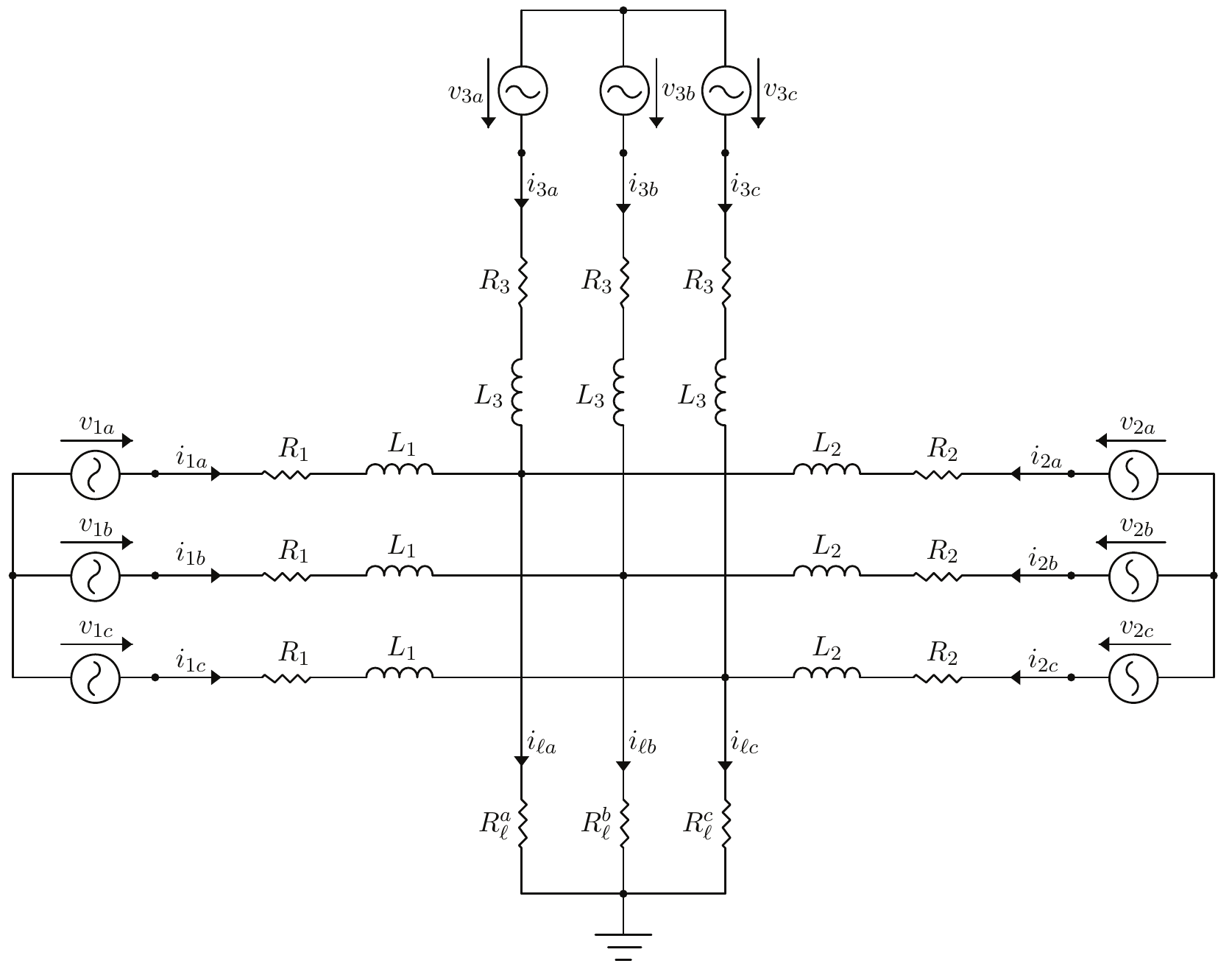}
    		\caption{Original network.}
    		\label{Fig:NetworkOriginal}
    	\end{subfigure}
    	\begin{subfigure}[htb]{0.5\textwidth}
    		\centering
    		\vspace{5mm}
    		\includegraphics[scale=0.5]{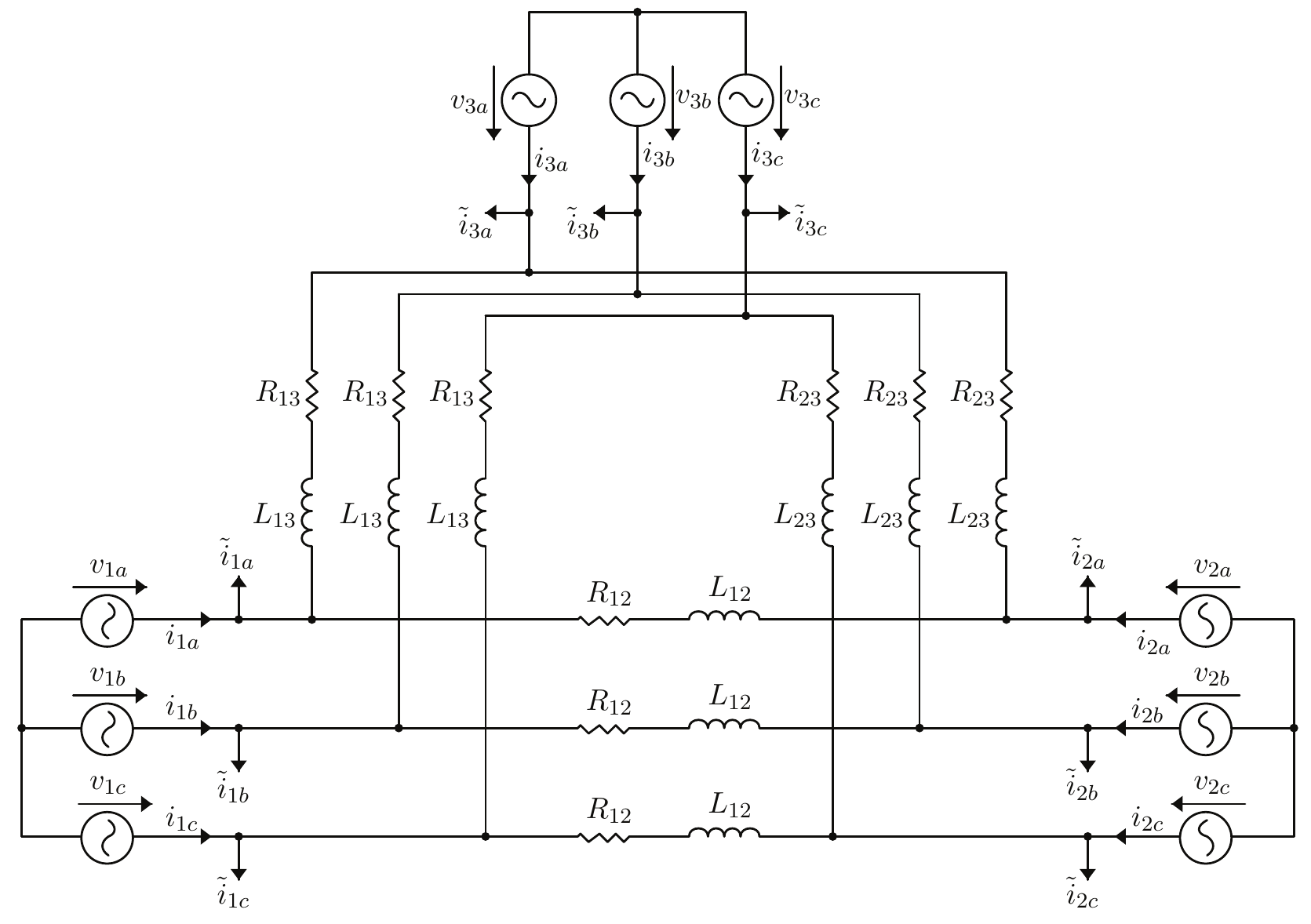}
    		\caption{Reduced model obtained from hKR.}
    		\label{Fig:NetworkReduced}
    	\end{subfigure}
    	\caption{Original and reduced networks.}
    	\label{fig:21nodes_performance}
    \end{figure}
    The lines and loads parameters, as well as the voltages of generators,  are collected in Tables \ref{Appendix:3FTable:OriginalParam}, \ref{Appendix:3FTable:ReducedParam}, \ref{Appendix:3FTable:Loads}, \ref{Appendix:TableNonlinearload} and \ref{Appendix:TableVoltages} in Appendix \ref{app:ne}.
    
    \textit{Example 1 - Linear unbalanced load.$\;$}
    In this example, we suppose that the resistive load is unbalanced and compare the original circuit with the hybrid Kron reduced model. 
	Figure \ref{Fig:Ex1} shows the output current of phase $a$ of generator 1. 
	This example highlights that hKR ensures asymptotic equivalence, even if the load is unbalanced. In the presented case, the lines are mainly resistive, so equivalence is achieved almost immediately. 
\begin{figure}
			\centering
\includegraphics[scale=0.2]{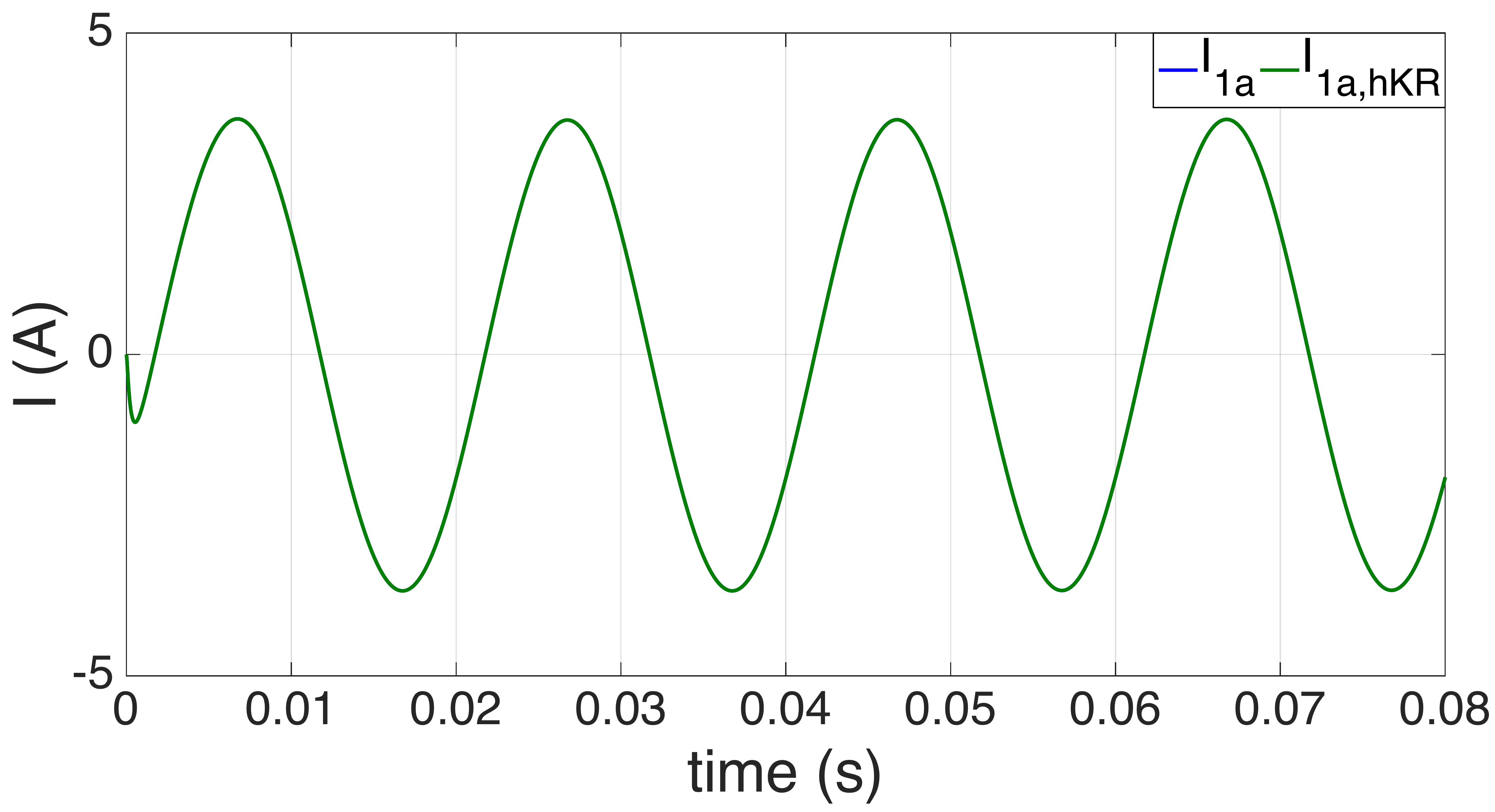}
		\caption{Example 1 - Comparison between original (blue line) and reduced (green line) phase $a$ currents of generator 1 with unbalanced load.}
		\label{Fig:Ex1}
\end{figure}

\textit{Example 2 - Nonlinear load.$\;$}
	In this second example, we consider the same network of Example 1, but with the resistive load replaced by a six-pulse bridge rectifier. We compare the currents at boundary nodes of the original circuit with those of the reduced model. For the sake of simplicity, we concentrate on the output currents of generator 1, phase $a$, shown in Figure \ref{Fig:2}. Even if the load currents are highly distorted, the difference between boundary currents of the original and reduced network goes asymptotically to zero, as shown in Figure \ref{Fig3:HybridNL3}. This is because Proposition 1 provides a sufficient, but not necessary, condition for asymptotic equivalence of the boundary currents. Analogous results are obtained for all the other phases and generators.
\begin{figure}
\centering
\begin{subfigure}[htb]{0.48\textwidth}
			\centering
			\includegraphics[width=1\textwidth, height = 0.2\textheight]{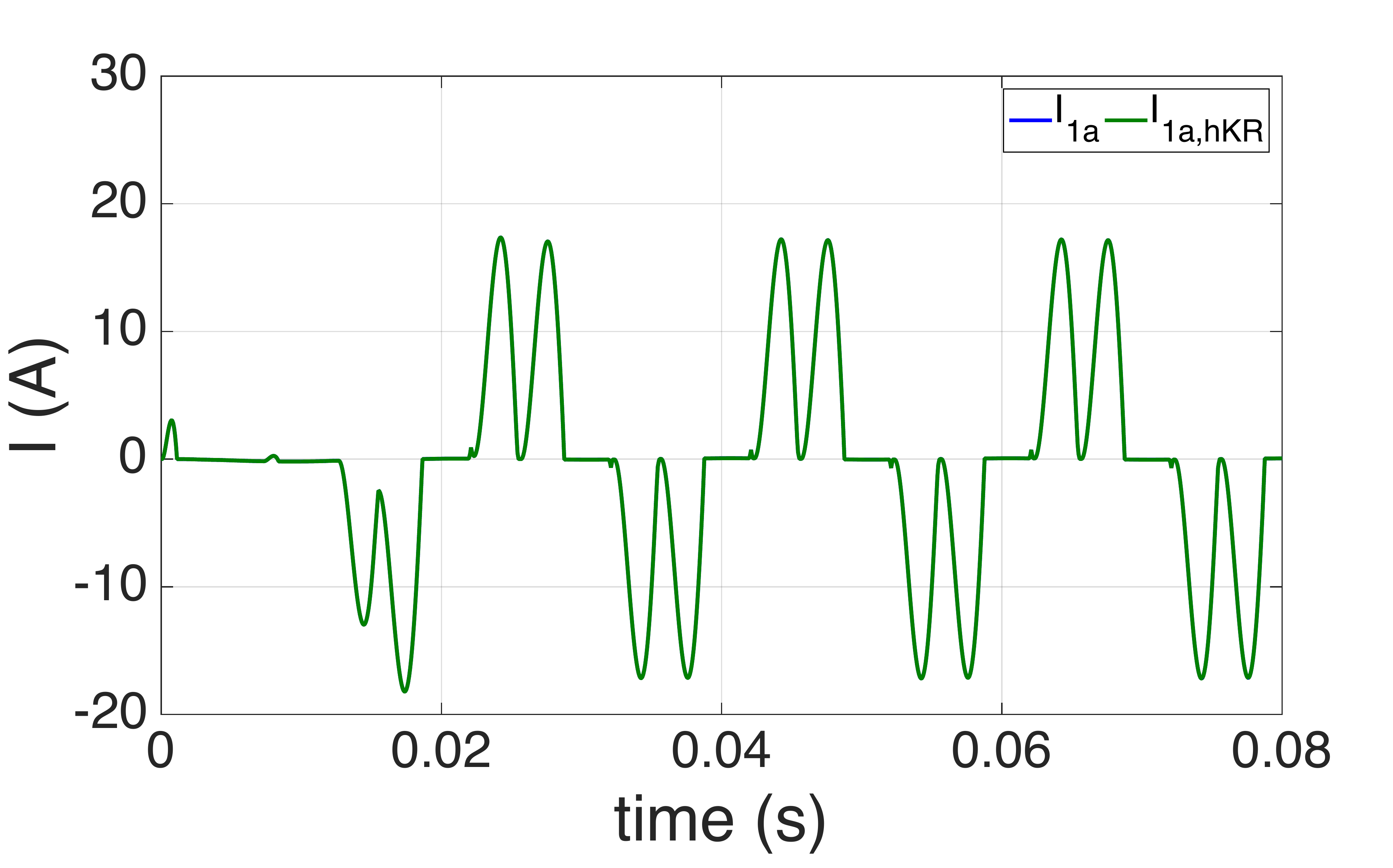}
			\caption{Output currents of phase $a$.}
			\label{Fig:2}
		\end{subfigure}
                \begin{subfigure}[htb]{0.48\textwidth}
			\centering
			\includegraphics[width=1\textwidth, height = 0.2\textheight]{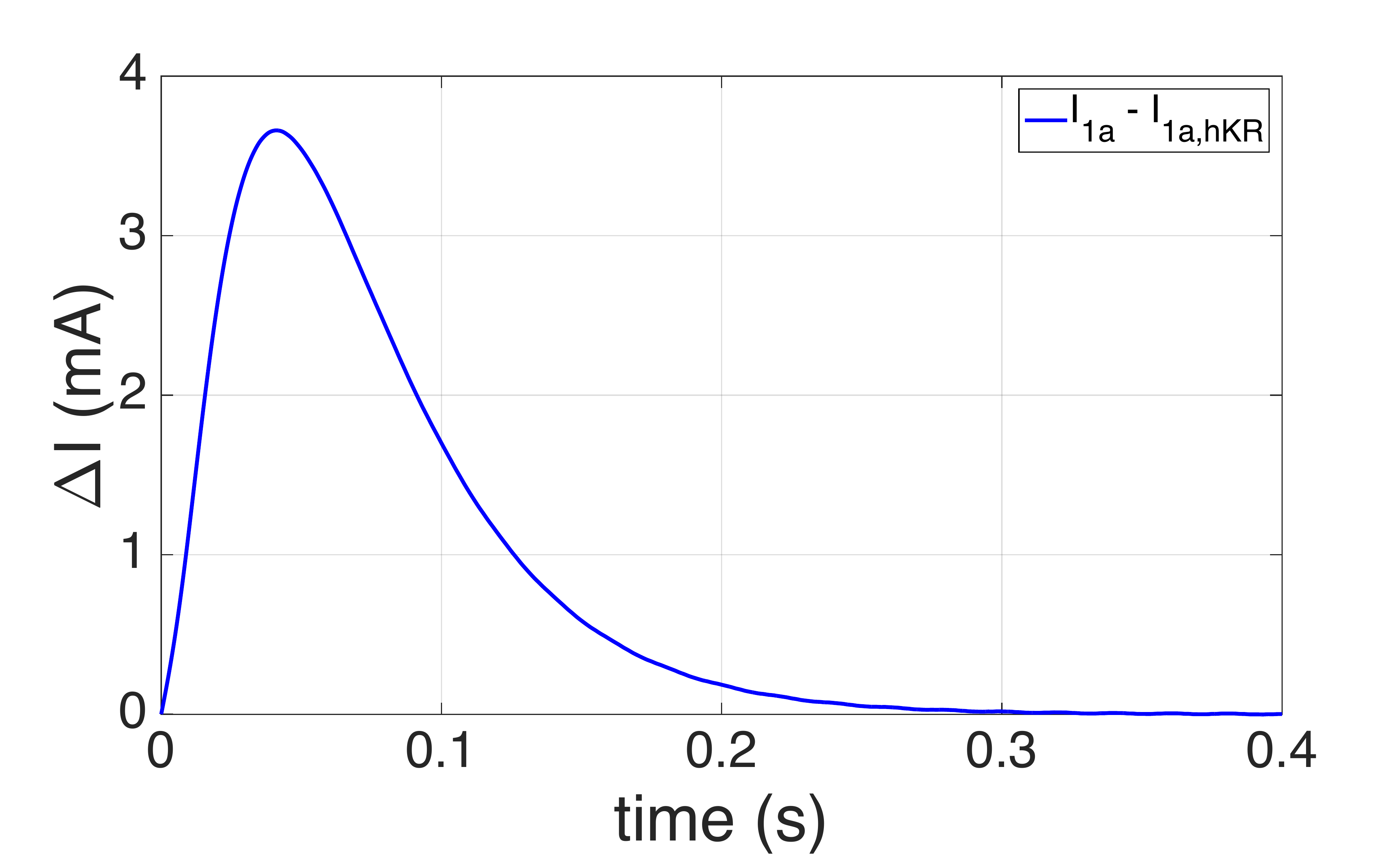}
			\caption{Error between boundary currents.}
			\label{Fig3:HybridNL3}
		\end{subfigure}
		\caption{Example 2 - Comparison between original and reduced currents of phase $a$ of generator 1 with nonlinear load.}
		\label{Fig3:HybridNL}
	\end{figure}

\section{PnP control of ImGs}
\label{sec:PnPctrl}
In this Section we briefly summarize the PnP algorithm in \cite{riverso2015plug} for designing a decentralized control architecture guaranteeing voltage and frequency stability in an ImG. Then, we will show how to generalize it to the case of ImGs with arbitrary topologies. The method in \cite{riverso2015plug} assumes a load-connected topology and RL lines with strictly positive resistances and inductances. Local regulators $\CC_i$ use measurements of the voltage $V^{dq}_i$ at PCC (see Figure \ref{fig:schema1DGU}) and the current $I^{dq}_{ti}$ to control the voltage $V_{ti}^{dq}$ at the VSC $i$ so as to make $V_{i}^{dq}$ track a reference signal. Furthermore, each controller is composed of a matrix gain and an integral action on the $d$ and $q$ components of the tracking error.
When a DGU (say DGU $i$) wants to join the network (e.g. DGU 3 in Figure \ref{fig:Ex_plugin_req}), it issues a plug-in request to its future neighbors, i.e. DGUs $j\in\NN_i$ (see, for example, DGUs 2 and 4 in Figure \ref{fig:Ex_plugin_req}). DGU $i$ then solves the Linear Matrix Inequality (LMI) problem (19) in \cite{riverso2015plug}, which depends only upon the parameters of the lines $ij$. If feasible, the optimization problem produces a controller $\CC_i$, along with a local and structured Lyapunov function that can be used for certifying stability of the whole ImG. Since also DGUs $j\in\NN_i$ will have a new neighbor, they must update their controller $\CC_j$ by tacking into account the parameters of the new line $ji$. This is done by solving an LMI problem analogous to the one solved by DGU $i$. 
If one of the above LMI problems is unfeasible, plug-in of DGU $i$ is denied. Otherwise, 
DGU $i$ can be connected and stability of the whole ImG can be certified using the sum of the computed local Lyapunov functions. 

Unplugging of a DGU (say DGU $m$) follows a similar procedure: as line $mk$, $k\in\NN_m$ will be disconnected from the corresponding DGU $k$, all controllers $\CC_k$ must be successfully redesigned before allowing the disconnection. Following a similar reasoning, also the possibility of changing a parameter of line $ij$ (or add a new line $ij$), must be first tested by successfully designing controllers $\CC_i$ and $\CC_j$ through suitable LMIs. 
While line changes are not common in ImGs, they could happen in the hybrid Kron reduced network because of the addition/removal of a load node in the original ImG. Examples of this phenomenon are provided in the next Section. We also highlight that PnP controllers in \cite{riverso2015plug} can be enhanced with pre-filters of reference signals and compensators of (measured) load currents $I_{Li}$ represented in Figure 2.\footnote{These enhancements are optional and they will be not used in the simulations in Section \ref{sec:Sim21bn}.}

When the design of PnP controllers is conducted with reference to the hybrid Kron reduced model, according to the electrical scheme in Figure \ref{fig:schema1DGU}, current $\tilde I_{b_k}^{dq}$ in \eqref{hKRa} can be lumped into current $I_{Li}^{dq}$, if $b_k = i$. In other words, the effect of load nodes is mapped into additional contributions to DGU loads. This is, however, not critical because DGU loads are treated as disturbances by PnP controllers.

  \begin{figure}
		\centering
		\includegraphics[scale=0.5]{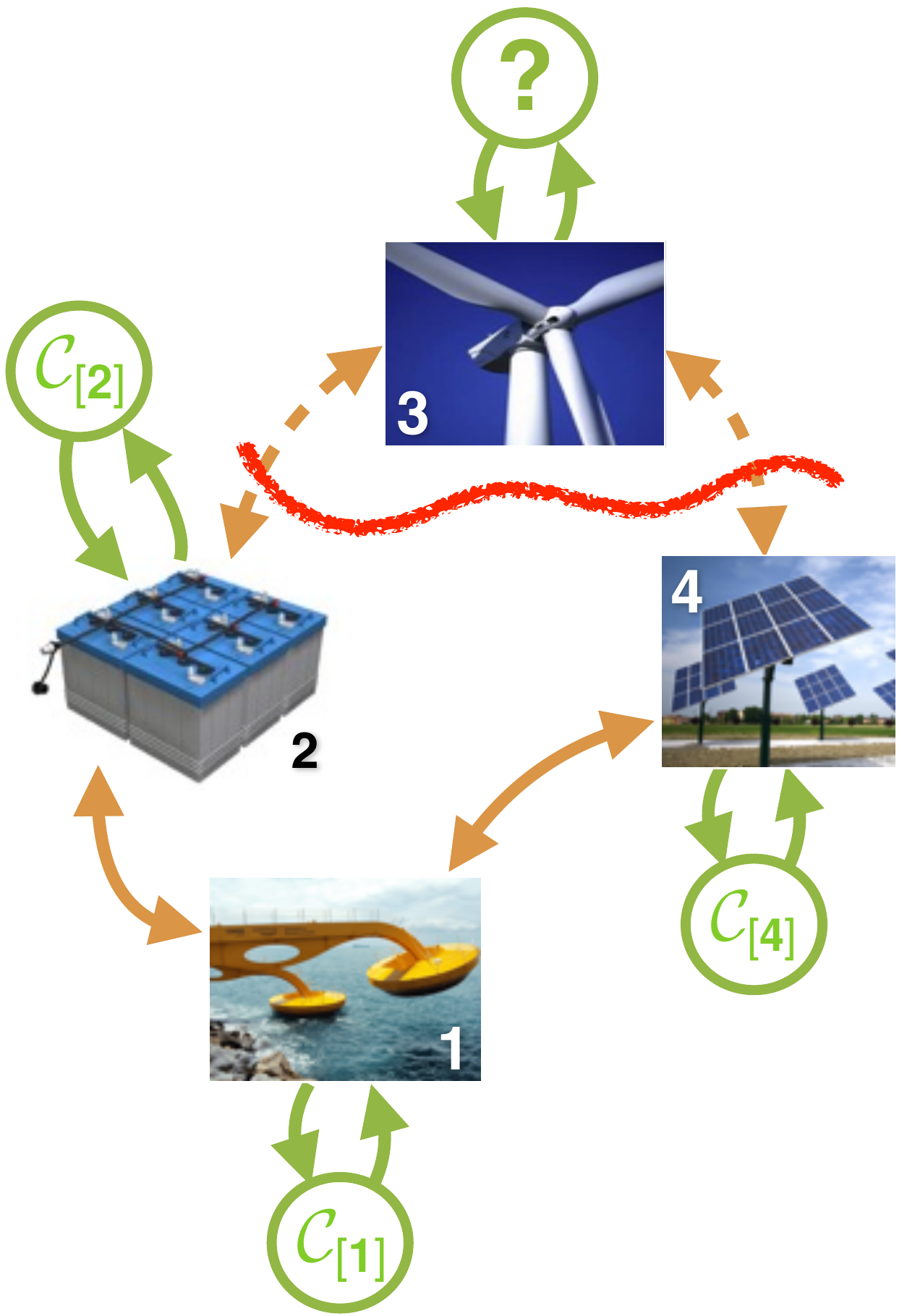}
\caption{Example of plug-in request issued by DGU 3 to its future neighbors (i.e. DGUs 2 and 4).}
		\label{fig:Ex_plugin_req}
	\end{figure}

\subsection{PnP design for general ImGs}
\label{sec:PnPdesfen}
Let us consider an ImG with arbitrary topology represented by the directed connected graph $\GG$ with node set $\VV$. When a DGU or a load node (say node $i$) wants to plug in, the first step consists in updating $\GG$ accordingly, so obtaining the graph $\GG^{new}$. Then, hKR is applied to $\GG^{new}$ for obtaining the reduced ImG $\GG_{red}^{\AAA,new}$ with load-connected topology. If some resistances or inductances of reduced lines are negative, the plugging-in of node $i$ is denied, as one of the assumption of the PnP algorithm in \cite{riverso2015plug} is not fulfilled. Otherwise, one compares the reduced graphs $\GG_{red}^{\AAA}$ (associated to $\GG$) and $\GG_{red}^{\AAA,new}$ for finding the set $\UU\subseteq\VV_b$ of DGUs that have new neighbors or that are connected to lines whose impedance has changed. The LMI problem (19) in \cite{riverso2015plug} is then solved for all DGUs $j\in\UU$ (and also for $j=i$, if node $i$ is a DGU), hence producing new controllers $\CC_j$. If no LMI is infeasible, controllers in the original ImG are updated and connection of node $i$ is allowed.

Unplugging of a node can be performed in a similar way.
\begin{rmk}
\label{rmk:new}
According to the above algorithm, hKR is performed in a centralized fashion every time there is a change in the network topology. This is in contrast with PnP design, whose main feature is to avoid any centralized computations. In the future, we will study how to perform hKR in a distributed fashion. Notably, one can develop this generalization according to the iterative KR procedure \cite{dorfler2013kron} proposed for both AC networks in PSSS and resistive networks. In a similar spirit, we will study how to avoid the centralized computation of the set $\UU$ by exploiting existing distributed algorithms for path-finding over directed graphs. 
\hspace{75mm}$\blacksquare$
\end{rmk}

	\section{Simulation of a 21-bus network}
          \label{sec:Sim21bn}
In this Section, we assess the capability of PnP control and hKR to deal with networks characterized by
complex topologies. In particular, we use a network derived from the
top half of IEEE 37 topology \cite{feeders2011ieee} and identify generation nodes and
loads as in \cite{bolognani2013distributed}.\\
The simulation, whose duration is 13 seconds, has been performed
in PSCAD.

	\subsection{ImG topology}
	The lack in literature of standard test benches for microgrids has led many authors to use IEEE test feeders \cite{feeders2011ieee} as workbenches for microgrids testing. These networks have been originally developed as standards for power flows studies in high-voltage radial grids. When the IEEE test feeders are applied to microgrids, only the network topologies are preserved, while line parameters and loads are changed. 
	\begin{figure}
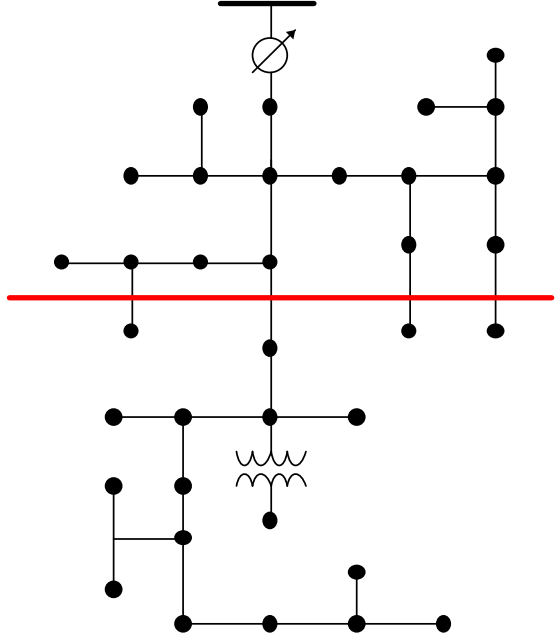

		\centering
		\begin{pgfpicture}{0cm}{0cm}{213pt}{265pt}
\pgfsetxvec{\pgfpoint{1pt}{0pt}}
\pgfsetyvec{\pgfpoint{0pt}{1pt}}
\pgfsetroundjoin 
\pgfsetroundcap
\pgftranslateto{\pgfxy(0,265)}
\begin{pgfmagnify}{1}{-1}
\definecolor{layer0}{rgb}{0.0,0.0,0.0}
\definecolor{layer1}{rgb}{0.0,0.0,0.5}
\definecolor{layer2}{rgb}{1.0,0.0,0.0}
\definecolor{layer3}{rgb}{0.0,0.5,0.5}
\definecolor{layer4}{rgb}{1.0,0.78,0.0}
\definecolor{layer5}{rgb}{0.5,1.0,0.0}
\definecolor{layer6}{rgb}{0.0,1.0,1.0}
\definecolor{layer7}{rgb}{0.0,0.5,0.0}
\definecolor{layer8}{rgb}{0.6,0.8,0.2}
\definecolor{layer9}{rgb}{1.0,0.08,0.58}
\definecolor{layer10}{rgb}{0.71,0.61,0.05}
\definecolor{layer11}{rgb}{0.0,0.5,1.0}
\definecolor{layer12}{rgb}{0.88,0.88,0.88}
\definecolor{layer13}{rgb}{0.64,0.64,0.64}
\definecolor{layer14}{rgb}{0.37,0.37,0.37}
\definecolor{layer15}{rgb}{0.0,0.0,0.0}
\color{layer0}
\pgfsetlinewidth{0.65pt}
\pgfellipse[fillstroke]{\pgfxy(103.5,45.0)}{\pgfxy(2.5,0)}{\pgfxy(0,3.0)}
\pgfellipse[fillstroke]{\pgfxy(77.5,45.0)}{\pgfxy(2.5,0)}{\pgfxy(0,3.0)}
\pgfellipse[fillstroke]{\pgfxy(77.5,71.0)}{\pgfxy(2.5,0)}{\pgfxy(0,3.0)}
\pgfellipse[fillstroke]{\pgfxy(103.5,71.0)}{\pgfxy(2.5,0)}{\pgfxy(0,3.0)}
\pgfellipse[fillstroke]{\pgfxy(51.5,71.0)}{\pgfxy(2.5,0)}{\pgfxy(0,3.0)}
\pgfellipse[fillstroke]{\pgfxy(129.5,71.0)}{\pgfxy(2.5,0)}{\pgfxy(0,3.0)}
\pgfellipse[fillstroke]{\pgfxy(155.5,71.0)}{\pgfxy(2.5,0)}{\pgfxy(0,3.0)}
\pgfsetlinewidth{2.0pt}
\pgfline{\pgfxy(85,6)}{\pgfxy(120,6)}
\pgfsetlinewidth{0.65pt}
\pgfline{\pgfxy(104.0,32.0)}{\pgfxy(104.0,44.0)}
\pgfline{\pgfxy(78.0,71.0)}{\pgfxy(78.0,45.0)}
\pgfline{\pgfxy(104.0,45.0)}{\pgfxy(104.0,71.0)}
\pgfline{\pgfxy(104.0,71.0)}{\pgfxy(78.0,71.0)}
\pgfline{\pgfxy(78.0,71.0)}{\pgfxy(52.0,71.0)}
\pgfline{\pgfxy(104.0,71.0)}{\pgfxy(130.0,71.0)}
\pgfline{\pgfxy(130.0,71.0)}{\pgfxy(156.0,71.0)}
\pgfellipse[fillstroke]{\pgfxy(103.5,103.5)}{\pgfxy(2.5,0)}{\pgfxy(0,2.5)}
\pgfellipse[fillstroke]{\pgfxy(77.5,103.5)}{\pgfxy(2.5,0)}{\pgfxy(0,2.5)}
\pgfellipse[fillstroke]{\pgfxy(51.5,103.5)}{\pgfxy(2.5,0)}{\pgfxy(0,2.5)}
\pgfellipse[fillstroke]{\pgfxy(25.5,103.5)}{\pgfxy(2.5,0)}{\pgfxy(0,2.5)}
\pgfellipse[fillstroke]{\pgfxy(51.5,129.5)}{\pgfxy(2.5,0)}{\pgfxy(0,2.5)}
\pgfellipse[fillstroke]{\pgfxy(103.5,136.0)}{\pgfxy(2.5,0)}{\pgfxy(0,3.0)}
\pgfellipse[fillstroke]{\pgfxy(103.5,162.0)}{\pgfxy(2.5,0)}{\pgfxy(0,3.0)}
\pgfellipse[fillstroke]{\pgfxy(136.0,162.0)}{\pgfxy(3.0,0)}{\pgfxy(0,3.0)}
\pgfellipse[fillstroke]{\pgfxy(103.5,201.0)}{\pgfxy(2.5,0)}{\pgfxy(0,3.0)}
\pgfellipse[fillstroke]{\pgfxy(71.0,162.0)}{\pgfxy(3.0,0)}{\pgfxy(0,3.0)}
\pgfellipse[fillstroke]{\pgfxy(45.0,162.0)}{\pgfxy(3.0,0)}{\pgfxy(0,3.0)}
\pgfellipse[fillstroke]{\pgfxy(71.0,188.0)}{\pgfxy(3.0,0)}{\pgfxy(0,3.0)}
\pgfellipse[fillstroke]{\pgfxy(71.0,207.5)}{\pgfxy(3.0,0)}{\pgfxy(0,2.5)}
\pgfellipse[fillstroke]{\pgfxy(45.0,188.0)}{\pgfxy(3.0,0)}{\pgfxy(0,3.0)}
\pgfellipse[fillstroke]{\pgfxy(45.0,227.0)}{\pgfxy(3.0,0)}{\pgfxy(0,3.0)}
\pgfline{\pgfxy(45.0,227.0)}{\pgfxy(45.0,188.0)}
\pgfline{\pgfxy(45.0,208.0)}{\pgfxy(71.0,208.0)}
\pgfline{\pgfxy(71.0,208.0)}{\pgfxy(71.0,188.0)}
\pgfline{\pgfxy(71.0,188.0)}{\pgfxy(71.0,162.0)}
\pgfline{\pgfxy(71.0,162.0)}{\pgfxy(45.0,162.0)}
\pgfline{\pgfxy(71.0,162.0)}{\pgfxy(104.0,162.0)}
\pgfline{\pgfxy(104.0,162.0)}{\pgfxy(136.0,162.0)}
\pgfline{\pgfxy(104.0,162.0)}{\pgfxy(104.0,65.0)}
\pgfline{\pgfxy(104.0,104.0)}{\pgfxy(26.0,104.0)}
\pgfline{\pgfxy(52.0,104.0)}{\pgfxy(52.0,130.0)}
\pgfellipse[fillstroke]{\pgfxy(71.0,240.0)}{\pgfxy(3.0,0)}{\pgfxy(0,3.0)}
\pgfellipse[fillstroke]{\pgfxy(103.5,240.0)}{\pgfxy(2.5,0)}{\pgfxy(0,3.0)}
\pgfellipse[fillstroke]{\pgfxy(136.0,240.0)}{\pgfxy(3.0,0)}{\pgfxy(0,3.0)}
\pgfellipse[fillstroke]{\pgfxy(136.0,220.5)}{\pgfxy(3.0,0)}{\pgfxy(0,2.5)}
\pgfellipse[fillstroke]{\pgfxy(168.5,240.0)}{\pgfxy(2.5,0)}{\pgfxy(0,3.0)}
\pgfline{\pgfxy(71.0,208.0)}{\pgfxy(71.0,240.0)}
\pgfline{\pgfxy(71.0,240.0)}{\pgfxy(169.0,240.0)}
\pgfline{\pgfxy(136.0,240.0)}{\pgfxy(136.0,221.0)}
\pgfellipse[fillstroke]{\pgfxy(155.5,97.0)}{\pgfxy(2.5,0)}{\pgfxy(0,3.0)}
\pgfellipse[fillstroke]{\pgfxy(155.5,129.5)}{\pgfxy(2.5,0)}{\pgfxy(0,2.5)}
\pgfellipse[fillstroke]{\pgfxy(188.0,71.0)}{\pgfxy(3.0,0)}{\pgfxy(0,3.0)}
\pgfellipse[fillstroke]{\pgfxy(188.0,97.0)}{\pgfxy(3.0,0)}{\pgfxy(0,3.0)}
\pgfellipse[fillstroke]{\pgfxy(188.0,129.5)}{\pgfxy(3.0,0)}{\pgfxy(0,2.5)}
\pgfellipse[fillstroke]{\pgfxy(188.0,45.0)}{\pgfxy(3.0,0)}{\pgfxy(0,3.0)}
\pgfellipse[fillstroke]{\pgfxy(188.0,25.5)}{\pgfxy(3.0,0)}{\pgfxy(0,2.5)}
\pgfellipse[fillstroke]{\pgfxy(162.0,45.0)}{\pgfxy(3.0,0)}{\pgfxy(0,3.0)}
\pgfline{\pgfxy(156.0,71.0)}{\pgfxy(156.0,130.0)}
\pgfline{\pgfxy(188.0,130.0)}{\pgfxy(188.0,26.0)}
\pgfline{\pgfxy(188.0,45.0)}{\pgfxy(162.0,45.0)}
\pgfline{\pgfxy(188.0,71.0)}{\pgfxy(156.0,71.0)}
\pgfmoveto{\pgfxy(91,175)} 
\pgfcurveto{\pgfxy(92,182)}{\pgfxy(96,182)}{\pgfxy(97,175)}
\pgfstroke
\pgfmoveto{\pgfxy(97,175)} 
\pgfcurveto{\pgfxy(98,182)}{\pgfxy(102,182)}{\pgfxy(104,175)}
\pgfstroke
\pgfmoveto{\pgfxy(104,175)} 
\pgfcurveto{\pgfxy(105,182)}{\pgfxy(109,182)}{\pgfxy(110,175)}
\pgfstroke
\pgfmoveto{\pgfxy(110,175)} 
\pgfcurveto{\pgfxy(111,182)}{\pgfxy(115,182)}{\pgfxy(117,175)}
\pgfstroke
\pgfmoveto{\pgfxy(91,188)} 
\pgfcurveto{\pgfxy(92,182)}{\pgfxy(96,182)}{\pgfxy(97,188)}
\pgfstroke
\pgfmoveto{\pgfxy(97,188)} 
\pgfcurveto{\pgfxy(98,182)}{\pgfxy(102,182)}{\pgfxy(104,188)}
\pgfstroke
\pgfmoveto{\pgfxy(104,188)} 
\pgfcurveto{\pgfxy(105,182)}{\pgfxy(109,182)}{\pgfxy(110,188)}
\pgfstroke
\pgfmoveto{\pgfxy(110,188)} 
\pgfcurveto{\pgfxy(111,182)}{\pgfxy(115,182)}{\pgfxy(117,188)}
\pgfstroke
\pgfline{\pgfxy(104.0,175.0)}{\pgfxy(104.0,162.0)}
\pgfline{\pgfxy(104.0,188.0)}{\pgfxy(104.0,201.0)}
\pgfellipse[stroke]{\pgfxy(103.5,25.5)}{\pgfxy(6.5,0)}{\pgfxy(0,6.5)}
\pgfline{\pgfxy(104.0,19.0)}{\pgfxy(104.0,6.0)}
\pgfline{\pgfxy(97.0,32.0)}{\pgfxy(113.0,16.0)}
\pgfmoveto{\pgfxy(113.0,16.0)}
\pgflineto{\pgfxy(112.29289321881345,19.535533905932738)}
\pgflineto{\pgfxy(109.46446609406726,16.707106781186546)}
\pgfclosepath 
\pgffill 
\color{layer2}
\pgfsetlinewidth{2.0pt}
\pgfline{\pgfxy(6,117)}{\pgfxy(209,117)}
\end{pgfmagnify}
\end{pgfpicture}
		\caption{IEEE 37 Node Test Feeder and the top half network considered in our experiments.}
		\label{Fig6:IEEE37} 
	\end{figure}
	In literature, IEEE 37 network, with 37 nodes, appears to be a
        widespread topology \cite{dorfler2014breaking},
        \cite{bolognani2013distributed}. The IEEE 37 network, shown in
        Figure \ref{Fig6:IEEE37}, has a radial topology, which is quite common in its original context of high/medium-voltage networks. However, microgrids could present more complex topologies (for example, they may have a meshed structure). As a consequence, IEEE 37 appears very simple for PnP control purposes, because it does not contain any loop.
	In order to show the versatility of the approach based on PnP
        control and hKR, we have proposed a new network,
        derived from the top half of IEEE 37 topology in
        Figure \ref{Fig6:IEEE37}. 
	
	The proposed topology is shown in Figure \ref{Fig6:Retemia}. It has 21 nodes, with six DGUs, electrical RL lines having time constants
spread in a wide range, linear R and RL loads, as well as highly
nonlinear and highly inductive loads. Notice that all these loads appear as internal nodes and, without loss of generality, no load $I_{Li}$ directly connected to each PCC is present. Compared to the IEEE 37 network, a switch $SW_1$ has been introduced, allowing the
plugging-in/unplugging of loads at nodes 16, 17 and 18. Moreover, two branches
($e_{18}$ and $e_{19}$) were added and connected to the
microgrid through switches $SW_2$ e $SW_3$ respectively. The edge
$e_{18}$ creates a mesh between DGUs 1, 3 and 4; this enables us to
show that PnP controllers can stabilize also meshed
networks. The edge $e_{19}$ simply changes the impedance between DGUs 3
and 5 (as long as $SW_1$ is closed). Finally, one generation node (21)
and two loads (at nodes 19 and 20) have been introduced, so as to
simulate the plugging-in of a new DGU. The new generation unit is
connected to the microgrid via switch $SW_4$. 
	\begin{figure}
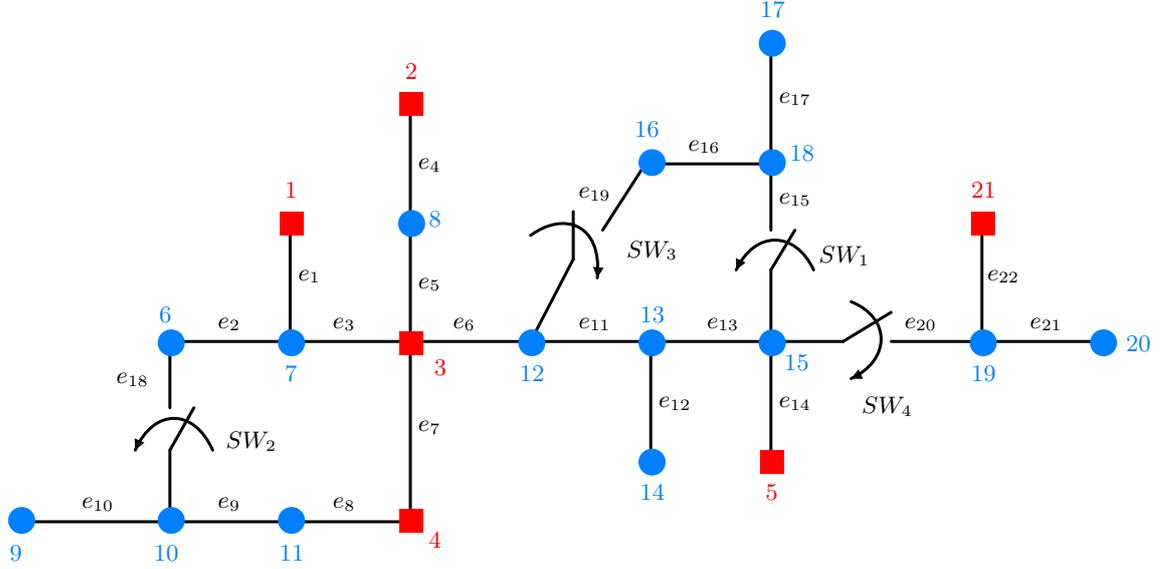

		\centering
		\begin{small}
			\begin{pgfpicture}{0cm}{0cm}{459pt}{252pt}
\pgfsetxvec{\pgfpoint{1pt}{0pt}}
\pgfsetyvec{\pgfpoint{0pt}{1pt}}
\pgfsetroundjoin 
\pgfsetroundcap
\pgftranslateto{\pgfxy(0,252)}
\begin{pgfmagnify}{1}{-1}
\definecolor{layer0}{rgb}{0.0,0.0,0.0}
\definecolor{layer1}{rgb}{0.0,0.0,0.5}
\definecolor{layer2}{rgb}{1.0,0.0,0.0}
\definecolor{layer3}{rgb}{0.0,0.5,0.5}
\definecolor{layer4}{rgb}{1.0,0.78,0.0}
\definecolor{layer5}{rgb}{0.5,1.0,0.0}
\definecolor{layer6}{rgb}{0.0,1.0,1.0}
\definecolor{layer7}{rgb}{0.0,0.5,0.0}
\definecolor{layer8}{rgb}{0.6,0.8,0.2}
\definecolor{layer9}{rgb}{1.0,0.08,0.58}
\definecolor{layer10}{rgb}{0.71,0.61,0.05}
\definecolor{layer11}{rgb}{0.0,0.5,1.0}
\definecolor{layer12}{rgb}{0.88,0.88,0.88}
\definecolor{layer13}{rgb}{0.64,0.64,0.64}
\definecolor{layer14}{rgb}{0.37,0.37,0.37}
\definecolor{layer15}{rgb}{0.0,0.0,0.0}
\color{layer0}
\pgfsetlinewidth{1.125pt}
\pgfline{\pgfxy(10.0,203.0)}{\pgfxy(158.0,203.0)}
\pgfline{\pgfxy(158.0,203.0)}{\pgfxy(158.0,45.0)}
\pgfline{\pgfxy(68.0,135.0)}{\pgfxy(293.0,135.0)}
\pgfline{\pgfxy(293.0,93.0)}{\pgfxy(293.0,23.0)}
\pgfline{\pgfxy(248.0,68.0)}{\pgfxy(293.0,68.0)}
\pgfline{\pgfxy(248.0,180.0)}{\pgfxy(248.0,135.0)}
\pgfline{\pgfxy(113.0,90.0)}{\pgfxy(113.0,135.0)}
\pgfline{\pgfxy(293.0,180.0)}{\pgfxy(293.0,135.0)}
\pgfline{\pgfxy(372.0,90.0)}{\pgfxy(372.0,135.0)}
\pgfline{\pgfxy(372.0,135.0)}{\pgfxy(417.0,135.0)}
\pgfline{\pgfxy(203.0,135.0)}{\pgfxy(219.0,104.0)}
\pgfline{\pgfxy(219.0,104.0)}{\pgfxy(219.0,86.0)}
\pgfline{\pgfxy(230.0,93.0)}{\pgfxy(246.0,68.0)}
\pgfmoveto{\pgfxy(203,95)} 
\pgfcurveto{\pgfxy(215,86)}{\pgfxy(230,90)}{\pgfxy(228,111)}
\pgfstroke
\pgfmoveto{\pgfxy(228.0,111.0)}
\pgflineto{\pgfxy(226.38824542532407,106.8283999243682)}
\pgflineto{\pgfxy(230.3702273156999,107.20763629488019)}
\pgfclosepath 
\pgffill 
\pgfline{\pgfxy(293.0,135.0)}{\pgfxy(320.0,135.0)}
\pgfline{\pgfxy(320.0,135.0)}{\pgfxy(338.0,124.0)}
\pgfline{\pgfxy(338.0,135.0)}{\pgfxy(372.0,135.0)}
\pgfline{\pgfxy(293.0,135.0)}{\pgfxy(293.0,108.0)}
\pgfline{\pgfxy(293.0,108.0)}{\pgfxy(302.0,93.0)}
\pgfmoveto{\pgfxy(280,108)} 
\pgfcurveto{\pgfxy(287,93)}{\pgfxy(302,93)}{\pgfxy(309,108)}
\pgfstroke
\pgfmoveto{\pgfxy(280.0,108.0)}
\pgflineto{\pgfxy(279.8791755813339,103.5294965093567)}
\pgflineto{\pgfxy(283.50390814131504,105.22103837068119)}
\pgfclosepath 
\pgffill 
\pgfmoveto{\pgfxy(323,149)} 
\pgfcurveto{\pgfxy(338,142)}{\pgfxy(338,126)}{\pgfxy(323,120)}
\pgfstroke
\pgfmoveto{\pgfxy(323.0,149.0)}
\pgflineto{\pgfxy(325.7789616293188,145.496091858685)}
\pgflineto{\pgfxy(327.4705034906433,149.12082441866605)}
\pgfclosepath 
\pgffill 
\pgfline{\pgfxy(68.0,203.0)}{\pgfxy(68.0,176.0)}
\pgfline{\pgfxy(68.0,176.0)}{\pgfxy(77.0,160.0)}
\pgfline{\pgfxy(68.0,160.0)}{\pgfxy(68.0,135.0)}
\pgfmoveto{\pgfxy(55,176)} 
\pgfcurveto{\pgfxy(62,160)}{\pgfxy(77,160)}{\pgfxy(84,176)}
\pgfstroke
\pgfmoveto{\pgfxy(55.0,176.0)}
\pgflineto{\pgfxy(54.770960666274455,171.53373299235184)}
\pgflineto{\pgfxy(58.435590005883206,173.13700832843065)}
\pgfclosepath 
\pgffill 
\begin{pgfmagnify}{1}{-1}
\pgfputat{\pgfxy(116,-108)}{\pgfbox[left,top]{$e_1$}}
\end{pgfmagnify}
\begin{pgfmagnify}{1}{-1}
\pgfputat{\pgfxy(86,-126)}{\pgfbox[left,top]{$e_2$}}
\end{pgfmagnify}
\begin{pgfmagnify}{1}{-1}
\pgfputat{\pgfxy(129,-126)}{\pgfbox[left,top]{$e_3$}}
\end{pgfmagnify}
\begin{pgfmagnify}{1}{-1}
\pgfputat{\pgfxy(161,-66)}{\pgfbox[left,top]{$e_4$}}
\end{pgfmagnify}
\begin{pgfmagnify}{1}{-1}
\pgfputat{\pgfxy(161,-111)}{\pgfbox[left,top]{$e_5$}}
\end{pgfmagnify}
\begin{pgfmagnify}{1}{-1}
\pgfputat{\pgfxy(174,-126)}{\pgfbox[left,top]{$e_6$}}
\end{pgfmagnify}
\begin{pgfmagnify}{1}{-1}
\pgfputat{\pgfxy(161,-165)}{\pgfbox[left,top]{$e_7$}}
\end{pgfmagnify}
\begin{pgfmagnify}{1}{-1}
\pgfputat{\pgfxy(129,-194)}{\pgfbox[left,top]{$e_8$}}
\end{pgfmagnify}
\begin{pgfmagnify}{1}{-1}
\pgfputat{\pgfxy(86,-194)}{\pgfbox[left,top]{$e_9$}}
\end{pgfmagnify}
\begin{pgfmagnify}{1}{-1}
\pgfputat{\pgfxy(35,-194)}{\pgfbox[left,top]{$e_{10}$}}
\end{pgfmagnify}
\begin{pgfmagnify}{1}{-1}
\pgfputat{\pgfxy(221,-126)}{\pgfbox[left,top]{$e_{11}$}}
\end{pgfmagnify}
\begin{pgfmagnify}{1}{-1}
\pgfputat{\pgfxy(251,-156)}{\pgfbox[left,top]{$e_{12}$}}
\end{pgfmagnify}
\begin{pgfmagnify}{1}{-1}
\pgfputat{\pgfxy(269,-126)}{\pgfbox[left,top]{$e_{13}$}}
\end{pgfmagnify}
\begin{pgfmagnify}{1}{-1}
\pgfputat{\pgfxy(296,-156)}{\pgfbox[left,top]{$e_{14}$}}
\end{pgfmagnify}
\begin{pgfmagnify}{1}{-1}
\pgfputat{\pgfxy(296,-79)}{\pgfbox[left,top]{$e_{15}$}}
\end{pgfmagnify}
\begin{pgfmagnify}{1}{-1}
\pgfputat{\pgfxy(311,-99)}{\pgfbox[left,top]{$SW_1$}}
\end{pgfmagnify}
\begin{pgfmagnify}{1}{-1}
\pgfputat{\pgfxy(48,-147)}{\pgfbox[left,top]{$e_{18}$}}
\end{pgfmagnify}
\begin{pgfmagnify}{1}{-1}
\pgfputat{\pgfxy(89,-169)}{\pgfbox[left,top]{$SW_2$}}
\end{pgfmagnify}
\begin{pgfmagnify}{1}{-1}
\pgfputat{\pgfxy(239,-97)}{\pgfbox[left,top]{$SW_3$}}
\end{pgfmagnify}
\begin{pgfmagnify}{1}{-1}
\pgfputat{\pgfxy(327,-156)}{\pgfbox[left,top]{$SW_4$}}
\end{pgfmagnify}
\begin{pgfmagnify}{1}{-1}
\pgfputat{\pgfxy(343,-126)}{\pgfbox[left,top]{$e_{20}$}}
\end{pgfmagnify}
\begin{pgfmagnify}{1}{-1}
\pgfputat{\pgfxy(374,-108)}{\pgfbox[left,top]{$e_{22}$}}
\end{pgfmagnify}
\begin{pgfmagnify}{1}{-1}
\pgfputat{\pgfxy(390,-126)}{\pgfbox[left,top]{$e_{21}$}}
\end{pgfmagnify}
\begin{pgfmagnify}{1}{-1}
\pgfputat{\pgfxy(262,-59)}{\pgfbox[left,top]{$e_{16}$}}
\end{pgfmagnify}
\begin{pgfmagnify}{1}{-1}
\pgfputat{\pgfxy(296,-41)}{\pgfbox[left,top]{$e_{17}$}}
\end{pgfmagnify}
\begin{pgfmagnify}{1}{-1}
\pgfputat{\pgfxy(221,-77)}{\pgfbox[left,top]{$e_{19}$}}
\end{pgfmagnify}
\color{layer2}
\pgfmoveto{\pgfxy(154,131)}
\pgflineto{\pgfxy(163,131)}
\pgflineto{\pgfxy(163,140)}
\pgflineto{\pgfxy(154,140)}
\pgfclosepath 
\pgffill 
\pgfmoveto{\pgfxy(289,176)}
\pgflineto{\pgfxy(298,176)}
\pgflineto{\pgfxy(298,185)}
\pgflineto{\pgfxy(289,185)}
\pgfclosepath 
\pgffill 
\begin{pgfmagnify}{1}{-1}
\pgfputat{\pgfxy(165,-207)}{\pgfbox[left,top]{$4$}}
\end{pgfmagnify}
\begin{pgfmagnify}{1}{-1}
\pgfputat{\pgfxy(156,-30)}{\pgfbox[left,top]{$2$}}
\end{pgfmagnify}
\pgfmoveto{\pgfxy(154,41)}
\pgflineto{\pgfxy(163,41)}
\pgflineto{\pgfxy(163,50)}
\pgflineto{\pgfxy(154,50)}
\pgfclosepath 
\pgffill 
\begin{pgfmagnify}{1}{-1}
\pgfputat{\pgfxy(111,-75)}{\pgfbox[left,top]{$1$}}
\end{pgfmagnify}
\pgfmoveto{\pgfxy(368,86)}
\pgflineto{\pgfxy(377,86)}
\pgflineto{\pgfxy(377,95)}
\pgflineto{\pgfxy(368,95)}
\pgfclosepath 
\pgffill 
\begin{pgfmagnify}{1}{-1}
\pgfputat{\pgfxy(167,-142)}{\pgfbox[left,top]{$3$}}
\end{pgfmagnify}
\pgfmoveto{\pgfxy(109,86)}
\pgflineto{\pgfxy(118,86)}
\pgflineto{\pgfxy(118,95)}
\pgflineto{\pgfxy(109,95)}
\pgfclosepath 
\pgffill 
\pgfmoveto{\pgfxy(154,198)}
\pgflineto{\pgfxy(163,198)}
\pgflineto{\pgfxy(163,207)}
\pgflineto{\pgfxy(154,207)}
\pgfclosepath 
\pgffill 
\begin{pgfmagnify}{1}{-1}
\pgfputat{\pgfxy(291,-189)}{\pgfbox[left,top]{$5$}}
\end{pgfmagnify}
\begin{pgfmagnify}{1}{-1}
\pgfputat{\pgfxy(368,-75)}{\pgfbox[left,top]{$21$}}
\end{pgfmagnify}
\color{layer11}
\pgfellipse[fillstroke]{\pgfxy(158.5,90.5)}{\pgfxy(4.5,0)}{\pgfxy(0,4.5)}
\pgfellipse[fillstroke]{\pgfxy(293.5,135.5)}{\pgfxy(4.5,0)}{\pgfxy(0,4.5)}
\pgfellipse[fillstroke]{\pgfxy(293.5,22.5)}{\pgfxy(4.5,0)}{\pgfxy(0,4.5)}
\pgfellipse[fillstroke]{\pgfxy(372.5,135.5)}{\pgfxy(4.5,0)}{\pgfxy(0,4.5)}
\begin{pgfmagnify}{1}{-1}
\pgfputat{\pgfxy(64,-122)}{\pgfbox[left,top]{$6$}}
\end{pgfmagnify}
\pgfellipse[fillstroke]{\pgfxy(68.5,135.5)}{\pgfxy(4.5,0)}{\pgfxy(0,4.5)}
\pgfellipse[fillstroke]{\pgfxy(248.5,67.5)}{\pgfxy(4.5,0)}{\pgfxy(0,4.5)}
\begin{pgfmagnify}{1}{-1}
\pgfputat{\pgfxy(111,-144)}{\pgfbox[left,top]{$7$}}
\end{pgfmagnify}
\pgfellipse[fillstroke]{\pgfxy(113.5,202.5)}{\pgfxy(4.5,0)}{\pgfxy(0,4.5)}
\begin{pgfmagnify}{1}{-1}
\pgfputat{\pgfxy(165,-86)}{\pgfbox[left,top]{$8$}}
\end{pgfmagnify}
\begin{pgfmagnify}{1}{-1}
\pgfputat{\pgfxy(8,-212)}{\pgfbox[left,top]{$9$}}
\end{pgfmagnify}
\begin{pgfmagnify}{1}{-1}
\pgfputat{\pgfxy(62,-212)}{\pgfbox[left,top]{$10$}}
\end{pgfmagnify}
\pgfellipse[fillstroke]{\pgfxy(203.5,135.5)}{\pgfxy(4.5,0)}{\pgfxy(0,4.5)}
\begin{pgfmagnify}{1}{-1}
\pgfputat{\pgfxy(109,-212)}{\pgfbox[left,top]{$11$}}
\end{pgfmagnify}
\begin{pgfmagnify}{1}{-1}
\pgfputat{\pgfxy(199,-144)}{\pgfbox[left,top]{$12$}}
\end{pgfmagnify}
\begin{pgfmagnify}{1}{-1}
\pgfputat{\pgfxy(244,-122)}{\pgfbox[left,top]{$13$}}
\end{pgfmagnify}
\pgfellipse[fillstroke]{\pgfxy(248.5,135.5)}{\pgfxy(4.5,0)}{\pgfxy(0,4.5)}
\begin{pgfmagnify}{1}{-1}
\pgfputat{\pgfxy(244,-189)}{\pgfbox[left,top]{$14$}}
\end{pgfmagnify}
\begin{pgfmagnify}{1}{-1}
\pgfputat{\pgfxy(242,-52)}{\pgfbox[left,top]{$16$}}
\end{pgfmagnify}
\begin{pgfmagnify}{1}{-1}
\pgfputat{\pgfxy(289,-7)}{\pgfbox[left,top]{$17$}}
\end{pgfmagnify}
\pgfellipse[fillstroke]{\pgfxy(12.5,202.5)}{\pgfxy(4.5,0)}{\pgfxy(0,4.5)}
\begin{pgfmagnify}{1}{-1}
\pgfputat{\pgfxy(300,-61)}{\pgfbox[left,top]{$18$}}
\end{pgfmagnify}
\begin{pgfmagnify}{1}{-1}
\pgfputat{\pgfxy(298,-140)}{\pgfbox[left,top]{$15$}}
\end{pgfmagnify}
\begin{pgfmagnify}{1}{-1}
\pgfputat{\pgfxy(368,-144)}{\pgfbox[left,top]{$19$}}
\end{pgfmagnify}
\pgfellipse[fillstroke]{\pgfxy(68.5,202.5)}{\pgfxy(4.5,0)}{\pgfxy(0,4.5)}
\begin{pgfmagnify}{1}{-1}
\pgfputat{\pgfxy(426,-133)}{\pgfbox[left,top]{$20$}}
\end{pgfmagnify}
\pgfellipse[fillstroke]{\pgfxy(113.5,135.5)}{\pgfxy(4.5,0)}{\pgfxy(0,4.5)}
\pgfellipse[fillstroke]{\pgfxy(248.5,180.5)}{\pgfxy(4.5,0)}{\pgfxy(0,4.5)}
\pgfellipse[fillstroke]{\pgfxy(293.5,67.5)}{\pgfxy(4.5,0)}{\pgfxy(0,4.5)}
\pgfellipse[fillstroke]{\pgfxy(417.5,135.5)}{\pgfxy(4.5,0)}{\pgfxy(0,4.5)}
\end{pgfmagnify}
\end{pgfpicture}
		\end{small}
		\caption{The proposed 21-bus network. Red squares denote boundary nodes (DGUs), blue circles represent interior nodes (loads).}
		\label{Fig6:Retemia} 
	\end{figure}
	
	 The lines parameters of the original network are collected in Table \ref{Appendix:Table21original}, in Appendix \ref{Appendix:label}. These parameters have been chosen so that the time constants are spread in a wide range; this fact let us show that PnP control can stabilize networks made of lines with very different characteristics. As regards the loads connected to the buses, they are listed in Tables \ref{Appendix:Table21Linloads} and \ref{Appendix:Table21loads} in Appendix \ref{Appendix:label}.

At time $t=0$ s, there is no energy stored in all inductors and capacitors and all the switches are open. At instant $t = 5$ s, switch $SW_1$ closes: the overall effect is an increase in electrical loads, mainly supplied by DGUs 3 and 5. Then, $SW_2$ e $SW_3$ close respectively at instants $t = 6.5$ s and $t = 8$ s, connecting new branches to the network. Finally, at $t = 9.5$ s switch $SW_4$ closes, so that the sixth generation unit is connected to the microgrid. Table \ref{chap6:Tabletimeevents} summarizes the commutations sequence. 
		\begin{table}[h]
		\centering
		\begin{tabular}{cc|cccc}
			\toprule
			Time [s] & Event & $SW_1$ & $SW_2$ & $SW_3$ & $SW_4$ \\
			\midrule
			0 & Start simulation & open & open & open & open \\
			5 & $SW_1$ closes & closed & open & open & open \\
			6.5 & $SW_2$ closes & closed & closed & open & open \\
			8 & $SW_3$ closes & closed & closed & closed & open \\
			9.5 & $SW_4$ closes & closed & closed & closed & closed \\
			13 & End simulation & closed & closed & closed & closed \\
			\bottomrule
		\end{tabular}
		\caption{Simulation time events.}
		\label{chap6:Tabletimeevents}
	\end{table}
		
		
	
\subsection{PnP control design}
As described in Section \ref{sec:PnPdesfen}, the first step in control design consists in applying hKR for obtaining an equivalent ImG with a load-connected topology. Equivalent line impedances are obtained as in \eqref{eq:star} with $\omega_0=2\pi50$ rad/s.
The reduced network actually depends on the state of the switches in
the original network. As long as $SW_2$, $SW_3$ and $SW_4$ are open,
the network in Figure \ref{Fig6:Retemia} has a radial topology. In
particular, the equivalent impedance between nodes 3 and 5 is equal to
the sum of the impedances of edges $e_{6}$, $e_{11}$, $e_{13}$ and
$e_{14}$, irrespectively of the state of switch $SW_1$. On the contrary,
when switches $SW_2$, $SW_3$ and $SW_4$ are closed, the topology of
the hybrid Kron reduced network and its impedances change. Figure
\ref{fig:21nodes_KR} collects the reduced networks that arise
during the simulation. 
\begin{figure}
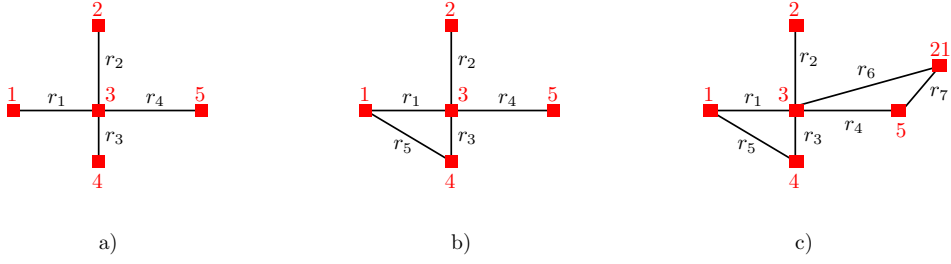

		\centering
		\scalebox{0.8}{\begin{pgfpicture}{0cm}{0cm}{465pt}{126pt}
\pgfsetxvec{\pgfpoint{1pt}{0pt}}
\pgfsetyvec{\pgfpoint{0pt}{1pt}}
\pgfsetroundjoin 
\pgfsetroundcap
\pgftranslateto{\pgfxy(0,126)}
\begin{pgfmagnify}{1}{-1}
\definecolor{layer0}{rgb}{0.0,0.0,0.0}
\definecolor{layer1}{rgb}{0.0,0.0,0.5}
\definecolor{layer2}{rgb}{1.0,0.0,0.0}
\definecolor{layer3}{rgb}{0.0,0.5,0.5}
\definecolor{layer4}{rgb}{1.0,0.78,0.0}
\definecolor{layer5}{rgb}{0.5,1.0,0.0}
\definecolor{layer6}{rgb}{0.0,1.0,1.0}
\definecolor{layer7}{rgb}{0.0,0.5,0.0}
\definecolor{layer8}{rgb}{0.6,0.8,0.2}
\definecolor{layer9}{rgb}{1.0,0.08,0.58}
\definecolor{layer10}{rgb}{0.71,0.61,0.05}
\definecolor{layer11}{rgb}{0.0,0.5,1.0}
\definecolor{layer12}{rgb}{0.88,0.88,0.88}
\definecolor{layer13}{rgb}{0.64,0.64,0.64}
\definecolor{layer14}{rgb}{0.37,0.37,0.37}
\definecolor{layer15}{rgb}{0.0,0.0,0.0}
\color{layer0}
\pgfsetlinewidth{0.8pt}
\pgfline{\pgfxy(9.0,57.0)}{\pgfxy(97.0,57.0)}
\pgfline{\pgfxy(49.0,17.0)}{\pgfxy(49.0,81.0)}
\begin{pgfmagnify}{1}{-1}
\pgfputat{\pgfxy(52,-31)}{\pgfbox[left,top]{$r_2$}}
\end{pgfmagnify}
\begin{pgfmagnify}{1}{-1}
\pgfputat{\pgfxy(52,-67)}{\pgfbox[left,top]{$r_3$}}
\end{pgfmagnify}
\begin{pgfmagnify}{1}{-1}
\pgfputat{\pgfxy(71,-49)}{\pgfbox[left,top]{$r_4$}}
\end{pgfmagnify}
\begin{pgfmagnify}{1}{-1}
\pgfputat{\pgfxy(25,-49)}{\pgfbox[left,top]{$r_1$}}
\end{pgfmagnify}
\begin{pgfmagnify}{1}{-1}
\pgfputat{\pgfxy(49,-115)}{\pgfbox[left,top]{a)}}
\end{pgfmagnify}
\begin{pgfmagnify}{1}{-1}
\pgfputat{\pgfxy(214,-115)}{\pgfbox[left,top]{b)}}
\end{pgfmagnify}
\begin{pgfmagnify}{1}{-1}
\pgfputat{\pgfxy(217,-67)}{\pgfbox[left,top]{$r_3$}}
\end{pgfmagnify}
\pgfline{\pgfxy(174.0,57.0)}{\pgfxy(262.0,57.0)}
\pgfline{\pgfxy(214.0,17.0)}{\pgfxy(214.0,81.0)}
\begin{pgfmagnify}{1}{-1}
\pgfputat{\pgfxy(217,-31)}{\pgfbox[left,top]{$r_2$}}
\end{pgfmagnify}
\begin{pgfmagnify}{1}{-1}
\pgfputat{\pgfxy(191,-49)}{\pgfbox[left,top]{$r_1$}}
\end{pgfmagnify}
\begin{pgfmagnify}{1}{-1}
\pgfputat{\pgfxy(236,-49)}{\pgfbox[left,top]{$r_4$}}
\end{pgfmagnify}
\pgfline{\pgfxy(174.0,57.0)}{\pgfxy(214.0,81.0)}
\begin{pgfmagnify}{1}{-1}
\pgfputat{\pgfxy(187,-71)}{\pgfbox[left,top]{$r_5$}}
\end{pgfmagnify}
\pgfline{\pgfxy(335.0,57.0)}{\pgfxy(375.0,81.0)}
\begin{pgfmagnify}{1}{-1}
\pgfputat{\pgfxy(348,-71)}{\pgfbox[left,top]{$r_5$}}
\end{pgfmagnify}
\begin{pgfmagnify}{1}{-1}
\pgfputat{\pgfxy(379,-67)}{\pgfbox[left,top]{$r_3$}}
\end{pgfmagnify}
\pgfline{\pgfxy(335.0,57.0)}{\pgfxy(423.0,57.0)}
\pgfline{\pgfxy(375.0,17.0)}{\pgfxy(375.0,81.0)}
\begin{pgfmagnify}{1}{-1}
\pgfputat{\pgfxy(377,-30)}{\pgfbox[left,top]{$r_2$}}
\end{pgfmagnify}
\begin{pgfmagnify}{1}{-1}
\pgfputat{\pgfxy(351,-49)}{\pgfbox[left,top]{$r_1$}}
\end{pgfmagnify}
\begin{pgfmagnify}{1}{-1}
\pgfputat{\pgfxy(375,-115)}{\pgfbox[left,top]{c)}}
\end{pgfmagnify}
\begin{pgfmagnify}{1}{-1}
\pgfputat{\pgfxy(398,-62)}{\pgfbox[left,top]{$r_4$}}
\end{pgfmagnify}
\pgfline{\pgfxy(375.0,55.0)}{\pgfxy(443.0,36.0)}
\pgfline{\pgfxy(425.0,57.0)}{\pgfxy(441.0,38.0)}
\begin{pgfmagnify}{1}{-1}
\pgfputat{\pgfxy(438,-46)}{\pgfbox[left,top]{$r_7$}}
\end{pgfmagnify}
\begin{pgfmagnify}{1}{-1}
\pgfputat{\pgfxy(404,-36)}{\pgfbox[left,top]{$r_6$}}
\end{pgfmagnify}
\color{layer2}
\pgfmoveto{\pgfxy(46,78)}
\pgflineto{\pgfxy(52,78)}
\pgflineto{\pgfxy(52,84)}
\pgflineto{\pgfxy(46,84)}
\pgfclosepath 
\pgffill 
\begin{pgfmagnify}{1}{-1}
\pgfputat{\pgfxy(46,-6)}{\pgfbox[left,top]{$2$}}
\end{pgfmagnify}
\begin{pgfmagnify}{1}{-1}
\pgfputat{\pgfxy(217,-46)}{\pgfbox[left,top]{$3$}}
\end{pgfmagnify}
\pgfmoveto{\pgfxy(259,54)}
\pgflineto{\pgfxy(265,54)}
\pgflineto{\pgfxy(265,60)}
\pgflineto{\pgfxy(259,60)}
\pgfclosepath 
\pgffill 
\begin{pgfmagnify}{1}{-1}
\pgfputat{\pgfxy(211,-87)}{\pgfbox[left,top]{$4$}}
\end{pgfmagnify}
\begin{pgfmagnify}{1}{-1}
\pgfputat{\pgfxy(94,-46)}{\pgfbox[left,top]{$5$}}
\end{pgfmagnify}
\pgfmoveto{\pgfxy(46,54)}
\pgflineto{\pgfxy(52,54)}
\pgflineto{\pgfxy(52,60)}
\pgflineto{\pgfxy(46,60)}
\pgfclosepath 
\pgffill 
\pgfmoveto{\pgfxy(94,54)}
\pgflineto{\pgfxy(100,54)}
\pgflineto{\pgfxy(100,60)}
\pgflineto{\pgfxy(94,60)}
\pgfclosepath 
\pgffill 
\pgfmoveto{\pgfxy(211,54)}
\pgflineto{\pgfxy(217,54)}
\pgflineto{\pgfxy(217,60)}
\pgflineto{\pgfxy(211,60)}
\pgfclosepath 
\pgffill 
\pgfmoveto{\pgfxy(211,14)}
\pgflineto{\pgfxy(217,14)}
\pgflineto{\pgfxy(217,20)}
\pgflineto{\pgfxy(211,20)}
\pgfclosepath 
\pgffill 
\begin{pgfmagnify}{1}{-1}
\pgfputat{\pgfxy(259,-46)}{\pgfbox[left,top]{$5$}}
\end{pgfmagnify}
\begin{pgfmagnify}{1}{-1}
\pgfputat{\pgfxy(211,-6)}{\pgfbox[left,top]{$2$}}
\end{pgfmagnify}
\pgfmoveto{\pgfxy(171,54)}
\pgflineto{\pgfxy(177,54)}
\pgflineto{\pgfxy(177,60)}
\pgflineto{\pgfxy(171,60)}
\pgfclosepath 
\pgffill 
\begin{pgfmagnify}{1}{-1}
\pgfputat{\pgfxy(171,-46)}{\pgfbox[left,top]{$1$}}
\end{pgfmagnify}
\begin{pgfmagnify}{1}{-1}
\pgfputat{\pgfxy(46,-87)}{\pgfbox[left,top]{$4$}}
\end{pgfmagnify}
\pgfmoveto{\pgfxy(46,14)}
\pgflineto{\pgfxy(52,14)}
\pgflineto{\pgfxy(52,20)}
\pgflineto{\pgfxy(46,20)}
\pgfclosepath 
\pgffill 
\pgfmoveto{\pgfxy(372,54)}
\pgflineto{\pgfxy(379,54)}
\pgflineto{\pgfxy(379,60)}
\pgflineto{\pgfxy(372,60)}
\pgfclosepath 
\pgffill 
\pgfmoveto{\pgfxy(372,14)}
\pgflineto{\pgfxy(379,14)}
\pgflineto{\pgfxy(379,20)}
\pgflineto{\pgfxy(372,20)}
\pgfclosepath 
\pgffill 
\begin{pgfmagnify}{1}{-1}
\pgfputat{\pgfxy(422,-63)}{\pgfbox[left,top]{$5$}}
\end{pgfmagnify}
\begin{pgfmagnify}{1}{-1}
\pgfputat{\pgfxy(372,-6)}{\pgfbox[left,top]{$2$}}
\end{pgfmagnify}
\pgfmoveto{\pgfxy(332,54)}
\pgflineto{\pgfxy(339,54)}
\pgflineto{\pgfxy(339,60)}
\pgflineto{\pgfxy(332,60)}
\pgfclosepath 
\pgffill 
\begin{pgfmagnify}{1}{-1}
\pgfputat{\pgfxy(332,-46)}{\pgfbox[left,top]{$1$}}
\end{pgfmagnify}
\pgfmoveto{\pgfxy(372,78)}
\pgflineto{\pgfxy(379,78)}
\pgflineto{\pgfxy(379,84)}
\pgflineto{\pgfxy(372,84)}
\pgfclosepath 
\pgffill 
\begin{pgfmagnify}{1}{-1}
\pgfputat{\pgfxy(367,-46)}{\pgfbox[left,top]{$3$}}
\end{pgfmagnify}
\pgfmoveto{\pgfxy(420,54)}
\pgflineto{\pgfxy(427,54)}
\pgflineto{\pgfxy(427,60)}
\pgflineto{\pgfxy(420,60)}
\pgfclosepath 
\pgffill 
\begin{pgfmagnify}{1}{-1}
\pgfputat{\pgfxy(372,-87)}{\pgfbox[left,top]{$4$}}
\end{pgfmagnify}
\begin{pgfmagnify}{1}{-1}
\pgfputat{\pgfxy(438,-25)}{\pgfbox[left,top]{$21$}}
\end{pgfmagnify}
\pgfmoveto{\pgfxy(439,33)}
\pgflineto{\pgfxy(446,33)}
\pgflineto{\pgfxy(446,39)}
\pgflineto{\pgfxy(439,39)}
\pgfclosepath 
\pgffill 
\begin{pgfmagnify}{1}{-1}
\pgfputat{\pgfxy(52,-46)}{\pgfbox[left,top]{$3$}}
\end{pgfmagnify}
\begin{pgfmagnify}{1}{-1}
\pgfputat{\pgfxy(6,-46)}{\pgfbox[left,top]{$1$}}
\end{pgfmagnify}
\pgfmoveto{\pgfxy(6,54)}
\pgflineto{\pgfxy(12,54)}
\pgflineto{\pgfxy(12,60)}
\pgflineto{\pgfxy(6,60)}
\pgfclosepath 
\pgffill 
\pgfmoveto{\pgfxy(211,78)}
\pgflineto{\pgfxy(217,78)}
\pgflineto{\pgfxy(217,84)}
\pgflineto{\pgfxy(211,84)}
\pgfclosepath 
\pgffill 
\end{pgfmagnify}
\end{pgfpicture}}
\caption{Hybrid Kron reduced networks. Figures a)-b) and c) show the
  equivalent network for $t\leq 6.5$ s, $6.5\leq t\leq 9.5$ s and $t\geq 9.5$ s, respectively.}
		\label{fig:21nodes_KR}
	\end{figure}
In particular, Figure \ref{fig:21nodes_KR}a
holds when all the four switches are open, or when only $SW_1$ is
closed. The network in Figure \ref{fig:21nodes_KR}b refers to the
case when $SW_2$ and $SW_3$ become closed. The diagram in
Figure \ref{fig:21nodes_KR}c, instead, holds when all the four switches are closed.
The resistances and inductances of the Kron reduced circuits are
listed in Tables \ref{Appendix:Table21EqParam1}, \ref{Appendix:Table21EqParam2}, \ref{Appendix:Table21EqParam3} and \ref{Appendix:Table21EqParam4} in Appendix \ref{Appendix:label}. Notably, we remark that, for the three reduced networks, all the equivalent resistances and inductances are positive.
At instant $t = 5$ s, $SW_1$ closes, but the hybrid Kron reduced network does not
change, and no redesign of the controllers is needed. Indeed, the only effect of load currents in 16, 17, 18 is to change the term $\tilde I_b(s)$ in \eqref{hKRa}, which is lumped, anyway, in the disturbance loads attached to the PCCs of DGUs. At instant $t =
6.5 \,\mbox{s}$, the equivalent impedances between nodes 1, 3 and 4 change: the
controllers of DGUs 1, 3 and 4 must be redesigned (all the other
controllers do not change). At instant $t = 8$ s, the equivalent
impedance between boundary nodes 3 and 5 changes: the corresponding
DGUs must update their controllers. At $t = 9.5$ s, DGU 21 is connected
to the microgrid: controllers of DGUs 3 e 5 must be redesigned
again.\\
	The proposed controllers effectively stabilize the microgrid:
        this is illustrated by Figures \ref{Fig6:Matlab1} -\ref{Fig6:Matlab4}. These figures show, for every
        configuration of the switches, the singular values of the
        closed loop QSL microgrid, as well as its eigenvalues, with
        and without couplings. In particular, Figures \ref{fig:Matlab1a}, \ref{fig:Matlab2a}, \ref{fig:Matlab3a} and \ref{fig:Matlab4a} show that all the eigenvalues are on the left half-plane.

        \begin{figure}[!htb]
                      \centering
                      \begin{subfigure}[!htb]{0.48\textwidth}
                        \centering
                        \includegraphics[width=1\textwidth]{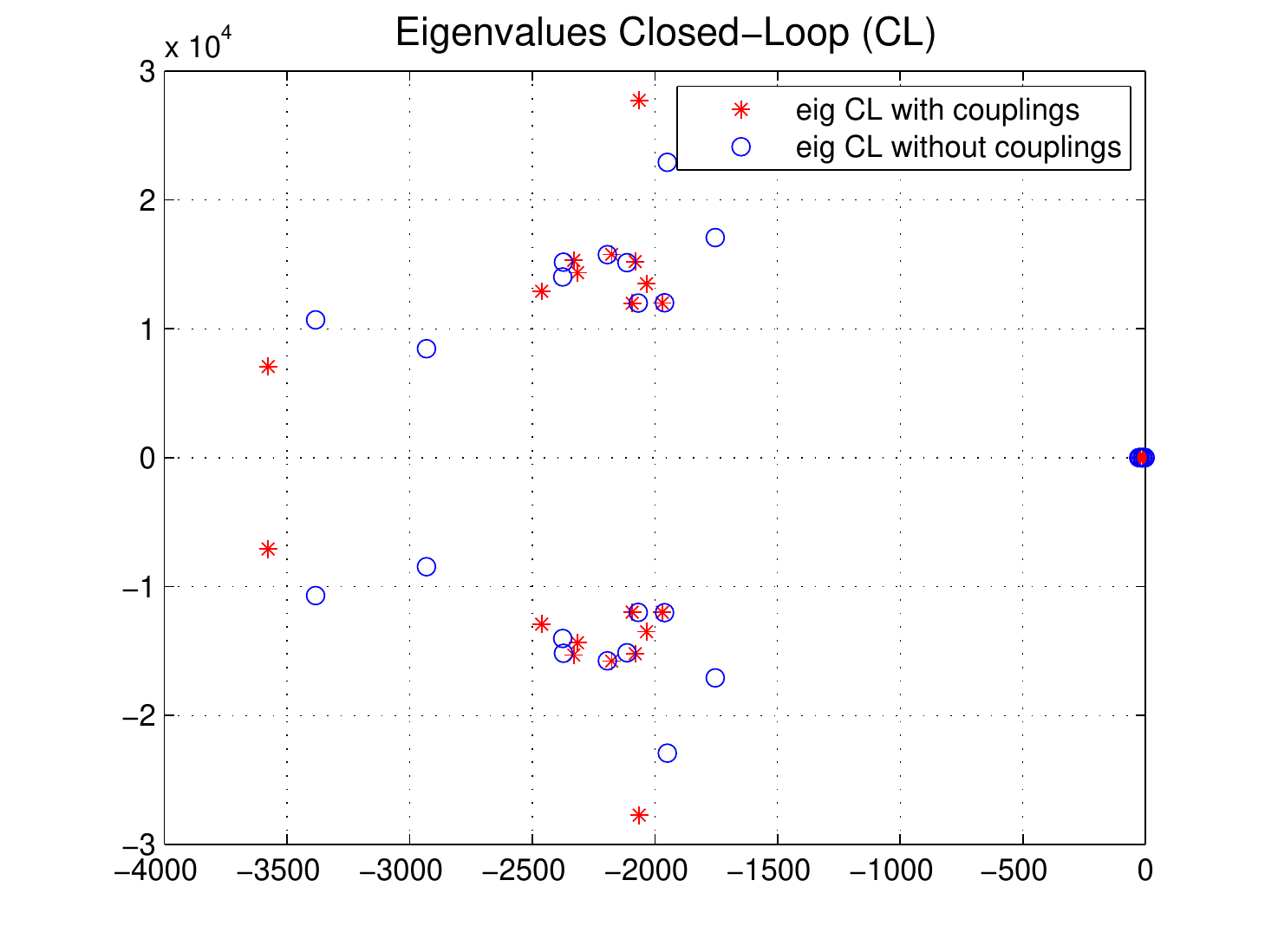}
                        \caption{Eigenvalues of the closed-loop (CL) QSL microgrid with (red) and without (blue) couplings.}
                        \label{fig:Matlab1a}
                      \end{subfigure}
                      \begin{subfigure}[!htb]{0.48\textwidth}
                        \centering
                        \includegraphics[width=1\textwidth]{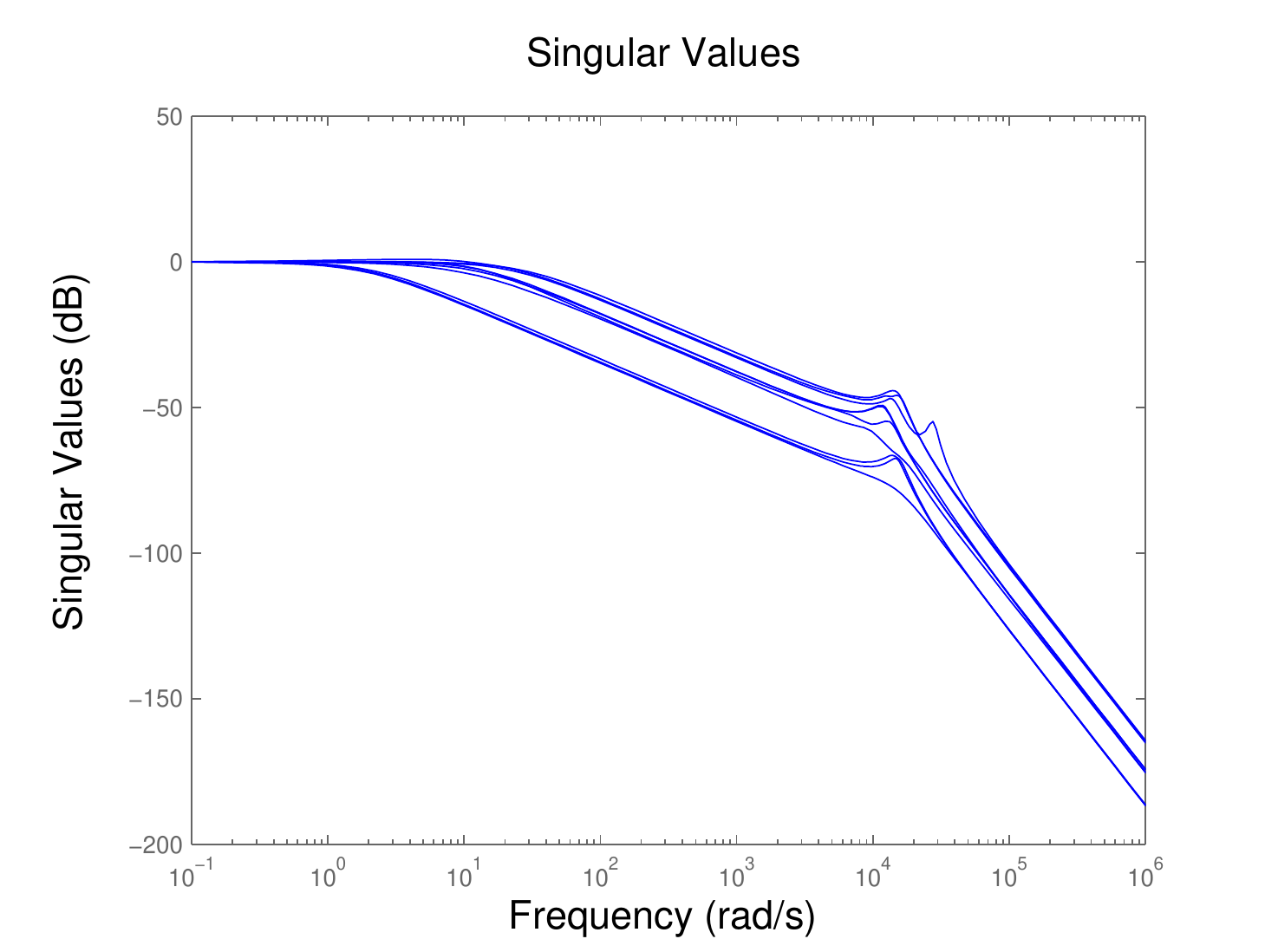}
                        \caption{Singular values of the closed-loop (CL) QSL microgrid.}
                        \label{fig:Matlab1b}
                      \end{subfigure}
		\caption{Eigenvalues and singular values of the QSL microgrid when all the switches are open, or only $SW_1$ is closed.}
		\label{Fig6:Matlab1}
                    \end{figure}

      \begin{figure}[!htb]
                      \centering
                      \begin{subfigure}[!htb]{0.48\textwidth}
                        \centering
                        \includegraphics[width=1\textwidth]{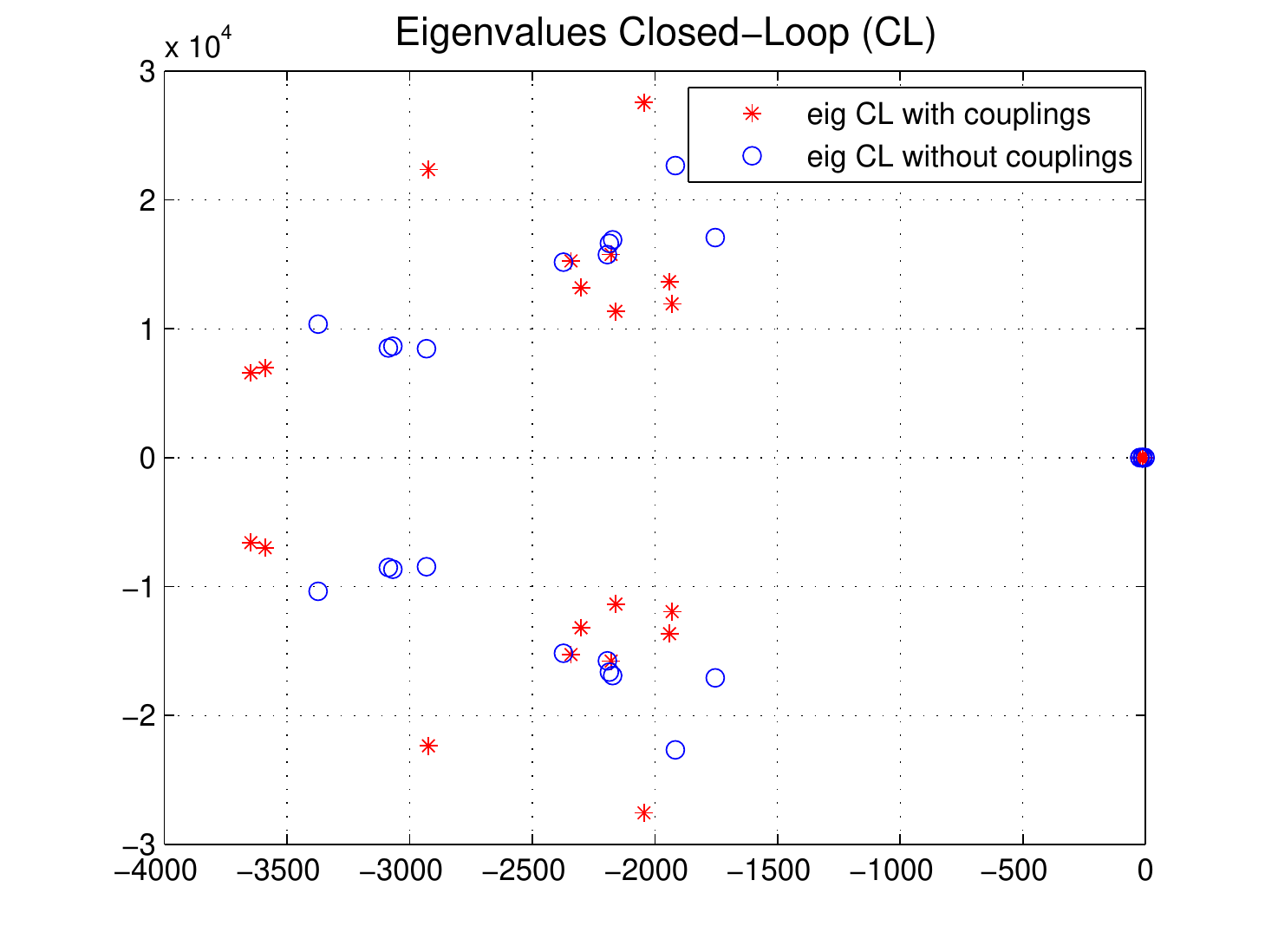}
                        \caption{Eigenvalues of the closed-loop (CL) QSL microgrid with (red) and without (blue) couplings.}
                        \label{fig:Matlab2a}
                      \end{subfigure}
                      \begin{subfigure}[!htb]{0.48\textwidth}
                        \centering
                        \includegraphics[width=1\textwidth]{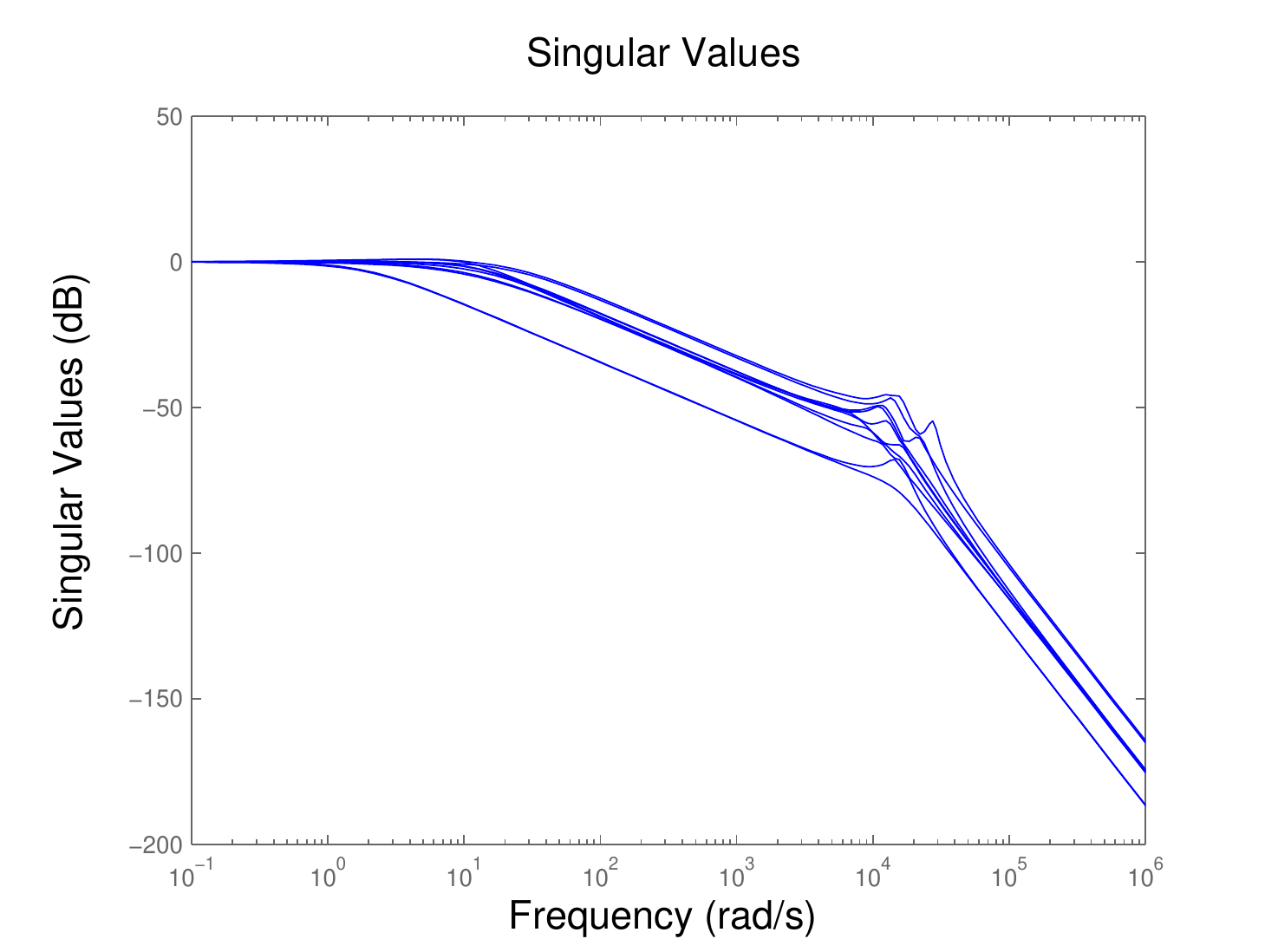}
                        \caption{Singular values of the closed-loop (CL) QSL microgrid.}
                        \label{fig:Matlab2b}
                      \end{subfigure}
		\caption{Eigenvalues and singular values with $SW_1$, $SW_2$ closed, and $SW_3$, $SW_4$ open.}
		\label{Fig6:Matlab2}
                    \end{figure}
 \begin{figure}[!htb]
                      \centering
       \begin{subfigure}[!htb]{0.48\textwidth}
                        \centering
                        \includegraphics[width=1\textwidth]{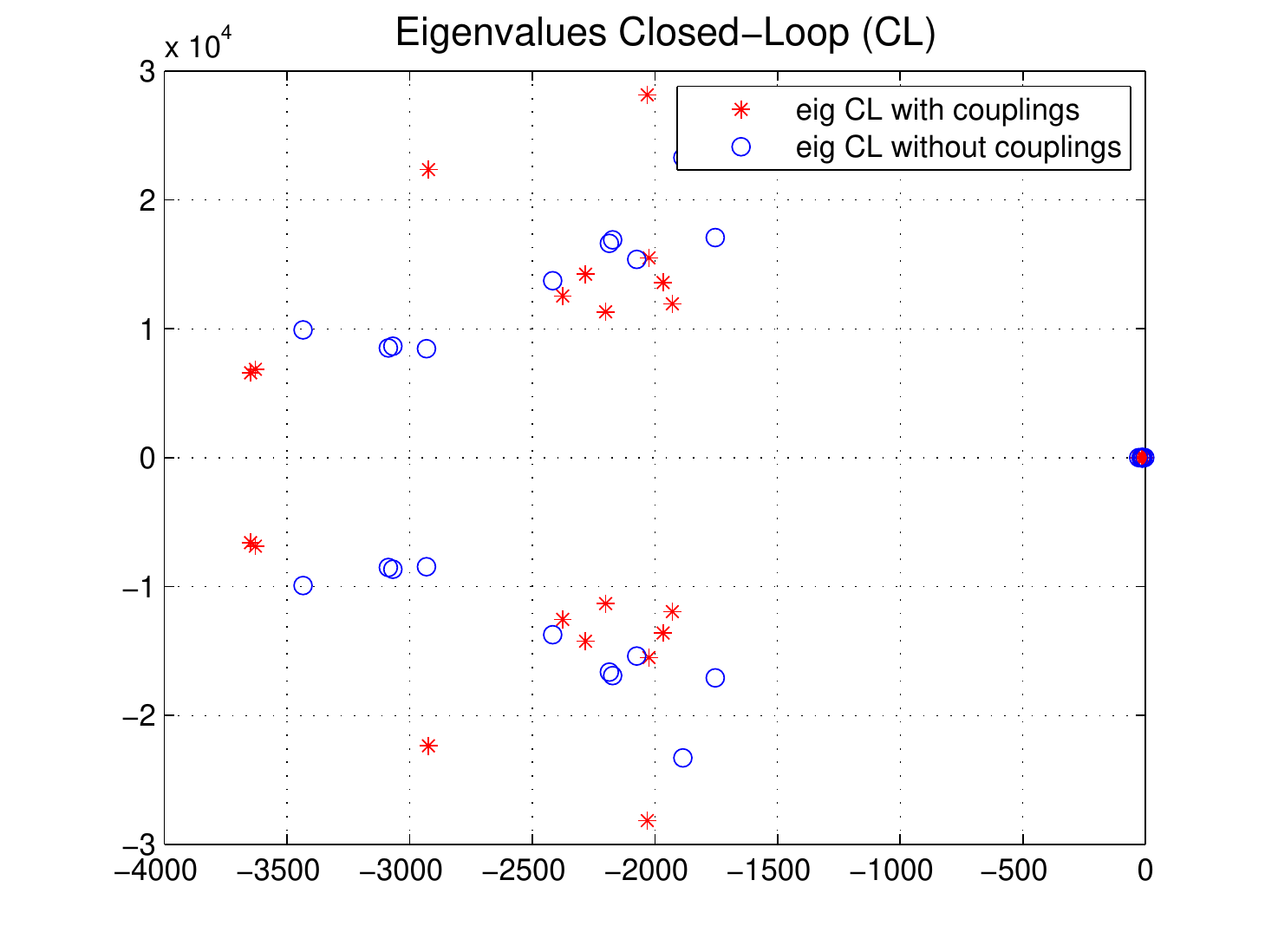}
                        \caption{Eigenvalues of the closed-loop (CL) QSL microgrid with (red) and without (blue) couplings.}
                        \label{fig:Matlab3a}
                      \end{subfigure}
                      \begin{subfigure}[!htb]{0.48\textwidth}
                        \centering
                        \includegraphics[width=1\textwidth]{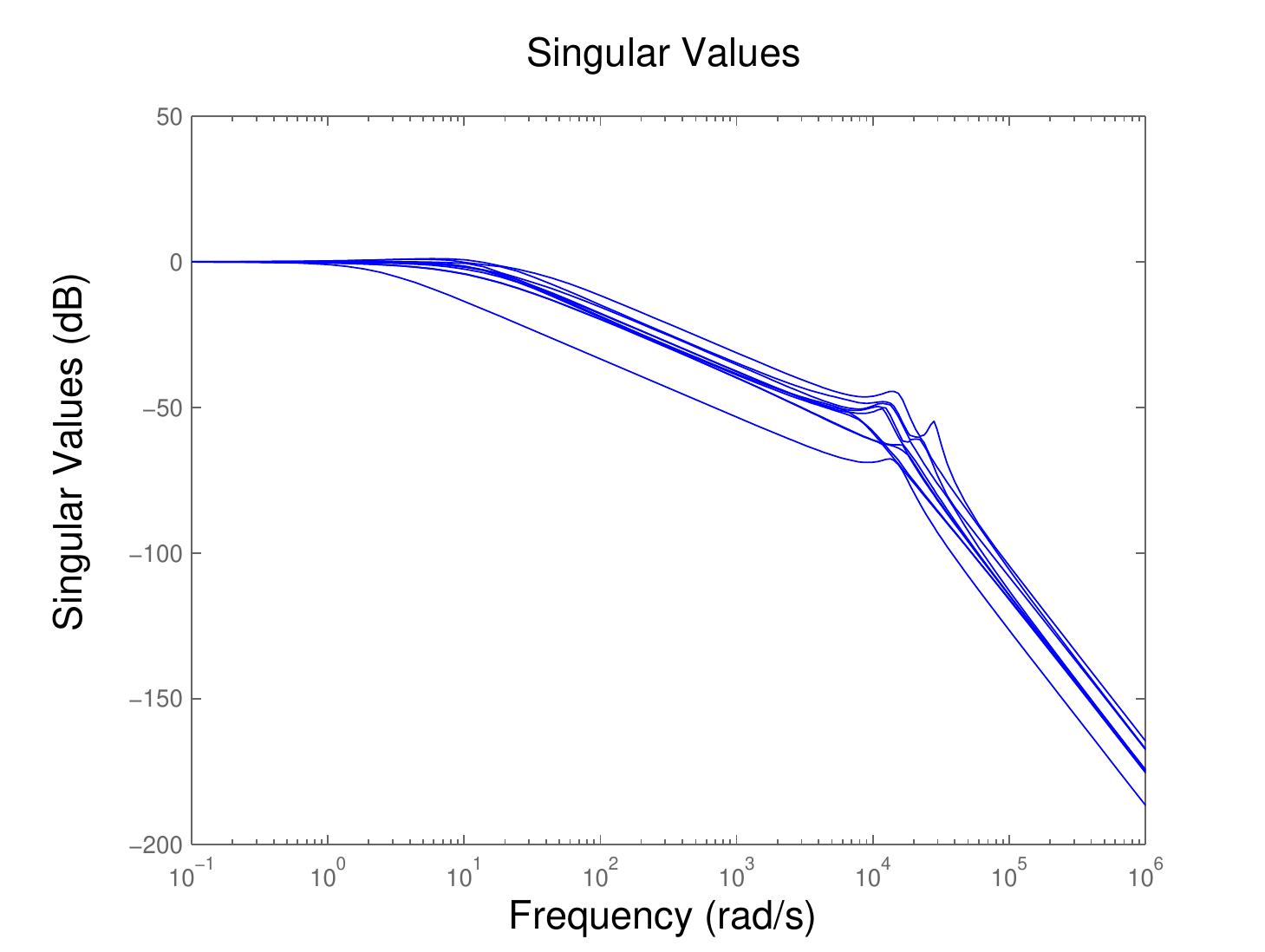}
                        \caption{Singular values of the closed-loop (CL) QSL microgrid.}
                        \label{fig:Matlab3b}
                      \end{subfigure}
		\caption{Eigenvalues and singular values when $SW_1$, $SW_2$, $SW_3$ are closed, and $SW_4$ open.}
		\label{Fig6:Matlab3}

\end{figure}
 \begin{figure}[!htb]
                      \centering
		     \begin{subfigure}[!htb]{0.48\textwidth}
                        \centering
                        \includegraphics[width=1\textwidth]{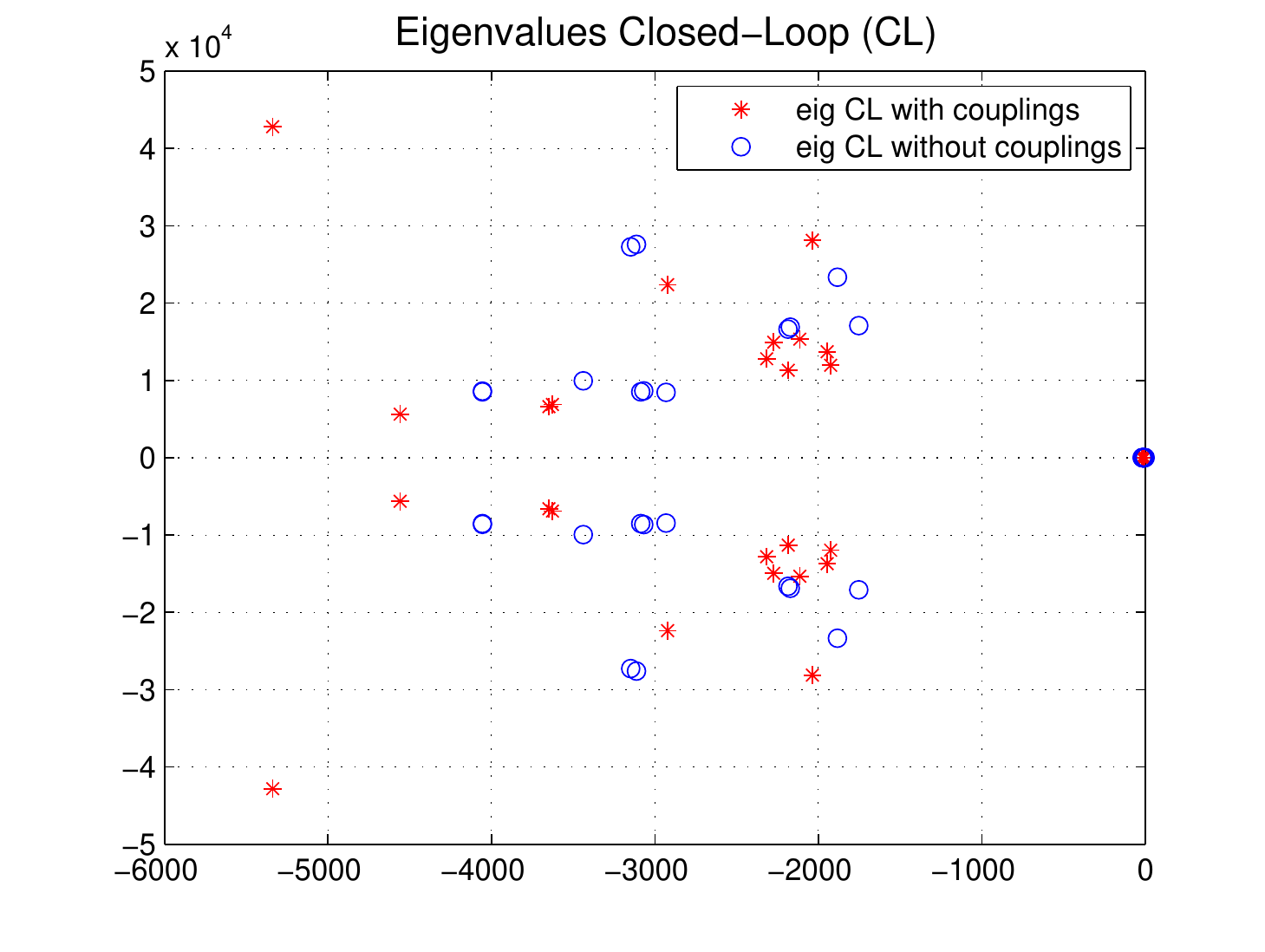}
                        \caption{Eigenvalues of the closed-loop (CL) QSL microgrid with (red) and without (blue) couplings.}
                        \label{fig:Matlab4a}
                      \end{subfigure}
                      \begin{subfigure}[!htb]{0.48\textwidth}
                        \centering
                        \includegraphics[width=1\textwidth]{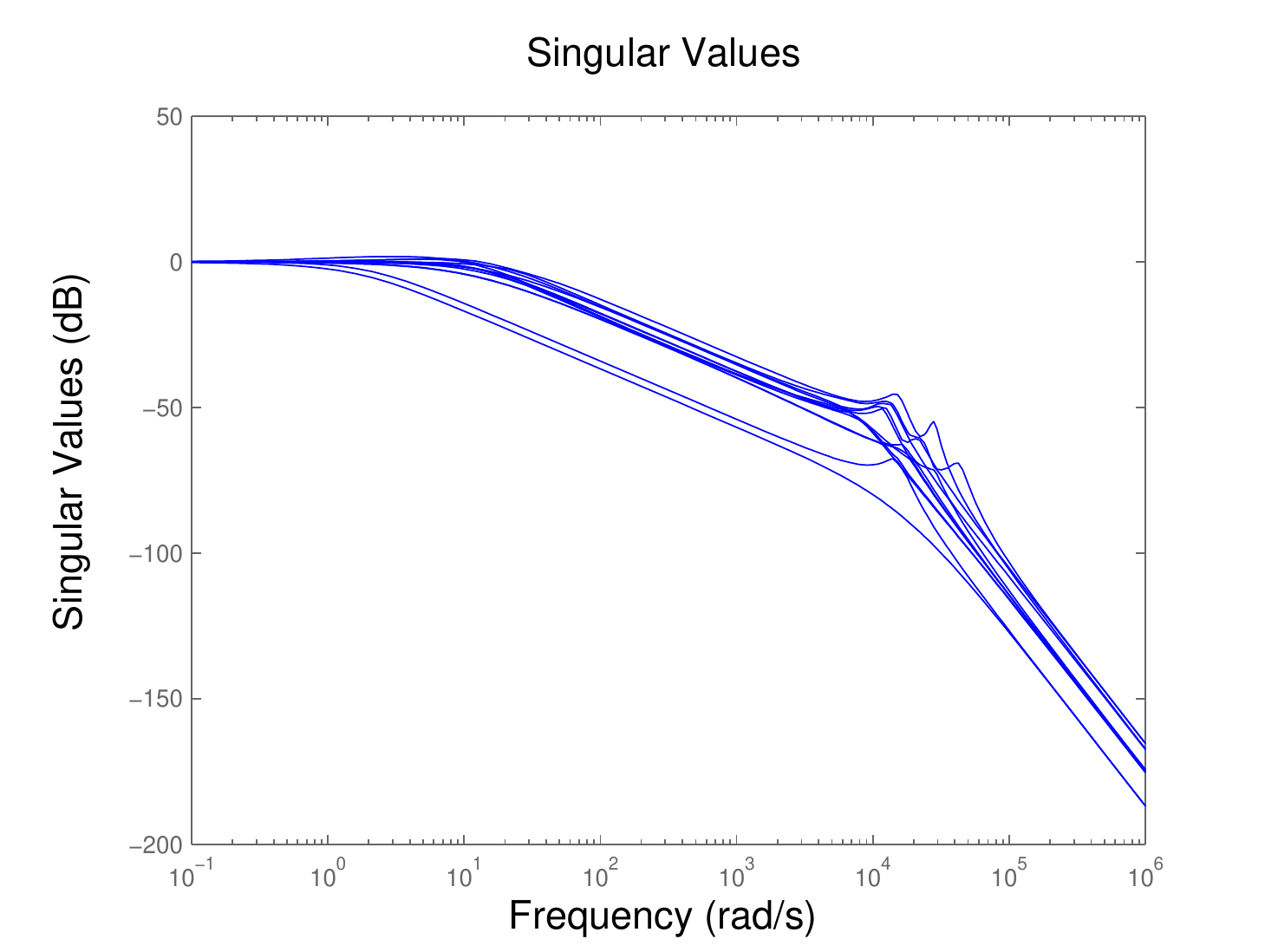}
                        \caption{Singular values of the closed-loop (CL) QSL microgrid.}
                        \label{fig:Matla4b}
                      \end{subfigure}
		\caption{Eigenvalues and singular values of the QSL microgrid when all the switches are closed.}
		\label{Fig6:Matlab4}

                    \end{figure}

\subsection{Simulation results}
    The reference signals for all the generation units are $V_d^{\mathrm{ref}} =
    \sqrt{2} \cdot 230\,\mbox{V}$ and $V_q^{\mathrm{ref}} = 0 \,\mbox{V}$. 
Figure \ref{fig:21nodes_performance} shows the Root Mean Square (RMS) voltage, frequency 
and Total Harmonic Distortion\footnote{See \cite{5154067} for a definition.} (THD) of phase $a$ at
the PCCs of the boundary nodes (i.e nodes 1, 2, 3, 4, 5 and 21), respectively. We highlight
that, in spite of all the variations of the ImG topology, PnP decentralized control ensures good tracking of
voltage references for all DGUs (see Figure
\ref{fig:21nodes_RMS}). We note that real-time switch between
different controllers has been implemented using a bumpless control
transfer scheme similar to the one used in classical PID
regulators \cite{aastrom2006advanced}. This guarantees control variables do not have sudden
variations at switching times. In our case, bumpless controllers are
effective in limiting voltage surges and dips to a
few volts when updates of controllers take place (for $t\geq 4$ s, the maximal deviation from the reference RMS voltage is of less than 20 volts). Figure \ref{fig:21nodes_f} shows
that the impact of the topology commutations is minor also on the frequency profiles. In fact, PnP controllers promptly restore the frequencies to the
nominal value, ensuring negligible variations (i.e. less than
$0.5\, \mathrm{Hz}$ when the highly inductive load is connected and less than $0.1\,\mathrm{Hz}$ when other events occur). Furthermore, from Figure \ref{fig:21nodes_THD}, we notice that THD values are
below the maximum limit ($5\%$) recommended in \cite{5154067}. \\
To evaluate the voltage imbalance at boundary nodes, we calculate the
ratio $V_N/V_P$ (expressed in $\%$), where $V_N$ and $V_P$ are the
magnitudes of the negative- and positive-sequence components of the
phase-to-phase voltage. The time evolution of this ratio is
represented in Figure \ref{fig:21nodes_Vimb}. We notice that it is
always below $0.5\%$ which is less than the maximum permissible value
$(3\%)$ defined by IEEE in \cite{5154067}. Finally, Figures \ref{fig:21nodes_ps} and \ref{fig:21nodes_qs} show respectively the active and
reactive power injected into the microgrid by the DGUs. We notice that
PnP controllers alone can not guarantee sharing of the reactive power
between the connected DGUs (see Figure \ref{fig:21nodes_qs}). This aim
can be achieved by means of higher level control schemes. However,
the fulfillment of any power flow requirements is outside the scope of
this work.

Overall, the fact that voltage and frequency stability is guaranteed
even for such a complicated network, proves that hKR is
a well suited method for extending the PnP scalable design to ImGs
with arbitrary topologies.
 \begin{figure}[htb]
                      \centering
                      \begin{subfigure}[htb]{0.48\textwidth}
                        \centering
                        \includegraphics[width=1.05\textwidth, height=150pt]{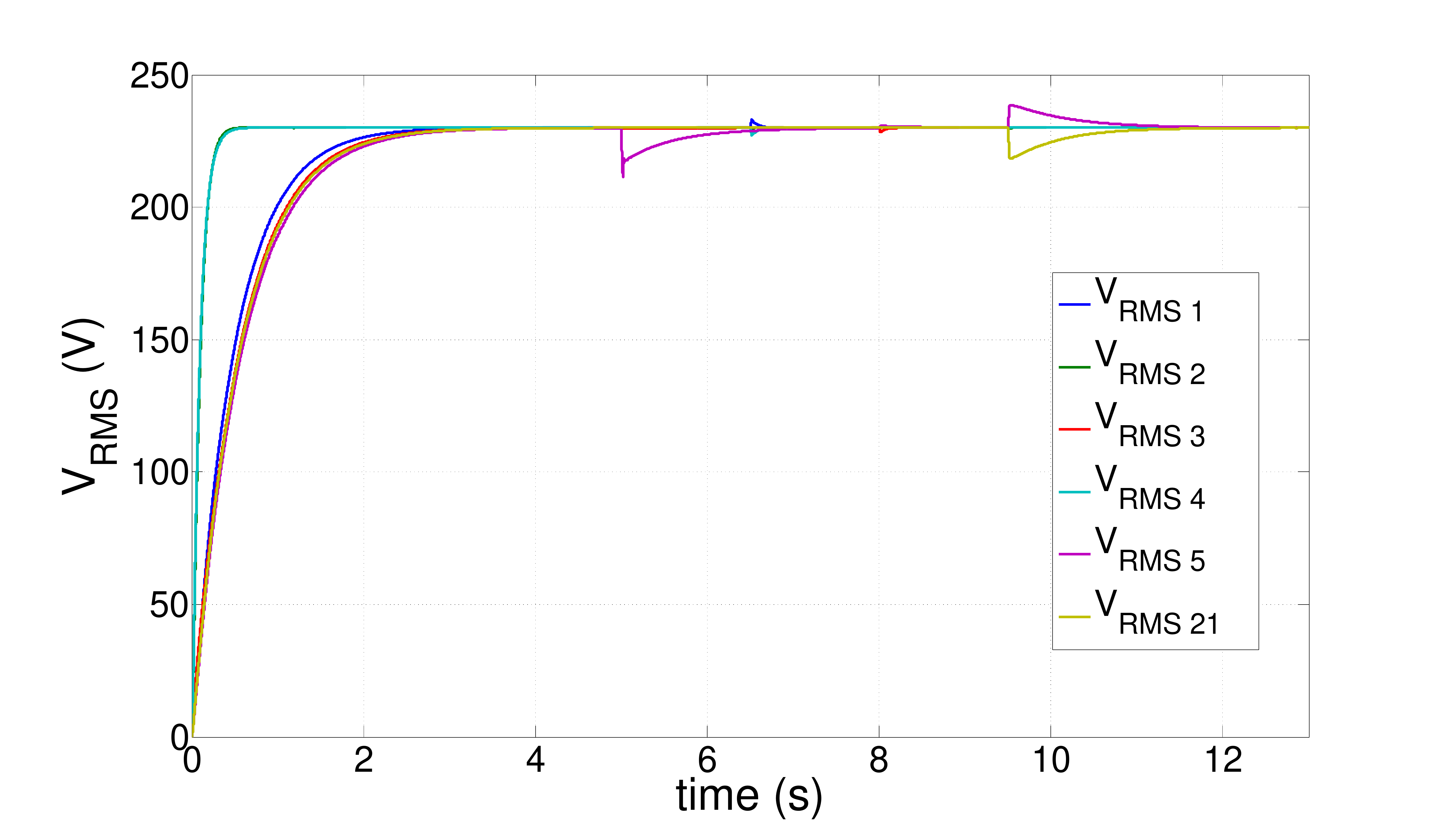}
                        \caption{RMS voltages at boundary nodes.}
                        \label{fig:21nodes_RMS}
                      \end{subfigure}
                      \begin{subfigure}[htb]{0.48\textwidth}
                        \centering
                        \includegraphics[width=1.05\textwidth, height=140pt]{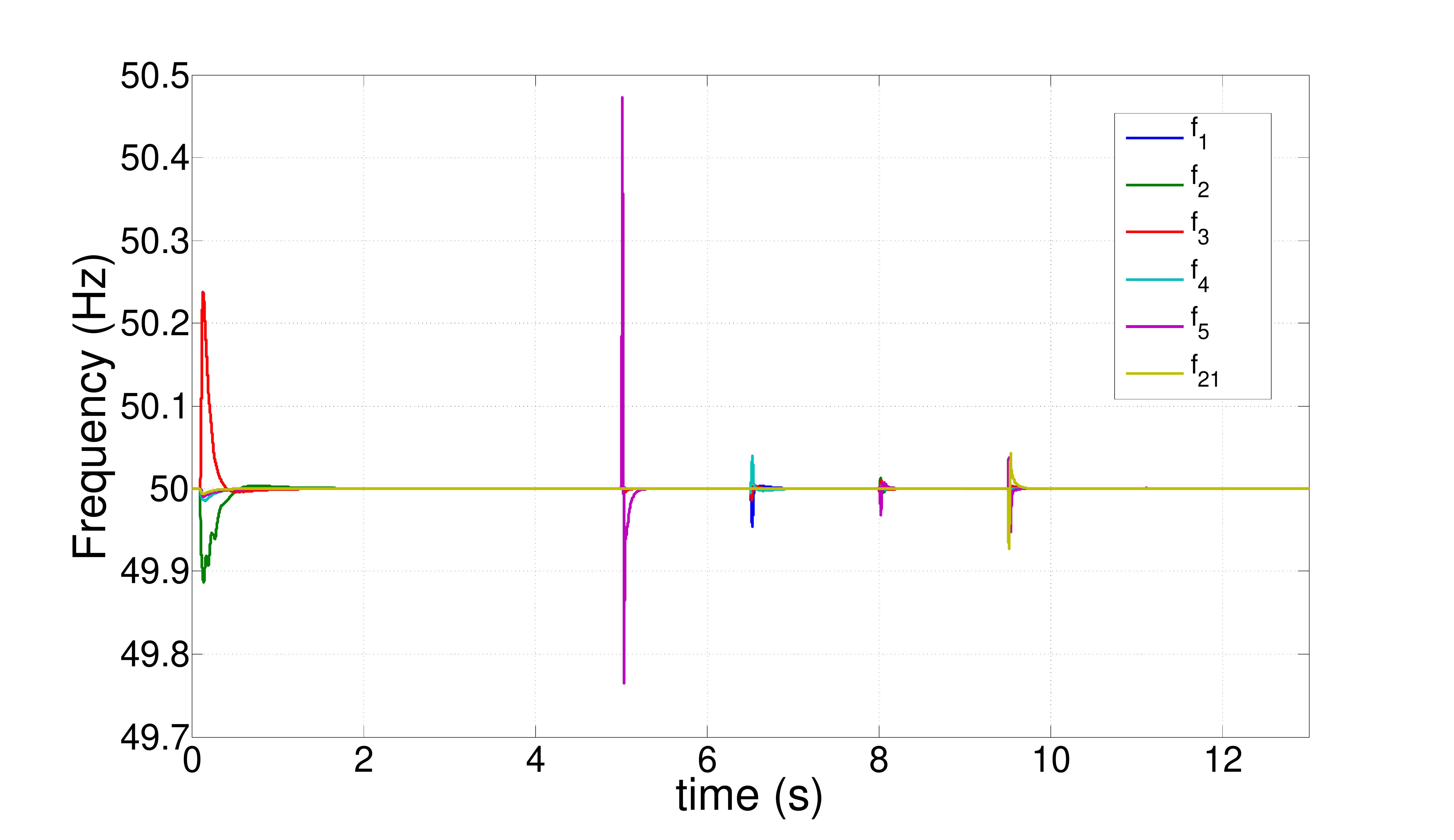}
                        \caption{Frequencies at boundary nodes.}
                        \label{fig:21nodes_f}
                      \end{subfigure}
                      \begin{subfigure}[htb]{0.48\textwidth}
                        \centering
                        \includegraphics[width=1.05\textwidth, height=140pt]{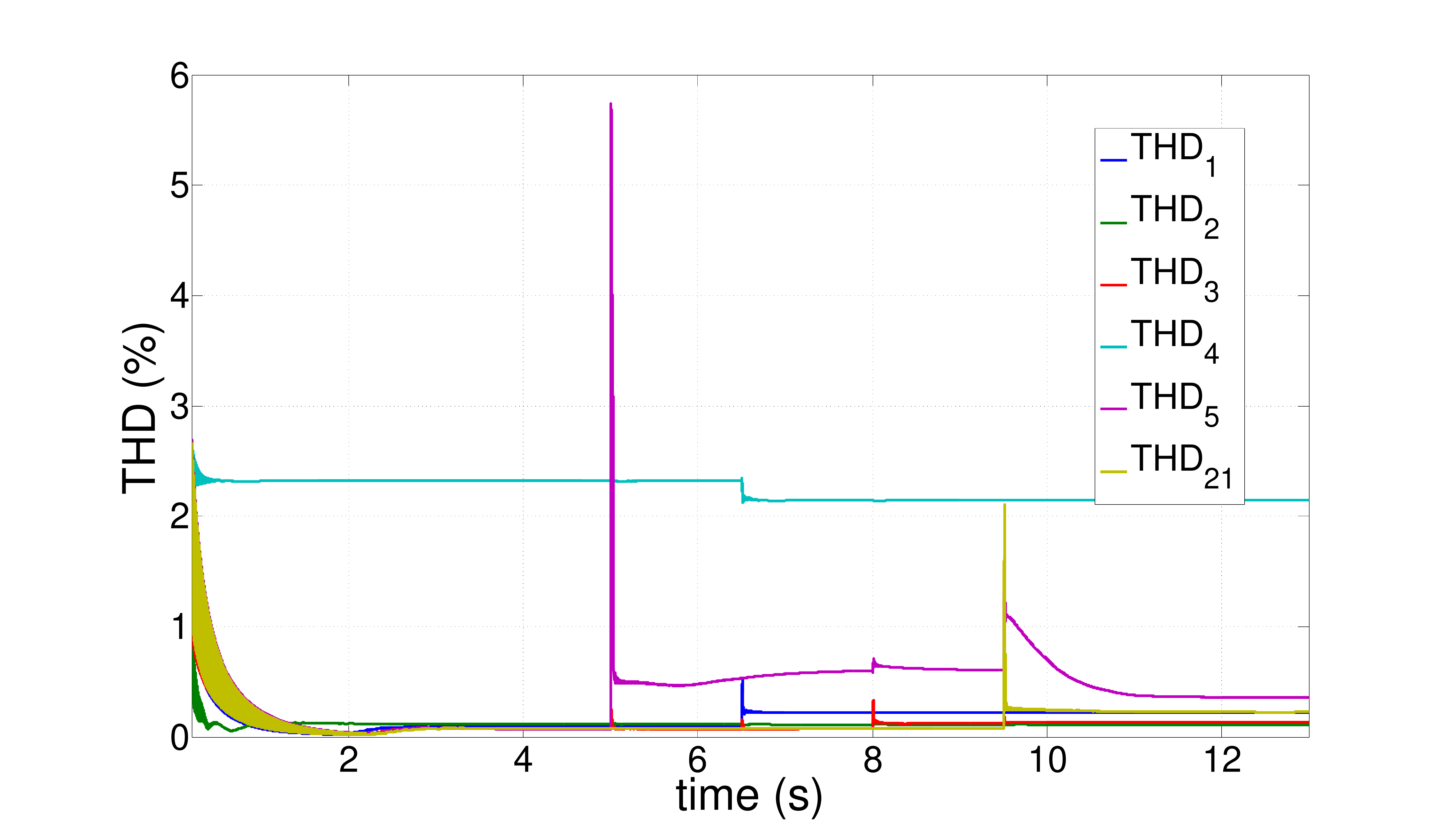}
                        \caption{THD at boundary nodes.}
                        \label{fig:21nodes_THD}
                      \end{subfigure}
                      \begin{subfigure}[htb]{0.48\textwidth}
                        \centering
                        \includegraphics[width=1.05\textwidth, height=140pt]{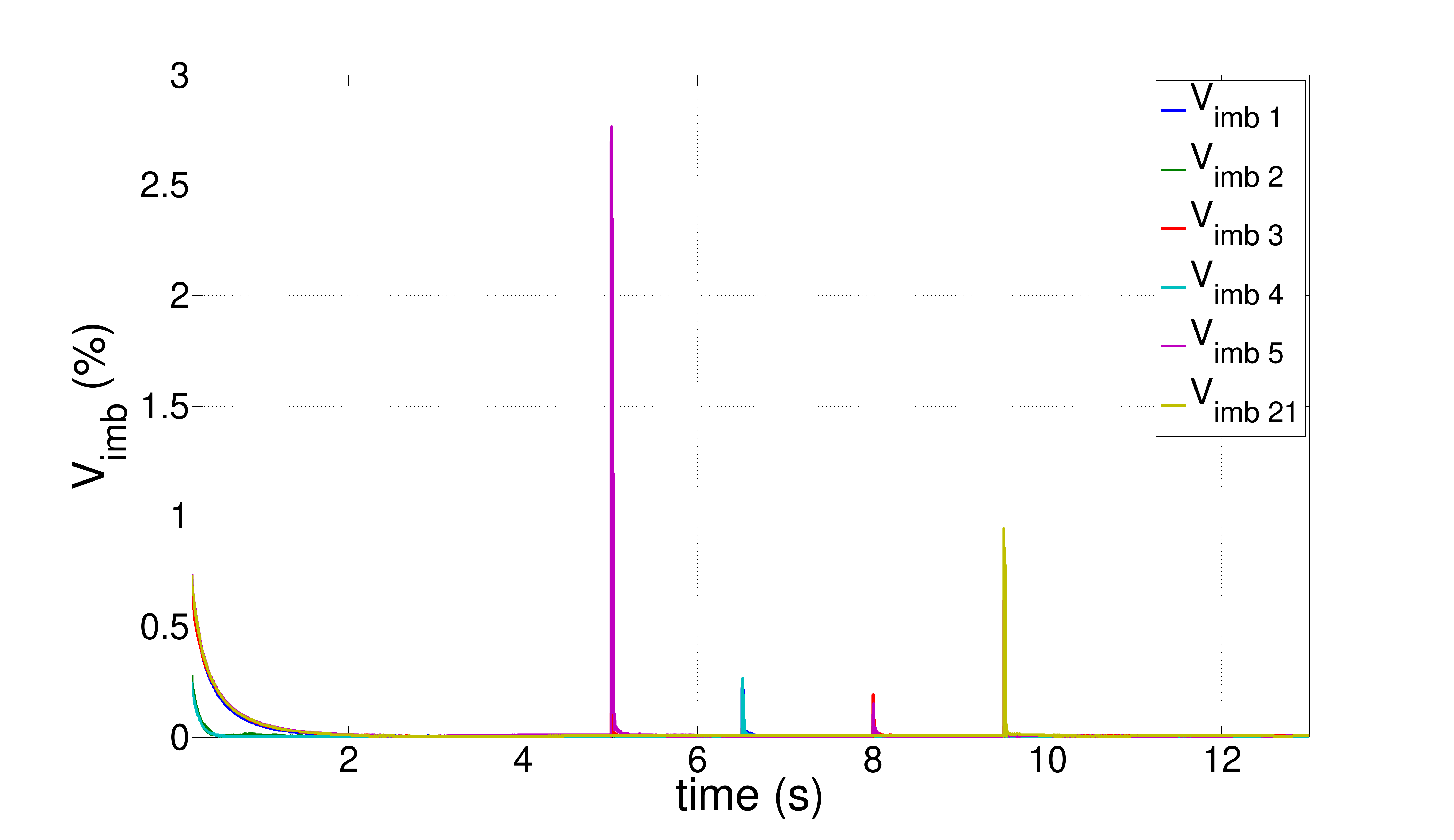}
                        \caption{Voltage imbalance ratio ($V_{N}/V_{P}$).}
                        \label{fig:21nodes_Vimb}
                      \end{subfigure}
                      \begin{subfigure}[htb]{0.48\textwidth}
                        \centering
                        \includegraphics[width=1.05\textwidth, height=140pt]{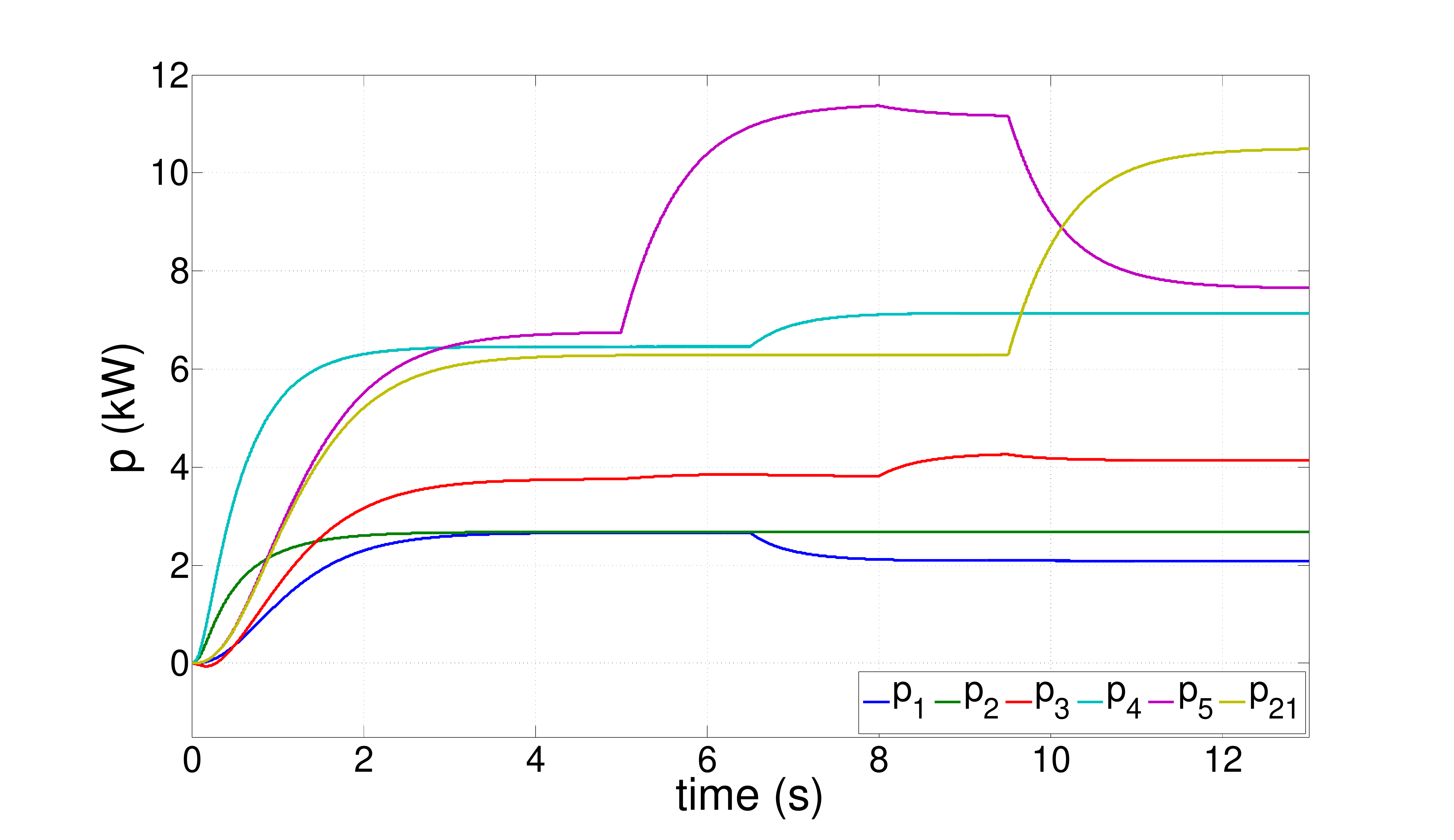}
                        \caption{Active power at PCC nodes.}
                        \label{fig:21nodes_ps}
                      \end{subfigure}
                      \begin{subfigure}[htb]{0.48\textwidth}
                        \centering
                        \includegraphics[width=1.05\textwidth, height=140pt]{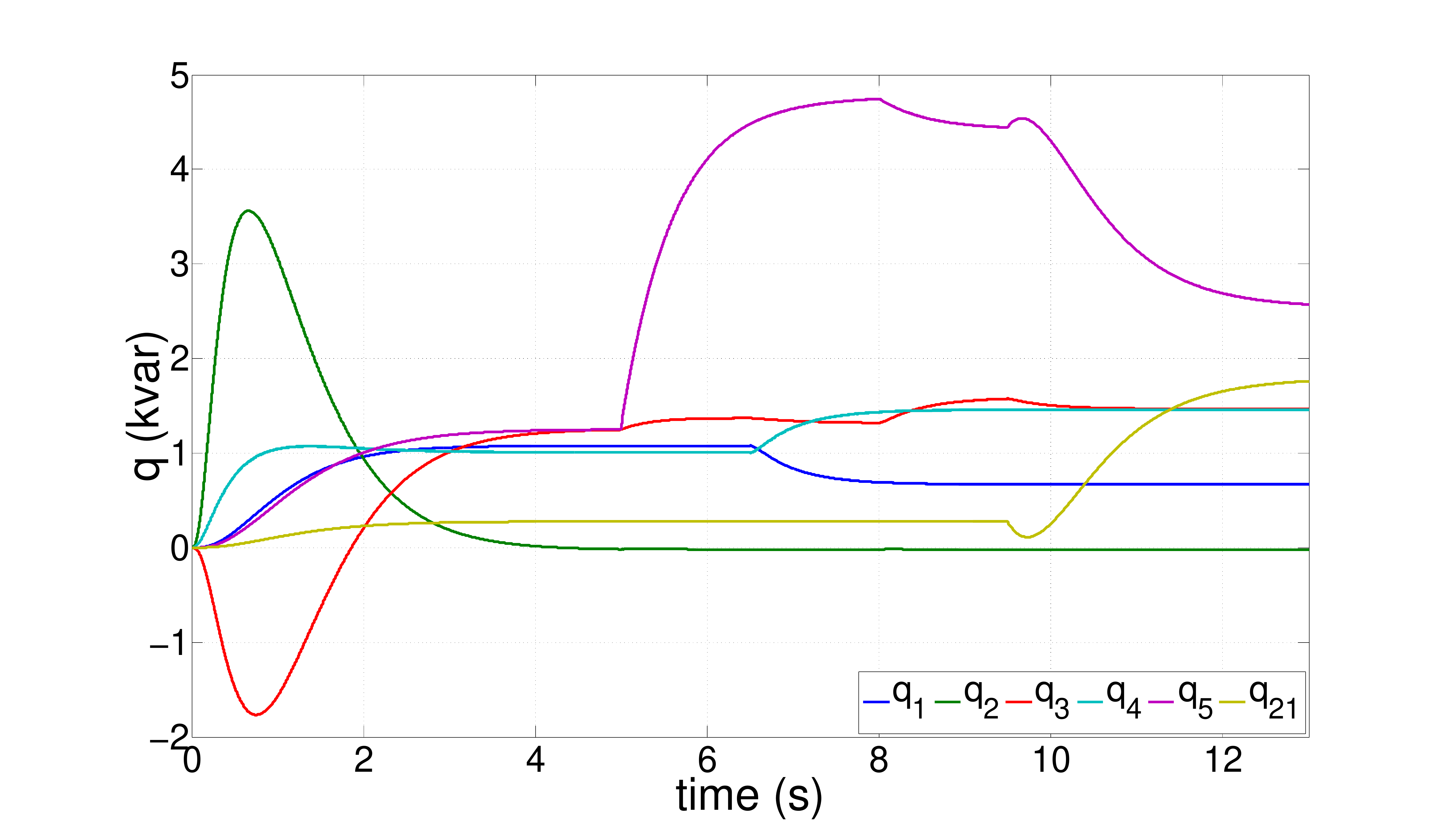}
                        \caption{Reactive power at PCC nodes.}
                        \label{fig:21nodes_qs}
                      \end{subfigure}
                      \caption{Performance of PnP control and AC
                        hybrid KR with a 21-bus network.}
                      \label{fig:21nodes_performance}
                    \end{figure}

   \section{Conclusions}
          \label{sec:conclusions}
          In this paper, we provided a method for extending the PnP control design presented in \cite{riverso2015plug} to AC ImGs with arbitrary topology. We introduced an approximate network reduction algorithm (hKR), based on KR and capable to represent exactly the asymptotic periodic behavior of voltages and currents in the ImG. 

As regards the future developments, besides addressing problems described in Remark \ref{rmk:new}, we aim to extend the hKR also to the case of DC ImGs equipped with PnP controllers \cite{tucci2015plugDC}.

     \clearpage
     \appendix
     \section{Electrical parameters of Examples 1 and 2, Section \ref{subsec:ne}}
\label{app:ne}
		\begin{table}[!h]
			\centering
			\begin{tabular}{ccccc}
					\toprule

					Edge & From node & To node & Resistance [$\Omega$] & Inductance [mH] \\
					\midrule
					$e_1$ & 1 & 4 & 0.1 & 2 \\
					$e_2$ & 2 & 4 & 0.2 & 7 \\
					$e_3$ & 3 & 4 & 1 & 10 \\
					\bottomrule
			\end{tabular}
			\caption{Parameters of the original networks of Examples 1 and 2, Section \ref{subsec:ne}.}
			\label{Appendix:3FTable:OriginalParam}
		\end{table}

	\begin{table}[!h]
			\centering
			\begin{tabular}{ccccc}
				\toprule
				Edge & From node & To node & Resistance [$\Omega$] & Inductance [mH] \\
				\midrule
				$e_{12}$ & 1 & 2 & 0.2746 & 10.354 \\
				$e_{13}$ & 1 & 3 & 1.4482 & 14.8132 \\
				$e_{23}$ & 2 & 3 & 3.9315 & 52.3706 \\
				\bottomrule
			\end{tabular}
			\caption{Parameters of the reduced networks of Examples 1 and 2, Section \ref{subsec:ne}.}
			\label{Appendix:3FTable:ReducedParam}
		\end{table}

	\begin{table}[!h]
			\centering
			\begin{tabular}{cc}
				\toprule
                          Phase & Resistance [$\Omega$] \\
				\midrule
				\textit{a} & 100 \\
				\textit{b} & 10 \\
				\textit{c} & 0.01 \\
				\bottomrule
			\end{tabular}
			\caption{Load parameters of Example 1, Section \ref{subsec:ne}.}
			\label{Appendix:3FTable:Loads}
		\end{table}

	\begin{table}
			\centering
			\begin{tabular}{ccc}
				\toprule
				Name & Description & Values \\ \midrule
				$C_f$ & Filter capacitance of the rectifier & 235 $\mu$F\\
				$L_f$ & Filter inductance of the rectifier & 84 $\mu$H \\
				$R_L$ & Load resistance					& 	40 $\Omega$\\
				\bottomrule
			\end{tabular}
			\caption{Parameters of the six-pulse bridge
                          rectifier of Example 2, Section \ref{subsec:ne}.}
			\label{Appendix:TableNonlinearload}
		\end{table}
		
	\begin{table}
		\centering
		\renewcommand{\arraystretch}{2.0}
		\begin{tabular}{cc}
			\toprule
			Voltage source & Voltage [V] \\ \midrule
			$v_{1a}$ & $\sqrt{\frac{2}{3}}399.8 \sin{\left(\omega_0 t\right)}$ \\
			$v_{2a}$ & $\sqrt{\frac{2}{3}}400 \sin{\left(\omega_0 t\right)}$\\
			$v_{3a}$ & $\sqrt{\frac{2}{3}} 400.2 \sin{\left(\omega_0 t\right)}$\\
			$v_{1b}$ & $\sqrt{\frac{2}{3}}399.8 \sin{\left(\omega_0 t-\frac{2\pi}{3}\right)}$\\
			$v_{2b}$ & $\sqrt{\frac{2}{3}}400 \sin{\left(\omega_0 t-\frac{2\pi}{3}\right)}$\\
			$v_{3b}$ & $\sqrt{\frac{2}{3}} 400.2 \sin{\left(\omega_0 t-\frac{2\pi}{3}\right)}$\\
			$v_{1c}$ & $\sqrt{\frac{2}{3}}399.8 \sin{\left(\omega_0 t+\frac{2\pi}{3}\right)}$\\
			$v_{2c}$ & $\sqrt{\frac{2}{3}} 400 \sin{\left(\omega_0 t+\frac{2\pi}{3}\right)}$\\
			$v_{3c}$ & $\sqrt{\frac{2}{3}} 400.2 \sin{\left(\omega_0 t+\frac{2\pi}{3}\right)}$\\
			\bottomrule
		\end{tabular}
		\caption{Voltage values of generators in Examples 1 and 2, Section \ref{subsec:ne}.}
		\label{Appendix:TableVoltages}
	\end{table}
     \clearpage
     \section{21-bus network}
		\label{Appendix:label}
		This appendix lists the electrical parameters of the
                21-bus network shown in Section \ref{sec:Sim21bn}. 
		The resistances and inductances of the RL lines of network in Figure \ref{Fig6:Retemia} are collected in Table \ref{Appendix:Table21original}.
		\begin{table}[!h]
			\centering
			\begin{tabular}{ccccc}
				\toprule
				Edge & From node & To node & Resistance [$\Omega$] & Inductance [mH] \\
				\midrule
				$e_1$ & 1 & 7 & 1 & 20 \\
				$e_2$ & 6 & 7 & 0.1 & 1.8 \\
				$e_3$ & 3 & 7 & 1 & 200 \\
				$e_4$ & 2 & 8 & 0.6 & 6 \\
				$e_5$ & 8 & 3 & 0.4 & 35 \\
				$e_6$ & 3 & 12 & 0.1 & 1.8 \\
				$e_7$ & 3 & 4 & 1 & 600 \\
				$e_8$ & 4 & 11 & 0.1 & 2 \\
				$e_9$ & 10 & 11 & 0.1 & 2.5\\
				$e_{10}$ & 9 & 10 & 0.2 & 4.5 \\
				$e_{11}$ & 12 & 13 & 1.1 & 300 \\
				$e_{12}$ & 13 & 15 & 1 & 40 \\
				$e_{13}$ & 13 & 14 & 0.1 & 2 \\
				$e_{14}$ & 14 & 5 & 0.3 & 8 \\
				$e_{15}$ & 14 & 18 & 0.1 & 1 \\
				$e_{16}$ & 16 & 18 & 0.3 & 30 \\
				$e_{17}$ & 17 & 18 & 0.1 & 2 \\
				$e_{18}$ & 6 & 10 & 1.1 & 20 \\
				$e_{19}$ & 12 & 16 & 2.1 & 300 \\
				$e_{20}$ & 14 & 19 & 0.5 & 10 \\
				$e_{21}$ & 19 & 20 & 0.3 & 7 \\
				$e_{22}$ & 19 & 21 & 0.1 & 1.8  \\
				\bottomrule
			\end{tabular}
			\caption{Parameters of the original 21-bus network.}
			\label{Appendix:Table21original}
		\end{table}
		The loads connected to the buses are listed in Tables \ref{Appendix:Table21Linloads} and \ref{Appendix:Table21loads}.
		\begin{table}[!h]
			\centering
			\begin{tabular}{ccc}
				\toprule
				Node & Resistance [$\Omega$] & Inductance [H] \\
				\midrule
				6 & 80 & 0.2 \\
				7 & 80 & 0 \\
				8 & 50 & 0 \\
				10 & 100 & 0 \\
				11 & 100 & 0 \\
				12 & 50 & 0.05 \\
				13 & 100 & 0 \\
				14 & 50 & 0 \\
				15 & 60 & 0 \\
				17 & 3 & 1 \\
				18 & 45 & 0.02 \\
				19 & 50 & 0 \\
				20 & 50 & 0\\
				\bottomrule
			\end{tabular}
			\caption{Linear loads parameters.}
			\label{Appendix:Table21Linloads}
			\vspace{1cm}
			
			\begin{tabular}{cccc}
				\toprule
				Node & Resistance [$\Omega$] & Filter Inductance [$\mu$H] & Filter Capacitance [$\mu$F]\\
				\midrule
				9 & 80 & 84 & 235\\
				16 & 80 & 84 & 235\\
				\bottomrule
			\end{tabular}
			\caption{Nonlinear loads connected to the buses.}
			\label{Appendix:Table21loads}
		\end{table}
		
	As regards the parameters of the hybrid Kron reduced networks
        shown in Figure \ref{fig:21nodes_KR}, we highlight that they
        are all positive (i.e. the nature of the original circuits is
        preserved). In particular, the reduced parameters, when all the four switches are open, are listed in Table \ref{Appendix:Table21EqParam1}. Table \ref{Appendix:Table21EqParam2} shows the reduced R and L when $SW_1$ and $SW_2$ are closed, with $SW_3$ and $SW_4$ open. The reduced parameters, relative to the network with $SW_1$, $SW_2$ and $SW_3$ closed and $SW_4$ open, are collected in Table \ref{Appendix:Table21EqParam3}. Finally, Table \ref{Appendix:Table21EqParam4} is referred to the case with all four switches closed.
		\begin{table}[!h]
			\centering
			\begin{tabular}{ccccc}
				\toprule
				Edge & From node & To node & Resistance [$\Omega$] & Inductance [mH] \\
				\midrule
				$r_1$ & 1 & 3 & 2 & 220\\
				$r_2$ & 2 & 3 & 1 & 41\\
				$r_3$ & 3 & 4 & 1 &  600 \\
				$r_4$ & 3 & 5 & 1.6 & 311.8 \\
				\bottomrule
			\end{tabular}
			\caption{Equivalent parameters when $SW_2$, $SW_3$ and $SW_4$ are open.}
			\label{Appendix:Table21EqParam1}
		\end{table}
		
		\begin{table}[!h]
			\centering
			\begin{tabular}{ccccc}
				\toprule
				Edge & From node & To node & Resistance [$\Omega$] & Inductance [mH] \\
				\midrule
				$r_1$ & 1 & 3 & 2.2813 & 371.9\\
				$r_5$ & 1 & 4 & 2.6589 & 48.9\\
				$r_2$ & 2 & 3 & 1 & 41\\
				$r_3$ & 3 & 4 & 1.58 &  269.7 \\
				$r_4$ & 3 & 5 & 1.6 & 311.8 \\
				\bottomrule
			\end{tabular}
			\caption{Equivalent parameters when $SW_1$ $SW_2$ are closed, while $SW_3$ and $SW_4$ are open.}
			\label{Appendix:Table21EqParam2}
		\end{table}
		
		\begin{table}[!h]
			\centering
			\begin{tabular}{ccccc}
				\toprule
				Edge & From node & To node & Resistance [$\Omega$] & Inductance [mH] \\
				\midrule
				$r_1$ & 1 & 3 & 2.2813 & 371.9\\
				$r_5$ & 1 & 4 & 2.6586  &   48.9 \\
				$r_2$ & 2 & 3 & 1 & 41\\
				$r_3$ & 3 & 4 & 1.58 &  269.7 \\
				$r_4$ & 3 & 5 & 1.2971 & 167.7 \\
				\bottomrule
			\end{tabular}
			\caption{Equivalent parameters when $SW_1$, $SW_2$ and $SW_3$ are closed, and $SW_4$ is open.}
			\label{Appendix:Table21EqParam3}
		\end{table}
		\begin{table}[!h]
			\centering
			\begin{tabular}{ccccc}
				\toprule
				Edge & From node & To node & Resistance [$\Omega$] & Inductance [mH] \\
				\midrule
				$r_1$ & 1 & 3 & 2.2813 & 371.9\\
				$r_5$ & 1 & 4 &  2.6586 &  48.9  \\
				$r_2$ & 2 & 3 & 1 & 41\\
				$r_3$ & 3 & 4 & 1.58 &  269.7 \\
				$r_4$ & 3 & 5 & 0.5602 & 275.4 \\
				$r_6$ & 3 & 21 &  6.1756 & 408.1 \\
				$r_7$ & 5 & 21 & 0.9486  &   20.4\\
				\bottomrule
			\end{tabular}
			\caption{Equivalent parameters when all the four switches are open.}
			\label{Appendix:Table21EqParam4}
		\end{table}

     \clearpage
 \bibliographystyle{IEEEtran}
     \bibliography{PnP_AC_ImG_Kron}

\end{document}